\newtheorem{theorem}{Theorem}
\newtheorem{corollary}{Corollary}[theorem]
\newtheorem{lemma}[theorem]{Lemma}
\newtheorem{assumption}{Assumption}
\begin{document}
%
% paper title
% Titles are generally capitalized except for words such as a, an, and, as,
% at, but, by, for, in, nor, of, on, or, the, to and up, which are usually
% not capitalized unless they are the first or last word of the title.
% Linebreaks \\ can be used within to get better formatting as desired.
% Do not put math or special symbols in the title.
\title{Multiple Object Trajectory Estimation \\Using Backward Simulation}
%
%
% author names and IEEE memberships
% note positions of commas and nonbreaking spaces ( ~ ) LaTeX will not break
% a structure at a ~ so this keeps an author's name from being broken across
% two lines.
% use \thanks{} to gain access to the first footnote area
% a separate \thanks must be used for each paragraph as LaTeX2e's \thanks
% was not built to handle multiple paragraphs
%

\author{Yuxuan Xia, Lennart Svensson, {\'A}ngel F. Garc{\'\i}a-Fern{\'a}ndez, Jason L. Williams, \\Daniel Svensson, and Karl Granstr\"{o}m% <-this % stops a space
\thanks{Y. Xia and L. Svensson are with the Department
of Electrical Engineering, Chalmers University of Technology, Gothenburg, Sweden. E-mail: firstname.lastname@chalmers.se. A. F. Garc{\'\i}a-Fern{\'a}ndez is with the Department of Electrical Engineering and Electronics, University of Liverpool, Liverpool, United Kingdom, and also with the ARIES research centre, Universidad Antonio de Nebrija, Madrid, Spain. J. L. Williams is with the Commonwealth Scientific and Industrial Research Organization, Brisbane, Australia. D. Svensson is with NVIDIA Corporation, Gothenburg, Sweden. K. Granstr\"{o}m is with Embark Trucks Inc., San Francisco, CA, USA.}
\thanks{The work of D. Svensson was done when he was with Zenseact AB, and the work of K. Granstr\"{o}m was done when he was with Chalmers University of Technology.}
}% <-this % stops a space 

\maketitle

% As a general rule, do not put math, special symbols or citations
% in the abstract or keywords.
\begin{abstract}
This paper presents a general solution for computing the multi-object posterior for sets of trajectories from a sequence of multi-object (unlabelled) filtering densities and a multi-object dynamic model. Importantly, the proposed solution opens an avenue of trajectory estimation possibilities for multi-object filters that do not explicitly estimate trajectories. In this paper, we first derive a general multi-trajectory backward smoothing equation based on random finite sets of trajectories. Then we show how to sample sets of trajectories using backward simulation for Poisson multi-Bernoulli filtering densities, and develop a tractable implementation based on ranked assignment. The performance of the resulting multi-trajectory particle smoothers is evaluated in a simulation study, and the results demonstrate that they have superior performance in comparison to several state-of-the-art multi-object filters and smoothers.
\end{abstract}

% Note that keywords are not normally used for peerreview papers.
\begin{IEEEkeywords}
Multi-object tracking, random finite sets, sets of trajectories, forward-backward smoothing, backward simulation.
\end{IEEEkeywords}

% For peer review papers, you can put extra information on the cover
% page as needed:
% \ifCLASSOPTIONpeerreview
% \begin{center} \bfseries EDICS Category: 3-BBND \end{center}
% \fi
%
% For peerreview papers, this IEEEtran command inserts a page break and
% creates the second title. It will be ignored for other modes.
\IEEEpeerreviewmaketitle

\section{Introduction}

Multi-object tracking (MOT) refers to the problem of jointly estimating the number of objects and their trajectories from noisy sensor measurements \cite{bar2004estimation,challa2011fundamentals,meyer2018message,streit2021analytic}. Vector-type MOT methods, e.g., the joint probabilistic data association filter (JPDAF) \cite{bar2009probabilistic} and the multiple hypothesis tracker (MHT) \cite{blackman2004multiple,chong2019forty}, describe the multi-object states and measurements by random vectors; they explicitly estimate trajectories by linking a state estimate with a previous state estimate or declare the appearance of a new object. However, for MOT methods based on sets representation of the multi-object states, e.g., \cite{mahler2007statistical,mahler2003multitarget,mahler2007phd,pmbmpoint}, sequences of object states at consecutive time steps cannot be easily constructed.

For these MOT methods, one approach to estimating trajectories is to add a unique label to each single-object state such that each object can be identified over time \cite{glmbconjugateprior,garcia2013two,garci2014bayesian,streit2018analytic,aoki2016labeling}. This track labelling procedure may work well in some cases, but it often becomes problematic in challenging scenarios, for example, where initially well-separated objects move in close proximity with each other and thereafter separate again \cite{aoki2016labeling,garcia2019multiple,granstrom2018poisson}. A more advantageous approach to estimating trajectories for filters based on random finite sets (RFSs) \cite{mahler2007statistical} is to compute the multi-object posterior on sets of trajectories \cite{garcia2019multiple}, which captures all the information about the trajectories. This has led to the development of a variety of multi-object trackers: the trajectory probability hypothesis density (PHD) filter \cite{garcia2019trajectory}, the trajectory cardinality PHD filter \cite{garcia2019trajectory}, the trajectory Poisson multi-Bernoulli mixture (PMBM) filter \cite{granstrom2018poisson}, the trajectory multi-Bernoulli mixture (MBM) filter \cite{xia2019multi}, and their approximations the trajectory Poisson multi-Bernoulli (PMB) filter \cite{garcia2020trajectory} and the trajectory multi-Bernoulli (MB) filter \cite{garcia2020trajectory3}. Note that for RFS-based filters with MB birth, the multi-object posterior may be labelled to consider sets of labelled trajectories \cite{garcia2019multiple,xia2019multi,vo2019multi,garcia2020trajectory3}.

\begin{figure}[!t]
  \centering
  \includegraphics[width=\columnwidth]{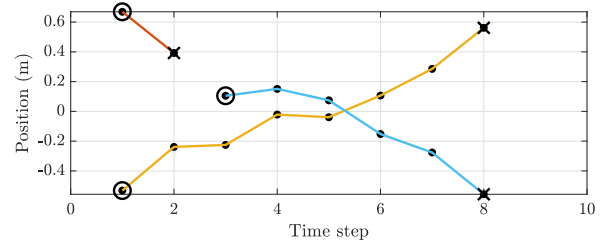}
  \caption{Illustration of a one-dimensional example where the multi-object filtering density at each time step is a Dirac delta, whose corresponding object states are shown as black dots. The trajectory building problem in this case reduces to how to connect the dots at consecutive time steps. The solid lines represent one of the many ways to construct a set of trajectories where start/end positions of trajectories are marked by circles/crosses. This example will be further elaborated in Section \ref{sec_example}.}
  \label{fig_intro_example}
\end{figure}

Smoothing for state-space models considers the estimation of object states of interest conditioned on the complete measurement sequence \cite{sarkka2013bayesian}. Therefore, smoothing may provide significantly better object state estimation performance than filtering, by refining earlier object state estimates. Solutions to single-object smoothing in clutter mainly include the Gaussian sum smoother \cite{kitagawa1994two,lee2015smoothing,rahmathullah2014merging,rahmathullah2014two,balenzuela2018accurate}, the (integrated) probabilistic data association smoother \cite{chakravorty2006augmented,song2012smoothing,muvsicki2013smoothing} and the Bernoulli smoother \cite{clark2009joint,vo2011bernoulli,vo2011closed}. For vector-type MOT methods with Gaussian filtering, smoothed trajectory estimates are typically obtained by applying a Rauch-Tung-Striebel (RTS) smoother \cite{rauch1965maximum} on sequences of single-object filtering densities \cite{mahalanabis1990improved,koch2000fixed,chen2001tracking}. It is also possible to consider a batch solution for estimating trajectories using expectation-maximisation \cite{rahmathullah2016batch}. As a comparison, trajectory filters \cite{garcia2019multiple,granstrom2018poisson,garcia2019trajectory,garcia2020trajectory,xia2019multi,garcia2020trajectory3,vo2019multi} recursively compute the posterior of sets of trajectories as new observations arrive, by performing smoothing-while-filtering \cite{briers2010smoothing}, and therefore the initiation and termination of estimated trajectories can also be improved.

Nevertheless, there are several MOT methods in the literature, e.g., the set JPDAF \cite{sjpda} and the variational PMB filter \cite{variational}, that can efficiently estimate the multi-object states, but that cannot easily produce smoothed trajectory estimates in a principled manner\footnote{A heuristic track-to-target management scheme was presented in \cite{svensson2011multitarget} by making use of the permutation probabilities of state vectors.}. Then an interesting research question arises: ``{\em Can we leverage filters that do not keep trajectory information to compute the posterior density of sets of trajectories?}''. A one-dimensional example of such trajectory building is illustrated in Fig. \ref{fig_intro_example}. Also note that, even for methods that do retain implicit trajectory information, in complex scenarios, approximations made for computational tractability (such as pruning) may cause loss of information that could be recovered with the help of information from later time steps.

In this paper, we show that the exact multi-object posterior of sets of trajectories can be obtained from a sequence of multi-object (unlabelled) filtering densities using the multi-object dynamic model. Specifically, we present a general multi-trajectory backward smoothing equation based on sets of trajectories. The proposed solution has important advantages over multi-object forward-backward smoothers that only compute the marginal multi-object smoothing densities at each time step \cite{nandakumaran2007improved,clark2010first,nadarajah2011multitarget,mahler2012forward,feng2016adaptive,li2016multi,nagappa2017tractable,streit2017interval}, which, even if labelled, may not be enough to provide meaningful trajectory information \cite[Example 2]{garcia2019multiple}. Moreover, the proposed forward-backward smoother does not specify the form of the multi-object filtering densities. This is in contrast to multi-object forward-backward smoothers based on labelled RFSs \cite{beard2016generalised,liu2019computationally,liang2020multitarget}, which cannot incorporate the Poisson birth model and require that the multi-object filtering densities must be labelled. The outcome of this work is a method for efficiently sampling multi-object trajectories from the posterior distribution of sets of trajectories, based on operations involving only the single time step multi-object state distributions constructed during forward filtering. This has important applications to offline trajectory analytics, e.g., extracting trajectory estimates that can be viewed as ground truth for the development and verification of perception modules in autonomous driving.

A preliminary version of this work was presented in \cite{xia2020backward}. This paper is a significant extension of that work, and contains the following contributions:
\begin{enumerate}
  \item We derive a general multi-trajectory backward smoothing equation based on sets of trajectories.
  \item We propose a multi-trajectory particle smoother using backward simulation \cite{lindsten2013backward} for PMB filtering densities. In particular, this is a method for doing inference (drawing samples) in the multi-object trajectory space while only ever manipulating single time step, single object marginal distributions, which is only possible because of the properties of the multi-trajectory dynamic model that is stated in Section \ref{dynamic_model}. 
  \item We present a tractable implementation of the proposed smoother based on a linear-Gaussian dynamic model and ranked assignment.
  \item We compare the proposed algorithm to several state-of-the-art algorithms \cite{granstrom2018poisson,nguyen2019glmb,vo2019multi} in a simulation study, and the results demonstrate that the proposed methods have superior performance.
\end{enumerate}

The rest of the paper is organised as follows. The background on sets of trajectories, multi-object dynamic models and PMB filtering densities is introduced in Section II. The backward smoothing equation for sets of trajectories is presented in Section III. A multi-trajectory particle smoother for PMB filtering densities and its tractable implementation are given in Section IV and Section V, respectively. The simulation results are shown in Section VI, and the conclusions are drawn in Section VII.

\section{Background}

The notation in this paper is defined following the convention in \cite{garcia2019multiple,granstrom2019poisson}. For a generic space $D$, the set of finite subsets of $D$ is denoted by ${\cal F}(D)$, and the cardinality of a set $A\in {\cal F}(D)$ is $|A|$. The sequence of ordered positive integers $(\alpha,\alpha+1,\dots,\gamma-1,\gamma)$ is denoted by $\alpha:\gamma$, and the set that includes all the permutations of $1:n$ is denoted by $\Gamma_n$. We use $\uplus$ to denote union of sets that are mutually disjoint, $\langle f,g \rangle$ to denote the inner product $\int f(x)g(x) dx$, and the multi-object exponential $f^A$, for some real-valued function $f$, to denote the product $\prod_{x\in A}f(x)$ with $f^{\emptyset} = 1$ by convention. In addition, we use $\delta_x(\cdot)$ and $\delta_x[\cdot]$ to represent the Dirac and Kronecker delta functions centred at $x$, respectively. 

\subsection{State variables}

The single-object state is described by a vector $x \in \mathbb{R}^{n_x}$, typically containing the kinematic information about the object (e.g., position and velocity). A trajectory is represented as a variable $X = (t,x^{1:\nu})$ where $t$ is the initial time step of the trajectory, $\nu$ is its length, and $x^{1:\nu} = (x^1,\dots,x^\nu)$ denotes a finite sequence of length $\nu$ that contains the object states at time steps $t:t+\nu-1$. For two time steps $\alpha$ and $\gamma$, $\alpha \leq \gamma$, a trajectory $(t,x^{1:\nu})$ in the time interval $\alpha:\gamma$ existing from time step $t$ to $t+\nu-1$ satisfies that $\alpha \leq t \leq t+\nu-1 \leq \gamma$, and the variable $(t,\nu)$ hence belongs to the set $I_{(\alpha,\gamma)} = \{(t,\nu):\alpha \leq t \leq \gamma~\text{and}~ 1\leq \nu \leq \gamma - t +1\}$. A single trajectory in the time interval $\alpha:\gamma$ therefore belongs to the space $T_{(\alpha,\gamma)} = \uplus_{(t,\nu)\in I_{(\alpha,\gamma)}}\{t\}\times \mathbb{R}^{\nu n_x}$. We note that trajectory $X$ is a combination of discrete and continuous states. Such a hybrid state is not uncommon in MOT: a typical example is the interacting multiple model \cite{blom1988interacting}. 

A set ${\bf x} \in {\cal F}(\mathbb{R}^{n_x})$ of single-object states is a finite subset of $\mathbb{R}^{n_x}$, and a set ${\bf X}_{\alpha:\gamma} \in {\cal F}(T_{(\alpha,\gamma)})$ of trajectories is a finite subset of $T_{(\alpha,\gamma)}$. The subset of trajectories in ${\bf X}_{\alpha:\gamma}$ that were alive at time step $\eta$ where $\alpha \leq \eta \leq \gamma$ is denoted by 
\begin{equation*}
  {\bf X}_{\alpha:\gamma}^{\eta} = \left\{ \left(t,x^{1:\nu} \right)\in {\bf X}_{\alpha:\gamma}: t \leq \eta \leq t+\nu-1 \right\}.
\end{equation*}
Given a set of trajectories ${\bf X}_{\alpha:\gamma}$, we denote the resulting set of trajectories in the time interval $\eta:\zeta$ by
\begin{multline*}
  {\bf X}_{\eta:\zeta} = \Big\{\left(\epsilon, x^{\epsilon-\eta+1:\iota-\epsilon+1}\right) : \left(t, x^{1:\nu}\right) \in {\bf X}_{\alpha:\gamma}, \\ \epsilon = \max(\eta, t), \iota = \min(\zeta, t+\nu-1), \epsilon \leq \iota \Big\}.
\end{multline*}
Note that ${\bf X}_{\eta :\zeta }$ depends on  ${\bf X}_{\alpha :\gamma }$, but we keep this relation implicit for notational clarity. An illustrative example of ${\bf X}_{\alpha:\gamma}$, ${\bf X}_{\alpha:\gamma}^{\eta}$ and ${\bf X}_{\eta:\zeta}$ is given in Fig. \ref{fig_notation_example}.

Given a single-object trajectory $X = (t,x^{1:\nu})$, the set of object states at time step $k$ is 
\begin{equation*}
  \tau^k(X) = \begin{cases}
    \left\{x^{k+1-t}\right\}, & t \leq k \leq t + \nu - 1 \\
    \emptyset, & \text{otherwise} 
  \end{cases}
\end{equation*}
and given a set ${\bf X}_{\alpha:\gamma}$ of trajectories, the set of object states at time step $k$ is $\tau^k({\bf X}_{\alpha:\gamma}) = \bigcup_{X\in{\bf X}_{\alpha:\gamma}}\tau^k(X)$. 

\begin{figure*}[!t]
  \centering
  \subfloat[\label{ex1}]
  {\includegraphics[width=0.666\columnwidth]{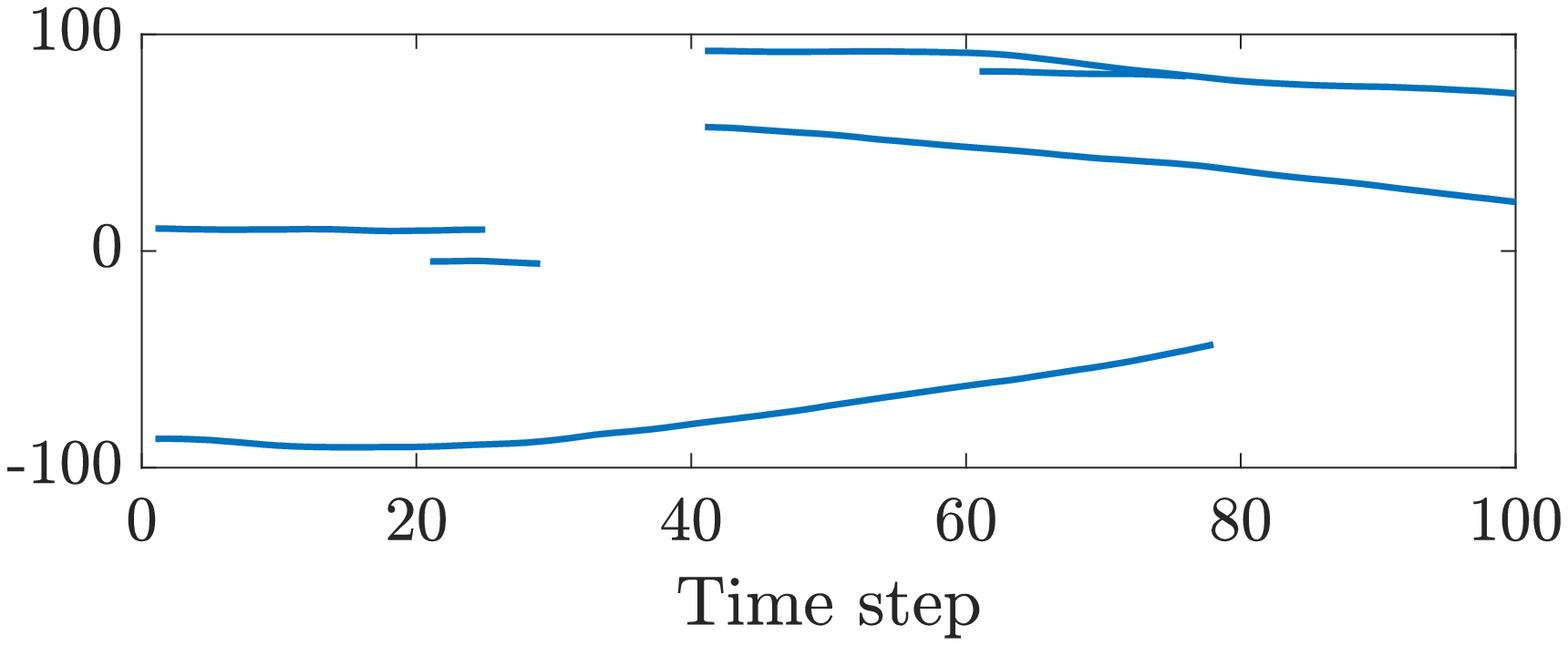}}
  \subfloat[\label{ex2}]
  {\includegraphics[width=0.666\columnwidth]{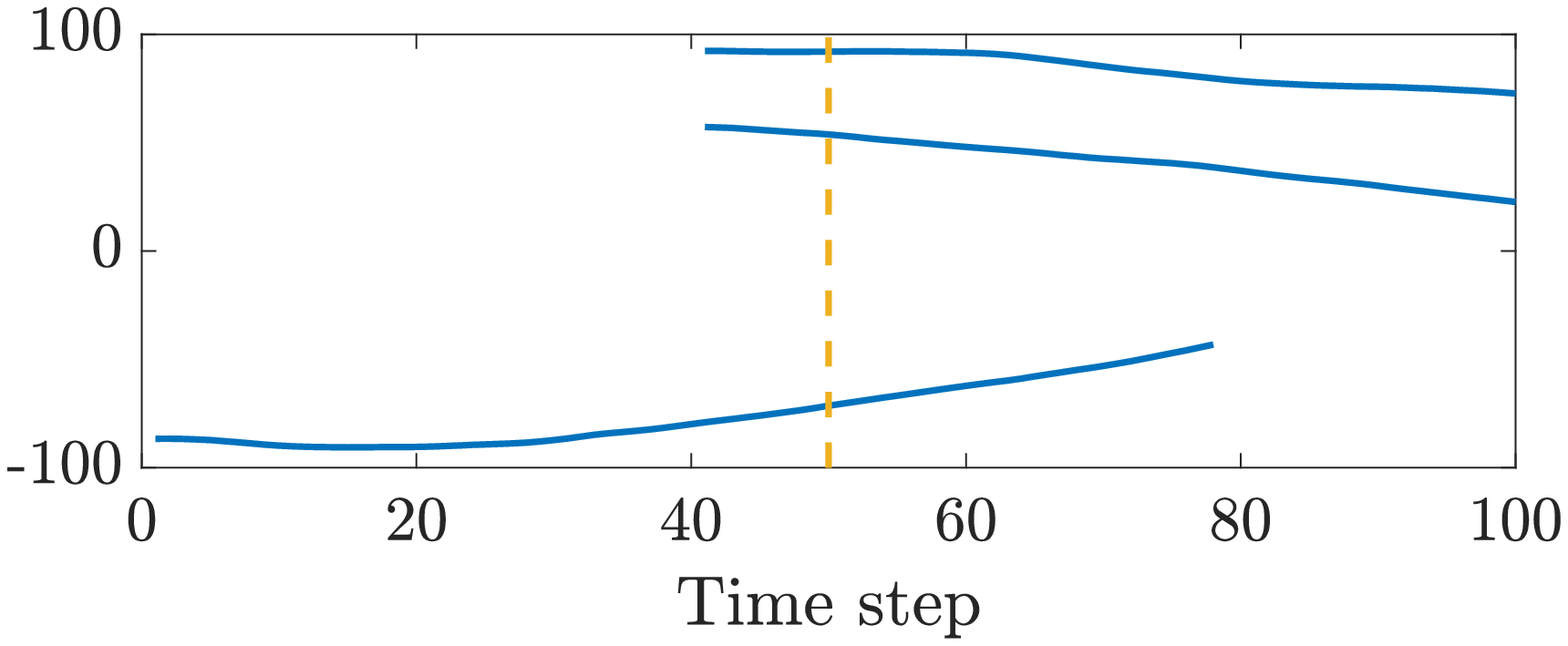}}
  \subfloat[\label{ex3}]
  {\includegraphics[width=0.666\columnwidth]{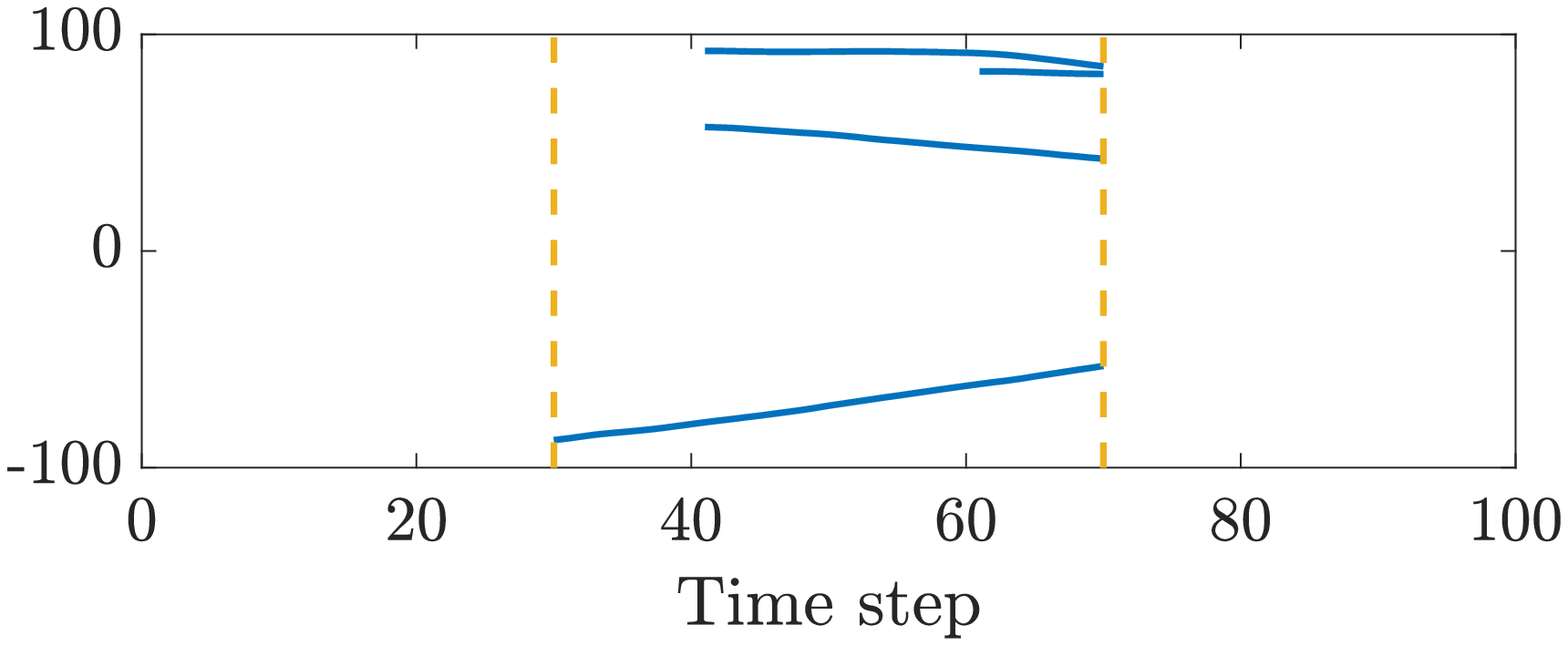}}
  \caption{One-dimensional examples of sets of trajectories: (a) a set ${\bf X}_{1:100}$ of six trajectories in the time interval $1:100$; (b) the set ${\bf X}^{50}_{1:100}$ of three trajectories in the time interval $1:100$ that were alive at time step $50$; (c) the resulting set ${\bf X}_{30:70}$ of four trajectories in the time interval $30:70$.}
  \label{fig_notation_example}
\end{figure*}

\subsection{Densities and integrals}

Given a real-valued function $\pi(\cdot)$ on the single trajectory space $T_{(\alpha,\gamma)}$, its integral is \cite{garcia2019multiple}
\begin{equation}
  \int \pi(X) d X = \sum_{(t,\nu)\in I_{(\alpha,\gamma)}} \int \pi\left(t,x^{1:\nu}\right) d x^{1:\nu},
\end{equation}
which goes through all possible start times, lengths and object states of trajectory $X \in T_{(\alpha,\gamma)}$. The single trajectory space is locally compact, Hausdorff and second-countable \cite{garcia2019multiple}, and therefore one can perform inference on sets of finite number of trajectories with finite length \cite{mahler2007statistical}. 

% The reference measures and measure theoretic integrals for sets of trajectories are defined in \cite{xia2019multi}. 

Given a real-valued function $\pi(\cdot)$ on the space ${\cal F}(T_{(\alpha,\gamma)})$ of finite sets of trajectories, its set integral is \cite{garcia2019multiple}
\begin{equation}
  \int \pi({\bf X}) \delta {\bf X} = \pi(\emptyset) + \sum_{n=1}^{\infty}\frac{1}{n!}\int \pi\left(\left\{X_1,\dots,X_n\right\}\right) d X_{1:n}
\end{equation}
where $X_{1:n} = (X_1,\dots,X_n)$. A function $\pi(\cdot)$ on the space ${\cal F}(T_{(\alpha,\gamma)})$ is a multi-trajectory density if $\pi(\cdot) \geq 0$ and its set integral is one. 

% The reference measures and measure theoretic integrals for sets of trajectories, which validate the presented densities and integrals, are defined in \cite{xia2019multi}.

The multi-object state density at time step $k^\prime$ and the multi-trajectory density in the time interval $\alpha:\gamma$, both conditioned on the sequence of sets of measurements ${\bf z}_{1:k} = ({\bf z}_1,\dots,{\bf z}_k)$ received up to and including time step $k$, are denoted by $f_{k^\prime|k}(\cdot)$ and $\pi_{\alpha:\gamma|k}(\cdot)$, respectively. Given a set ${\bf x}_k$ of object states at time step $k$, the set of trajectories in the time interval $k:k$ is 
\begin{equation*}
  {\bf X}_{k:k} = \left\{X = \left(k,x^1\right): x^1 \in {\bf x}_k\right\}
\end{equation*}
where trajectory $X=(t,x^{1:\nu})\in {\bf X}_{k:k}$ has start time $t=k$ and length $\nu=1$ with probability one. Therefore, it holds that the multi-trajectory density $\pi_{k^\prime:k^\prime|k}({\bf X}_{k^\prime:k^\prime})$ takes the same value as the multi-object state density $f_{k^\prime|k}({\bf x})$ when $\tau^{k^\prime}({\bf X}_{k^\prime:k^\prime}) = {\bf x}$ where $k^\prime \in \{k,k+1\}$.

% In this paper, the two multi-object density notations $f_{k^\prime|k}(\cdot)$ and $\pi_{k^\prime:k^\prime|k}(\cdot)$ will be used interchangeably. 

A trajectory Poisson point process (PPP) has density 
\begin{equation}
  \label{eq_trajectory_ppp}
  \pi\left({\bf X}_{\alpha:\gamma}\right) = e^{-\left\langle \lambda,1 \right\rangle} \left[ \lambda(\cdot) \right]^{{\bf X}_{\alpha:\gamma}}
\end{equation}
where $\lambda(\cdot)$ is the Poisson intensity, and a trajectory Bernoulli process has density 
\begin{equation}
  \label{eq_Bernoulli}
  \pi\left({\bf X}_{\alpha:\gamma}\right) = \begin{cases}
    1 - r, & {\bf X}_{\alpha:\gamma} = \emptyset \\
    rp(X), & {\bf X}_{\alpha:\gamma} = \{X\} \\
    0, & \text{otherwise}
  \end{cases}
\end{equation}
where $p(\cdot)$ is a single-trajectory density and $r$ is the probability of existence. A trajectory MB process is the union of $n\geq 1$ independent trajectory Bernoulli components and its density is given by the convolution formula for RFSs \cite[Sec. 11.5.3]{mahler2007statistical}
\begin{equation}\label{eq_mbconv}
  \pi\left({\bf X}_{\alpha:\gamma}\right) = \sum_{\uplus^{n}_{l=1} {\bf X}^l = {\bf X}_{\alpha:\gamma}} \prod_{i=1}^n \pi^{i}\left({\bf X}^i\right)
\end{equation}
where $\pi^{i}(\cdot)$ is the density of the $i$-th Bernoulli component, and the sum in \eqref{eq_mbconv} goes through all disjoint and possibly empty subsets ${\bf X}^1,\dots,{\bf X}^n$ such that ${\bf X}^1 \cup\cdots\cup {\bf X}^n = {\bf X}_{\alpha:\gamma}$.

Given two single-object trajectories $X = (t,x^{1:\nu})$ and $Y = (t^\prime,y^{1:\nu^\prime})$, the trajectory Dirac delta function is defined as 
\begin{equation*}
  \delta_{Y}(X) = \delta_{t^\prime}[t]\delta_{\nu^\prime}[\nu]\delta_{y^{1:\nu^\prime}}\left(x^{1:\nu}\right),
\end{equation*}
and the multi-trajectory Dirac delta function centred at ${\bf Y}$ is defined as \cite[Sec. 11.3.4.3]{mahler2007statistical}
\begin{equation*}
  \delta_{{\bf Y}}({\bf X}) = \begin{cases}
    0, & |{\bf X}| \neq |{\bf Y}|\\
    1, & {\bf X} = {\bf Y} = \emptyset\\
    \sum_{\sigma\in\Gamma_n}\prod_{i=1}^n\delta_{Y_{\sigma_i}}(X_i) & \begin{cases}
      {\bf X} = \{X_i\}_{i=1}^n \\
      {\bf Y} = \{Y_i\}_{i=1}^n
    \end{cases}.
  \end{cases}
\end{equation*}

\subsection{Multi-trajectory dynamic model}
\label{dynamic_model}
The conventional multi-object dynamic model described in \cite{mahler2007statistical} is considered. Given the current multi-object state ${\bf x}$, each object $x \in {\bf x}$ survives with probability $p^S(x)$, and moves to a new state with a Markovian transition density $g(\cdot|x)$, or dies with probability $1-p^S(\cdot)$. The multi-object state at the next time step is the union of the surviving objects and new objects, which are born independently of the rest. The newborn objects are typically modelled as a PPP or an MB process with multi-object state density $\beta(\cdot)$.

The above conventional multi-object dynamic model results in the following dynamic model for the set of all trajectories that have existed up to the current time step, which will be required in backward smoothing for sets of trajectories. Given a set ${\bf X}_{1:k}$ of all trajectories in the time interval $1:k$, each trajectory $X = (t,x^{1:\nu}) \in {\bf X}_{1:k}$ ``survives'' with probability one, $p^S(X) = 1$, and moves to a new state according to \cite{garcia2019multiple}
\begin{align}
  \label{eq_single_trajectory_transition}
  g^{k+1}\left(t^\prime,y^{1:\nu^\prime} | X\right) =& \left|\tau^{k}(X)\right| \left[ \left(1-p^S\left(x^\nu\right)\right)\delta_X\left(t^\prime,y^{1:\nu^\prime}\right)\right.\nonumber\\
  &+ \left. p^S\left(x^\nu\right) g\left(y^{\nu^\prime} | x^{\nu}\right)\delta_X\left(t^\prime,y^{1:\nu^\prime-1}\right) \right]\nonumber\\
  &+ \left(1 - \left|\tau^{k}(X)\right|\right)\delta_X\left(t^\prime,y^{1:\nu^\prime}\right).
\end{align}
That is, if the object underlying trajectory $X$ has died before time step $k$, the trajectory remains unaltered with probability one. If trajectory $X$ exists at time step $k$, it remains unaltered with probability $1-p^S(x^\nu)$, or the new final object state $y^{\nu^\prime}$ is generated according to the single-object transition density with probability $p^S(x^\nu)$. The set ${\bf X}_{1:k+1}$ of trajectories in the time interval $1:k+1$ is the union of the dead trajectories, surviving trajectories and new trajectories where each new trajectory $(t,x^{1:\nu})$ has deterministic start time $t=k+1$, length $\nu=1$, and multi-object state is distributed as $\beta(\cdot)$.

\subsection{PMB filtering densities}

RFS-based Bayes filters propagate the multi-object posterior density of ${\bf x}_k$ in time via the prediction and update steps:
\begin{align}
  f_{k|k-1}({\bf x}) &= \int g({\bf x}|{\bf x}^\prime) f_{k-1|k-1}({\bf x}^\prime) \delta {\bf x}^\prime,\label{eq_predict}\\
  f_{k|k}({\bf x}) &= \frac{\ell({\bf z}_k|{\bf x})f_{k|k-1}({\bf x})}{\int \ell({\bf z}_k|{\bf x})f_{k|k-1}({\bf x}) \delta {\bf x}}\label{eq_update}
\end{align}
where $g(\cdot|{\bf x})$ is the multi-object transition density and $\ell({\bf z}_k|\cdot)$ is the measurement likelihood. For the multi-object dynamic model described in Section \ref{dynamic_model} with Poisson birth model 
\begin{equation}
  \label{eq_poisson_birth}
  \beta({\bf x}) = e^{-\left\langle \lambda_k^B,1 \right\rangle} \left[ \lambda_k^B(\cdot) \right]^{{\bf x}}
\end{equation}
where $\lambda_k^B(\cdot)$ is the Poisson birth intensity at time step $k$, and general multi-object measurement models with Poisson clutter, if the prior is PMBM, the predicted and posterior densities on the current set of object states are PMBM \cite{garcia2021poisson}. We note that the Poisson birth model is a special case of PMBM.

The PMB is a common and efficient approximation of a PMBM \cite{pmbmpoint,variational}, and its filtering density is defined as
\begin{subequations}
  \label{eq_pmb_whole}
  \begin{align}
    f_{k^\prime|k}({\bf x}) &= \sum_{{\bf x}_d\uplus{\bf x}_u = {\bf x}}f^{p}_{k^\prime|k}({\bf x}_u)f^{mb}_{k^\prime|k}({\bf x}_d),\label{eq_pmb}\\
    f^{p}_{k^\prime|k}({\bf x}) &= e^{-\left\langle \lambda^u,1 \right\rangle} \left[ \lambda^u_{k^\prime|k}(\cdot) \right]^{{\bf x}},\\
    f^{mb}_{k^\prime|k}({\bf x}) &= \sum_{\uplus_{l=1}^{n_{k^\prime|k}}{\bf x}_d^l = {\bf x}}\prod_{i=1}^{n_{k^\prime|k}}f^i_{k^\prime|k}({\bf x}_d^i),\label{eq_mb}\\
    f^i_{k^\prime|k}({\bf x}) &= \begin{cases}
      1 - r^i_{k^\prime|k}, & {\bf x} = \emptyset \\
      r^i_{k^\prime|k}p^i_{k^\prime|k}(x), & {\bf x} = \{x\} \\
      0, & \text{otherwise}
    \end{cases}
  \end{align}
\end{subequations}
with $k^\prime\in\{k,k+1\}$. The PMB is the union of two independent RFSs: a PPP with density $f^p_{k^\prime|k}(\cdot)$, parameterised by Poisson intensity $\lambda^u_{k^\prime|k}(\cdot)$, representing undetected objects, and an MB process with density $f^{mb}_{k^\prime|k}(\cdot)$ representing detected objects where the $i$-th Bernoulli component has density $f^i_{k^\prime|k}(\cdot)$ with probability of existence $r^i_{k^\prime|k}$ and single-object state density $p^i_{k^\prime|k}(\cdot)$.

% where the sum in \eqref{eq_pmb} goes through all disjoint and possibly empty subsets ${\bf x}_d$ and ${\bf x}_u$ such that ${\bf x}_d \cup {\bf x}_u = {\bf x}$, and the same also holds for the sum in \eqref{eq_mb}. 

\section{Backward Smoothing for Sets of Trajectories}

In this paper, the objective is to compute the multi-trajectory posterior density $\pi_{1:K|K}({\bf X}_{1:K})$ using a sequence of multi-object filtering densities $f_{k|k}(\cdot)$ with $k = 1,\dots,K$ and the multi-trajectory dynamic model via backward smoothing. To achieve this, we first present the multistep prediction theorem for sets of trajectories, which generalises the general prediction theorem for sets of trajectories \cite[Theorem 7]{garcia2019multiple} to multistep prediction, along with a resulting corollary that is important for the derivation of the general backward smoothing equation for sets of trajectories. 

\begin{theorem}
  \label{thm_multipredicts}
  Given a multi-trajectory density $\pi_{\alpha:\eta|k}({\bf X}_{\alpha:\eta})$, its $(\gamma-\eta)$-step predicted multi-trajectory density, with $\alpha \leq \eta < \gamma$, $\eta \geq k$ and $\gamma \geq k+1$, is given by
  \begin{multline}
    \label{eq_multistep1}
    \pi_{\alpha:\gamma|k}({\bf X}_{\alpha:\gamma}) =  \prod_{\left(t,x^{1:\nu}\right)\in{\bf X}_{\alpha:\gamma}^{\eta}} \Bigg[ \left( 1+p^S\left(x^\nu\right) \left(\delta_{\gamma-t+1}[\nu]-1\right) \right)\\
    \times  \prod_{\ell=\eta-t+1}^{\nu-1} g \left( x^{\ell+1} | x^\ell\right) p^S\left(x^\ell\right) \Bigg] \pi_{\alpha:\eta|k}({\bf X}_{\alpha:\eta})\pi_{\eta+1:\gamma}({\bf W})
  \end{multline}
  \begin{multline}
    \label{eq_multistep2}
    \pi_{\eta+1:\gamma}({\bf W}) = \prod_{\left(t,x^{1:\nu}\right)\in{\bf W}}\Bigg[\left( 1+p^S\left(x^\nu\right)\left(\delta_{\gamma-t+1}[\nu]-1\right) \right)\\
    \times \prod_{\ell=1}^{\nu-1} g \left( x^{\ell+1} | x^\ell\right) p^S\left(x^\ell\right)  \Bigg] \prod_{\ell=\eta+1}^{\gamma} \beta\left(\tau^\ell\left({\bf W}^\ell\right)\right)
  \end{multline}
  where we write ${\bf X}_{\alpha:\gamma} = {\bf X}_{\alpha:\gamma}^{\eta} \uplus {\bf W}$ and ${\bf W} = {\bf W}^{\eta+1}\uplus \cdots \uplus {\bf W}^{\gamma}$ denotes the set of trajectories born in the time interval $\eta+1:\gamma$, where ${\bf W}^\ell = \left\{ \left(t,x^{1:\nu} \right)\in {\bf W}: t = \ell\right\}$ is the set of trajectories born at time step $\ell$ with $\eta + 1 \leq \ell \leq \gamma$.
\end{theorem}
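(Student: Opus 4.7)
My plan is to proceed by induction on the number of prediction steps $m = \gamma - \eta \geq 1$. The base case $m = 1$ (i.e.\ $\eta = \gamma - 1$) is inherited directly from the single-step general prediction theorem for sets of trajectories \cite[Theorem 7]{garcia2019multiple}: for each $(t,x^{1:\nu}) \in {\bf X}_{\alpha:\gamma}^{\gamma-1}$ the endpoint factor in \eqref{eq_multistep1} collapses to $1-p^S(x^\nu)$ when $\nu = \gamma - t$ (the trajectory terminates at $\gamma-1$) and to $g(x^\nu | x^{\nu-1})p^S(x^{\nu-1})$ when $\nu = \gamma - t + 1$ (the trajectory survives to $\gamma$), while \eqref{eq_multistep2} collapses to $\beta(\tau^\gamma({\bf W}^\gamma))$ because every trajectory in ${\bf W}$ then has birth time $\gamma$ and length one. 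This exactly reproduces the single-step claim.

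For the inductive step I would assume the theorem with $\gamma$ replaced by $\gamma-1$, giving a factorisation of $\pi_{\alpha:\gamma-1|k}({\bf X}_{\alpha:\gamma-1})$ as an ${\bf X}_{\alpha:\gamma-1}^\eta$-indexed endpoint/survival factor times $\pi_{\alpha:\eta|k}({\bf X}_{\alpha:\eta})$ times the birth factor $\pi_{\eta+1:\gamma-1}({\bf W}')$ on trajectories born in $\eta+1:\gamma-1$. Applying the base case once more to advance from $\gamma-1$ to $\gamma$ expresses $\pi_{\alpha:\gamma|k}({\bf X}_{\alpha:\gamma})$ as this inductive expression multiplied by the one-step endpoint/survival factor on ${\bf X}_{\alpha:\gamma}^{\gamma-1}$ and by $\beta(\tau^\gamma({\bf W}^\gamma))$. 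All that remains is the algebraic identity that the one-step factor on ${\bf X}_{\alpha:\gamma}^{\gamma-1}$ combined with the $(\gamma-1-\eta)$-step factor on ${\bf X}_{\alpha:\gamma-1}^\eta$ reproduces the $(\gamma-\eta)$-step factor in \eqref{eq_multistep1}, and likewise that the two birth contributions combine to give \eqref{eq_multistep2}.

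I would verify this identity trajectory by trajectory. For each $(t,x^{1:\nu}) \in {\bf X}_{\alpha:\gamma}^\eta$ there are three cases based on how $t+\nu-1$ sits relative to $\gamma-1$: (i) the trajectory already terminated before $\gamma-1$, so its restriction has length $\nu' = \nu$ and the one-step factor equals $1$; (ii) it terminates exactly at $\gamma-1$, so $\nu' = \nu = \gamma-t$ and the one-step factor contributes $1 - p^S(x^\nu)$; (iii) it survives to $\gamma$, so $\nu = \gamma - t + 1$ and the restriction has length $\nu' = \nu - 1$, with the one-step factor contributing $g(x^\nu|x^{\nu-1})p^S(x^{\nu-1})$. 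In each case the inductive survival product over $\ell \in \{\eta-t+1,\dots,\nu'-1\}$ and the one-step survival product over $\ell \in \{\gamma-t,\dots,\nu-1\}$ telescope into the claimed product over $\ell \in \{\eta-t+1,\dots,\nu-1\}$, and the inductive endpoint factor $(1 + p^S(x^{\nu'})(\delta_{\gamma-t}[\nu'] - 1))$ combines with the one-step contribution to yield the claimed $(1 + p^S(x^\nu)(\delta_{\gamma-t+1}[\nu] - 1))$. An analogous case split, this time on the birth time $\ell \in \{\eta+1,\dots,\gamma\}$ for trajectories in ${\bf W}$, turns the product $\pi_{\eta+1:\gamma-1}({\bf W}')\,\beta(\tau^\gamma({\bf W}^\gamma))$ into the claimed $\pi_{\eta+1:\gamma}({\bf W})$. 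The main obstacle is precisely this bookkeeping: the inductive hypothesis only references the restriction of each trajectory to $\alpha:\gamma-1$, so the two endpoint indicators $\delta_{\gamma-t}[\nu']$ and $\delta_{\gamma-t+1}[\nu]$, together with the two survival-product ranges joined at $\ell = \gamma-t$, must be shown to reconcile consistently across all three sub-cases. This reduces to a routine but slightly tedious verification.
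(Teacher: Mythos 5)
Your proposal is correct and follows essentially the same route as the paper: induction on the number of prediction steps, with the base case taken from the one-step prediction theorem of \cite{garcia2019multiple} and the inductive step carried out by applying one more one-step prediction and merging the endpoint/survival and birth factors (the paper organises this merge by splitting ${\bf X}_{\alpha:\gamma+1}^{\gamma}$ into trajectories present at $\eta$ and those born later, whereas you verify it trajectory by trajectory, but the bookkeeping is the same). Your case analysis of the indicators $\delta_{\gamma-t}[\nu']$ and $\delta_{\gamma-t+1}[\nu]$ and the telescoping of the survival products checks out.
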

% \begin{proof}
%   See Appendix \ref{proof_thm_multipredicts}.
% \end{proof}
Theorem \ref{thm_multipredicts} is proved in Appendix \ref{proof_thm_multipredicts}, and it describes that the $(\gamma-\eta)$-step predicted multi-trajectory density of $\pi_{\alpha:\eta|k}({\bf X}_{\alpha:\eta})$ can be evaluated by multiplying the following terms: the multi-trajectory density $\pi_{\alpha:\eta|k}({\bf X}_{\alpha:\eta})$, $1-p^S(\cdot)$ for trajectories that died in the time interval $\eta+1:\gamma$, $g(\cdot|\cdot)p^S(\cdot)$ for trajectories that were alive at time step $\eta$, and the multi-trajectory density $\pi_{\eta+1:\gamma}({\bf W})$ for trajectories that appeared after time step $\eta$.

\begin{corollary}
  \label{corollary_}
  For $\gamma \geq k+1$, it holds that
  \begin{equation}
    \label{eq_corollary}
    \frac{\pi_{k:\gamma|k}({\bf X}_{k:\gamma})}{\pi_{k+1:\gamma|k}({\bf X}_{k+1:\gamma})} = \frac{\pi_{k:k+1|k}({\bf X}_{k:k+1})}{f_{k+1|k}(\tau^{k+1}({\bf X}_{k+1:k+1}))}.
  \end{equation}
\end{corollary}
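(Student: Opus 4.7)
The plan is to apply Theorem~\ref{thm_multipredicts} twice---once to each side of the LHS ratio---and exhibit a cancellation. I would start by rewriting the RHS via the correspondence $\pi_{k+1:k+1|k}({\bf X}_{k+1:k+1}) = f_{k+1|k}(\tau^{k+1}({\bf X}_{k+1:k+1}))$ stated in Section~II-B, so the target identity becomes
$$\frac{\pi_{k:\gamma|k}({\bf X}_{k:\gamma})}{\pi_{k+1:\gamma|k}({\bf X}_{k+1:\gamma})} = \frac{\pi_{k:k+1|k}({\bf X}_{k:k+1})}{\pi_{k+1:k+1|k}({\bf X}_{k+1:k+1})}.$$
The case $\gamma = k+1$ is then immediate, so I would assume $\gamma \geq k+2$ throughout the rest of the argument.

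Next, I would invoke Theorem~\ref{thm_multipredicts} with $\alpha = k$, $\eta = k+1$ to factor the numerator of the LHS as $\pi_{k:k+1|k}({\bf X}_{k:k+1})$ times a survival/transition product $A({\bf X}_{k:\gamma}^{k+1})$ times a birth density $\pi_{k+2:\gamma}({\bf W})$, where ${\bf W}$ collects the trajectories of ${\bf X}_{k:\gamma}$ born in $[k+2,\gamma]$. Applying the theorem again with $\alpha = \eta = k+1$ to the denominator yields the analogous factorisation $\pi_{k+1:k+1|k}({\bf X}_{k+1:k+1}) \cdot B({\bf X}_{k+1:\gamma}^{k+1}) \cdot \pi_{k+2:\gamma}({\bf W}')$. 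The crux is then to verify that $A = B$ and ${\bf W} = {\bf W}'$, both of which follow from the natural bijection between ${\bf X}_{k:\gamma}^{k+1}$ and ${\bf X}_{k+1:\gamma}^{k+1}$ induced by the time restriction: a trajectory $(t, x^{1:\nu})$ with $t \leq k$ maps to $(k+1, x^{k+2-t:\nu})$ of new length $\nu' = \nu + t - k - 1$, trajectories already starting at $k+1$ are fixed, and trajectories born in $[k+2,\gamma]$ are untouched by the restriction. Reindexing $\ell \mapsto \ell - (k+1-t)$ in the survival product $\prod_{\ell = k+2-t}^{\nu - 1} g(x^{\ell+1}|x^\ell) p^S(x^\ell)$ rewrites it as the $[1, \nu' - 1]$ product of the image trajectory, and the death indicator $(1 + p^S(x^\nu)(\delta_{\gamma - t + 1}[\nu] - 1))$ transforms correctly because the terminal state $x^\nu$ agrees with that of the image and $\delta_{\gamma - t + 1}[\nu] = \delta_{\gamma - k}[\nu']$ under the bijection.

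Dividing the two factorisations then cancels $A = B$ and the common $\pi_{k+2:\gamma}$ factor, delivering the reformulated identity and hence \eqref{eq_corollary}. The main obstacle I anticipate is the index accounting in the bijection step: one has to check carefully that the shifted product bounds, the death indicator $\delta_{\gamma - t + 1}[\nu]$, and the terminal factor $p^S(x^\nu)$ all line up when a trajectory's start time is advanced from $t$ to $k+1$ and its length reduced to $\nu'$. Apart from that bookkeeping, the argument is a direct cancellation between two instances of the multistep prediction formula.
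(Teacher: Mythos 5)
Your proposal is correct and follows essentially the same route as the paper: the paper's proof also rests on Theorem~\ref{thm_multipredicts}, observing that the only $\alpha$-dependent factor in \eqref{eq_multistep1} is $\pi_{\alpha:\eta|k}({\bf X}_{\alpha:\eta})$, so that $\pi_{\alpha:\gamma|k}/\pi_{\alpha:\eta|k}$ is independent of $\alpha$, and then setting $\eta=k+1$ and comparing $\alpha=k$ with $\alpha=k+1$. Your explicit bijection and index bookkeeping between ${\bf X}_{k:\gamma}^{k+1}$ and ${\bf X}_{k+1:\gamma}^{k+1}$ is simply a spelled-out verification of that one-line observation, and it checks out.
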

% \begin{proof}
%   See Appendix \ref{proof_corollary_}.
% \end{proof}
Corollary \ref{corollary_} is proved in Appendix \ref{proof_corollary_}, and it shows that the ratio between the two multi-trajectory densities $\pi_{k:\gamma|k}({\bf X}_{k:\gamma})$ and $\pi_{k+1:\gamma|k}({\bf X}_{k+1:\gamma})$ does not depend on $\gamma$ for $\gamma \geq k+1$.

The general backward smoothing equation for sets of trajectories under the conventional multi-object dynamic model assumptions is presented in the following theorem. Its proof is given in Appendix \ref{proof_thm_smoothing}.

\begin{theorem}
  \label{thm_smoothing}
  Given the multi-trajectory density $\pi_{k+1:K|K}(\cdot)$ and the multi-object state filtering density $f_{k|k}(\cdot)$, the multi-trajectory density in the time interval $k:K$ conditioned on the sequence of sets of measurements up to and including time step $K$ is
  \begin{equation}
    \label{eq_smoothing}
    \pi_{k:K|K}({\bf X}_{k:K}) = \frac{\pi_{k:k+1|k}({\bf X}_{k:k+1})\pi_{k+1:K|K}({\bf X}_{k+1:K})}{f_{k+1|k}(\tau^{k+1}({\bf X}_{k+1:k+1}))}
  \end{equation}
  where the multi-object state predicted density $f_{k+1|k}(\cdot)$ can be computed using $f_{k|k}(\cdot)$ via \eqref{eq_predict}, and $\pi_{k:k+1|k}({\bf X}_{k:k+1})$ is the one-step predicted multi-trajectory density of $\pi_{k:k|k}({\bf X}_{k:k})$, which takes the same value as $f_{k|k}(\tau^k({\bf X}_{k:k}))$.
\end{theorem}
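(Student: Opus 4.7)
The plan is to reduce Theorem \ref{thm_smoothing} to Corollary \ref{corollary_} via a standard Bayesian smoothing argument lifted to the multi-trajectory level. The starting observation is that measurements ${\bf z}_{k+1:K}$ depend on a set of trajectories ${\bf X}_{k:K}$ only through its restriction ${\bf X}_{k+1:K}$ to times $k+1,\dots,K$, so $\ell({\bf z}_{k+1:K}|{\bf X}_{k:K}) = \ell({\bf z}_{k+1:K}|{\bf X}_{k+1:K})$. Applying Bayes' rule to both the ``full'' posterior and the ``shifted'' posterior then gives
$$\pi_{k:K|K}({\bf X}_{k:K}) = \frac{\ell({\bf z}_{k+1:K}|{\bf X}_{k+1:K})\,\pi_{k:K|k}({\bf X}_{k:K})}{p({\bf z}_{k+1:K}|{\bf z}_{1:k})},$$
$$\pi_{k+1:K|K}({\bf X}_{k+1:K}) = \frac{\ell({\bf z}_{k+1:K}|{\bf X}_{k+1:K})\,\pi_{k+1:K|k}({\bf X}_{k+1:K})}{p({\bf z}_{k+1:K}|{\bf z}_{1:k})}.$$
Taking the ratio eliminates both the likelihood and the common normalizing constant, yielding
$$\frac{\pi_{k:K|K}({\bf X}_{k:K})}{\pi_{k+1:K|K}({\bf X}_{k+1:K})} = \frac{\pi_{k:K|k}({\bf X}_{k:K})}{\pi_{k+1:K|k}({\bf X}_{k+1:K})}.$$

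Next, I would invoke Corollary \ref{corollary_}, which asserts that the right-hand ratio equals $\pi_{k:k+1|k}({\bf X}_{k:k+1})/f_{k+1|k}(\tau^{k+1}({\bf X}_{k+1:k+1}))$, and rearrange to obtain \eqref{eq_smoothing}. The closing remarks of the theorem statement are then immediate: $f_{k+1|k}$ is obtained from $f_{k|k}$ through the Chapman-Kolmogorov recursion \eqref{eq_predict}, while $\pi_{k:k+1|k}$ is the one-step prediction given by Theorem \ref{thm_multipredicts} (with $\alpha = \eta = k$ and $\gamma = k+1$) applied to the trivially lifted density $\pi_{k:k|k}({\bf X}_{k:k}) = f_{k|k}(\tau^k({\bf X}_{k:k}))$.

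The main obstacle I anticipate is justifying that the two denominators $p({\bf z}_{k+1:K}|{\bf z}_{1:k})$ in the Bayes identities are genuinely the same constant. Formally, one must check that marginalizing $\pi_{k:K|k}({\bf X}_{k:K})$ over the part of ${\bf X}_{k:K}$ that is ``lost'' when projecting to ${\bf X}_{k+1:K}$ (namely trajectories that terminate at or before time $k$, together with the pre-$(k{+}1)$ components of surviving trajectories) reproduces $\pi_{k+1:K|k}({\bf X}_{k+1:K})$; this can be verified directly from the multistep prediction formulas of Theorem \ref{thm_multipredicts}, or circumvented by integrating both sides of Corollary \ref{corollary_}. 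A second, purely notational, care is to consistently read the derived sets ${\bf X}_{k+1:K}$, ${\bf X}_{k:k+1}$ and ${\bf X}_{k+1:k+1}$ as the canonical projections of the top-level argument ${\bf X}_{k:K}$, so that the right-hand side of \eqref{eq_smoothing} is well-defined as a function of ${\bf X}_{k:K}$ alone.
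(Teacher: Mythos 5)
Your proof is correct, and it reaches the paper's key intermediate identity \eqref{eq_proof_32}, namely $\pi_{k:K|K}({\bf X}) = \pi_{k:K|k}({\bf X})\,\pi_{k+1:K|K}({\bf X}_{k+1:K})/\pi_{k+1:K|k}({\bf X}_{k+1:K})$, by a genuinely different route. The paper gets there by writing $\pi_{k:K|K}({\bf X})$ as an integral of $\pi_{k+1:K|K}({\bf Y})\pi_{k:K|K}({\bf X}|{\bf Y})$ over ${\bf Y}$, invoking the conditional-independence assumption \eqref{eq_appendix_1}, flipping the backward kernel with Bayes' rule into the forward transition $\pi_{k+1:K|k}({\bf Y}|{\bf X})$ (a multi-trajectory Dirac delta), and collapsing the set integral via the prediction equation for sets of trajectories; you instead apply Bayes' rule twice with the same likelihood $\ell({\bf z}_{k+1:K}|{\bf X}_{k+1:K})$ and the same evidence $p({\bf z}_{k+1:K}|{\bf z}_{1:k})$, and divide. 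Both arguments rest on the same underlying assumption (future measurements depend on ${\bf X}_{k:K}$ only through the states at times $k+1,\dots,K$, independently of past measurements), and both finish identically by applying Corollary \ref{corollary_}. Your version is more economical for proving the theorem in isolation, and you correctly flag the one point that needs checking --- that the two normalising constants coincide, i.e.\ that $\pi_{k:K|k}$ marginalises to $\pi_{k+1:K|k}$ under the projection ${\bf X}_{k:K}\mapsto{\bf X}_{k+1:K}$, which follows from the consistency of the multistep prediction in Theorem \ref{thm_multipredicts}. The paper's detour through the backward kernel is not wasted effort in context, however: the conditional density $\pi_{k:K|K}({\bf X}|{\bf Y})$ it manipulates is exactly the object formalised in Lemma \ref{lemma_bs} and needed for the backward simulation, so the two results share their derivation.
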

% \begin{proof}
%   See Appendix \ref{proof_thm_smoothing}.
% \end{proof}
Theorem \ref{thm_smoothing} shows that the multi-trajectory smoothing density $\pi_{k:K|K}({\bf X}_{k:K})$ can be expressed as the product of the multi-trajectory smoothing density $\pi_{k+1:K|K}({\bf X}_{k+1:K})$ and the one-step predicted multi-trajectory density of $f_{k|k}(\cdot)$, normalised by the multi-object state predicted density $f_{k+1|k}(\cdot)$. By applying Theorem \ref{thm_smoothing} recursively backwards in time, we obtain $\pi_{1:K|K}({\bf X}_{1:K})$, as desired. 

% We note that Theorem \ref{thm_smoothing} is general and can be used to develop a range of different trajectory estimation algorithms.

\section{A Multi-Trajectory Particle Smoother}

% In this section, we first introduce the concept of backward simulation for sets of trajectories and then we present a multi-trajectory particle smoother for PMB filtering densities. 

In this section, we first introduce a general backward kernel for sets of trajectories, which enables us to perform backward simulation and to approximate the multi-trajectory posterior. Then we present how to evaluate the backward kernel when the multi-object filtering densities are PMB. For the proposed multi-trajectory particle smoother using PMB filtering densities, the forward filtering representations involving single time step, single object marginal distributions can implicitly represent uncertainty over an exponentially large hypothesis space that cannot be feasibly represented through multiple hypothesis methods. The backward simulation framework allows us to draw samples in the multi-object trajectory space from these single time step, single object marginal distributions calculated during the forward filtering which incorporate information from measurements over all the time steps.

\subsection{Backward kernel for sets of trajectories}

The backward kernel for sets of trajectories is introduced in the following lemma. Its proof is given in Appendix \ref{proof_lemma_bs}.

\begin{lemma}
  \label{lemma_bs}
  The backward kernel for sets of trajectories, i.e., the multi-trajectory density of the set ${\bf X}$ of trajectories in the time interval $k:K$ conditioned on the set ${\bf Y}$ of trajectories in the time interval $k+1:K$ and the sequence of measurement sets up to and including time step $K$, satisfies
  \begin{equation}
    \label{eq_lemma_bs}
    \pi_{k:K|K}({\bf X}|{\bf Y}) \propto \pi_{k:k+1|k}({\bf X}_{k:k+1})\delta_{{\bf Y}}({\bf X}_{k+1:K}).
  \end{equation}
\end{lemma}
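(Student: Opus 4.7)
The approach is to reduce the lemma to a direct application of Theorem \ref{thm_smoothing} together with the observation that ${\bf X}_{k+1:K}$ is completely determined by ${\bf X}$ (it is the restriction of ${\bf X}$ to the time interval $k+1:K$). Consequently, conditioning ${\bf X}$ on ${\bf X}_{k+1:K}={\bf Y}$ amounts, up to a normalisation depending only on ${\bf Y}$, to multiplying the joint density $\pi_{k:K|K}({\bf X})$ by the multi-trajectory Dirac delta $\delta_{{\bf Y}}({\bf X}_{k+1:K})$.

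First, I would invoke the definition of the conditional density of sets of trajectories, writing
\begin{equation*}
  \pi_{k:K|K}({\bf X}|{\bf Y}) \propto \pi_{k:K|K}({\bf X})\,\delta_{{\bf Y}}({\bf X}_{k+1:K}),
\end{equation*}
where ``$\propto$'' absorbs the marginal $\pi_{k+1:K|K}({\bf Y})$, which depends only on ${\bf Y}$ and not on the free variable ${\bf X}$. Next, I would substitute the expression for $\pi_{k:K|K}({\bf X})$ from Theorem \ref{thm_smoothing} to obtain
\begin{equation*}
  \pi_{k:K|K}({\bf X}|{\bf Y}) \propto \frac{\pi_{k:k+1|k}({\bf X}_{k:k+1})\,\pi_{k+1:K|K}({\bf X}_{k+1:K})}{f_{k+1|k}\bigl(\tau^{k+1}({\bf X}_{k+1:k+1})\bigr)}\,\delta_{{\bf Y}}({\bf X}_{k+1:K}).
\end{equation*}

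The key step is then to exploit the sifting property of $\delta_{{\bf Y}}$: wherever the product is nonzero we have ${\bf X}_{k+1:K}={\bf Y}$, so both $\pi_{k+1:K|K}({\bf X}_{k+1:K})=\pi_{k+1:K|K}({\bf Y})$ and $\tau^{k+1}({\bf X}_{k+1:k+1})=\tau^{k+1}({\bf Y}_{k+1:k+1})$ are constants in ${\bf X}$. Absorbing both into the proportionality constant leaves only $\pi_{k:k+1|k}({\bf X}_{k:k+1})\,\delta_{{\bf Y}}({\bf X}_{k+1:K})$, which is the claimed expression.

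The main obstacle I anticipate is justifying the manipulation of $\delta_{{\bf Y}}$ rigorously, given that ${\bf X}_{k+1:K}$ is itself a nontrivial (set-valued) function of ${\bf X}$ rather than a coordinate projection: one must argue that terms depending only on ${\bf X}_{k+1:K}$ can be pulled out against $\delta_{{\bf Y}}({\bf X}_{k+1:K})$ and treated as constants in ${\bf Y}$. This can be handled by integrating test functions against both sides or, equivalently, by invoking the definition of the multi-trajectory Dirac delta centred at ${\bf Y}$ via the sum over $\Gamma_n$ given in Section II, which formalises the sifting on the hybrid (discrete $\times$ continuous) trajectory space. Apart from this bookkeeping, the result follows immediately from Theorem \ref{thm_smoothing}.
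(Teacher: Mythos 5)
Your proof is correct, and there is no circularity: the paper proves Theorem~\ref{thm_smoothing} without appealing to Lemma~\ref{lemma_bs}, so you are free to use it here. Your route is, however, not quite the one the paper takes. The paper first invokes the conditional-independence assumption $\pi_{k:K|K}({\bf X}|{\bf Y})=\pi_{k:K|k}({\bf X}|{\bf Y})$, applies Bayes' rule to densities conditioned only on ${\bf z}_{1:k}$ so that the backward kernel becomes $\pi_{k:K|k}({\bf X})\,\delta_{{\bf Y}}({\bf X}_{k+1:K})/\pi_{k+1:K|k}({\bf Y})$, and then applies Corollary~\ref{corollary_} to the ratio $\pi_{k:K|k}({\bf X})/\pi_{k+1:K|k}({\bf X}_{k+1:K})$ before absorbing the remaining $f_{k+1|k}\left(\tau^{k+1}({\bf Y}_{k+1:k+1})\right)$ into the proportionality. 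You instead condition on all measurements throughout, write the backward kernel as $\pi_{k:K|K}({\bf X})\,\delta_{{\bf Y}}({\bf X}_{k+1:K})/\pi_{k+1:K|K}({\bf Y})$ (which, since ${\bf Y}$ is a deterministic function of ${\bf X}$, needs no independence assumption at this step), and then substitute Theorem~\ref{thm_smoothing}. What your version buys is a shorter argument that reuses the already-established smoothing equation and keeps the measurement conditioning uniform; what the paper's version buys is reliance only on the lighter Corollary~\ref{corollary_} rather than the full Theorem~\ref{thm_smoothing}, making the lemma logically parallel to, rather than downstream of, the smoothing theorem. Your sifting step — replacing $\pi_{k+1:K|K}({\bf X}_{k+1:K})$ and $\tau^{k+1}({\bf X}_{k+1:k+1})$ by their values at ${\bf Y}$ on the support of the Dirac delta and absorbing them as constants — is exactly the same manipulation the paper performs with $f_{k+1|k}\left(\tau^{k+1}({\bf Y}_{k+1:k+1})\right)$, and is handled at the same (standard) level of rigour, so no additional gap is introduced.
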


Lemma \ref{lemma_bs} describes that the backward kernel $\pi_{k:K|K}({\bf X}|{\bf Y})$ conditioned on the set ${\bf Y}$ of trajectories is proportional to the one-step predicted multi-trajectory density $\pi_{k:k+1|k}({\bf X}_{k:k+1})$ only if ${\bf Y} = {\bf X}_{k+1:K}$, and zero otherwise. Note that although the set ${\bf X}_{k+1:K}$ of trajectories is deterministically given by ${\bf Y}$, evaluating the backward kernel $\pi_{k:K|K}({\bf X}|{\bf Y})$ is complicated by the multiple different ways of associating ${\bf X}_{k:k+1}$ to ${\bf Y}$; see, e.g., Fig. \ref{fig_intro_example}.

In both Theorem \ref{thm_smoothing} and Lemma \ref{lemma_bs}, we consider the distribution of the set of trajectories in the time interval $k:K$, and we condition on all measurements in the time interval $1:K$. The main difference between Theorem \ref{thm_smoothing} and Lemma \ref{lemma_bs} is that the multi-trajectory density computed in Lemma \ref{lemma_bs} is conditioned on a particular value of the set ${\bf Y}$ of trajectories in the time interval $k + 1 : K$, which is what we require for backward simulation.

\subsection{Backward kernel for PMB filtering densities}
\label{sec_pmbm_recursion}
We proceed to present the backward kernel \eqref{eq_lemma_bs} for PMB filtering densities, see \eqref{eq_pmb_whole}. We first observe that the backward kernel \eqref{eq_lemma_bs} is in analogy to the Bayesian measurement update \eqref{eq_update} in the sense that $\pi_{k:k+1|k}({\bf X}_{k:k+1})$ is the prior and $\delta_{{\bf Y}}({\bf X}_{k+1:K})$ is the measurement likelihood. Specifically, for a PMB filtering density $f_{k|k}(\cdot)$, the one-step predicted multi-trajectory density $\pi_{k:k+1|k}({\bf X}_{k:k+1})$ is a trajectory PMB \cite[Lemma 4]{garcia2020trajectory}. Moreover, the multi-trajectory Dirac delta $\delta_{{\bf Y}}({\bf X}_{k+1:K})$ can be understood as a standard multi-object measurement model \cite{mahler2007statistical} with the following characteristics (this will be further elaborated in Appendix \ref{proof_thm_pmb}):
\begin{itemize}
  \item Each trajectory $X = (t,x^{1:\nu}) \in {\bf X}_{k:K}$ is detected with probability 
  \begin{equation}
    p^D(X) = \begin{cases}
      0, & t = k~\text{and}~\nu = 1\\
      1, & \text{otherwise}
    \end{cases}
  \end{equation}
  and if detected, it generates a measurement $Y$ with density $\delta_{Y}(X_{k+1:K})$.
  \item Clutter is Poisson with intensity $\lambda^C(\cdot) = 0$.
\end{itemize}
% \begin{itemize}
%   \item Probability of detection 1 for trajectory existed in the time interval $k+1:K$.
%   \item Probability of detection 0 for trajectory did not exist in the time interval $k+1:K$.
%   \item The single measurement likelihood is a Dirac delta.
%   \item No clutter measurements.
% \end{itemize}
Then, the backward kernel $\pi_{k:K|K}({\bf X}|{\bf Y})$ \eqref{eq_lemma_bs} for PMB filtering densities, obtained via updating the trajectory PMB prior $\pi_{k:k+1|k}({\bf X}_{k:k+1})$ using $\delta_{{\bf Y}}({\bf X}_{k+1:K})$ is a trajectory PMBM \cite[Lemma 5]{garcia2020trajectory}. In what follows, we explain the result of the trajectory PMB update applied to backward simulation with PMB filtering densities, where each trajectory Bernoulli component has information on start/end times of the trajectories giving rise to the following local hypotheses:
\begin{itemize}
  \item The trajectory had never existed in the entire time interval $1:k+1$.
  \item The trajectory ended at time step $k$.
  \item The trajectory existed at both time step $k$ and $k+1$.
  \item The trajectory started at time $k+1$.
\end{itemize} 

We define by ${\cal M}_{k+1:K}= \{1,\dots,n_{k+1:K}\}$ the set of indices of trajectories in ${\bf Y}$. Each trajectory in ${\bf Y}$ creates a unique trajectory Bernoulli component. The number of trajectory Bernoulli components in $\pi_{k:K|K}({\bf X}|{\bf Y})$ is $n_{k:K|K} = n_{k|k} + n_{k+1:K}$, and they are indexed by variable $i\in\{1,\dots,n_{k:K|K}\}$. A global hypothesis is $a = (a^1,\dots,a^{n_{k:K|K}})$, where $h^i$ is the number of local hypotheses and $a^i\in\{1,\dots,h^i_{k:K|K}\}$ is the index to the local hypothesis for the $i$-th trajectory Bernoulli component, which incorporates the following information:
\begin{itemize}
  \item The set of indices ${\cal M}_{k+1:K}^{i,a^i} \subseteq {\cal M}_{k+1:K}$. If ${\cal M}_{k+1:K}^{i,a^i} = \emptyset$, then the $i$-th hypothesised trajectory did not appear after time step $k$. If ${\cal M}_{k+1:K}^{i,a^i} = \{j\}$, then the $i$-th hypothesised trajectory existed in the time interval $k+1:K$.
  \item The local hypothesis weight $w^{i,a^i}_{k:K|K}$ (used for computing the probability of the global hypotheses).
  \item The hypothesis-conditioned Bernoulli density $\pi^{i,a^i}_{k:K|K}(\cdot)$ of the form \eqref{eq_Bernoulli}, parameterised by the probability of existence $r^{i,a^i}_{k:K|K}$ and the single-trajectory density $p^{i,a^i}_{k:K|K}(\cdot)$.
\end{itemize}
The set of all global hypotheses is
\begin{multline}
  \label{eq_global_hypo}
  {\cal A}_{k:K|K} = \Bigg\{ \left(a^1,\dots,a^{n_{k:K|K}} \right) : a^i \in \left\{1,\dots,h^i_{k:K|K}\right\}~\forall~i,\\ \left|{\cal M}^{i,a^i}_k\right| \leq 1, \biguplus_{i=1}^{n_{k:K|K}}{\cal M}^{i,a^i}_k = {\cal M}_k\Bigg\},
\end{multline}
and the weight of global hypothesis $a\in{\cal A}_{k:K|K}$ satisfies 
\begin{equation}
  \label{eq_glo_hyo_weight}
  w^a \propto \prod_{i=1}^{n_{k:K|K}}w^{i,a^i}_{k:K|K}
\end{equation}
where the proportionality means that normalisation is required to ensure that $\sum_{a\in{\cal A}_{k:K|K}}w^a = 1$. A simple example of the hypothesis structure is illustrated and described in Fig. \ref{fig_illus}.

The following theorem gives the explicit expression of the PMBM backward kernel \eqref{eq_lemma_bs} for PMB filtering densities. 
\begin{figure}[!t]
  \centering
  \includegraphics[width=\columnwidth]{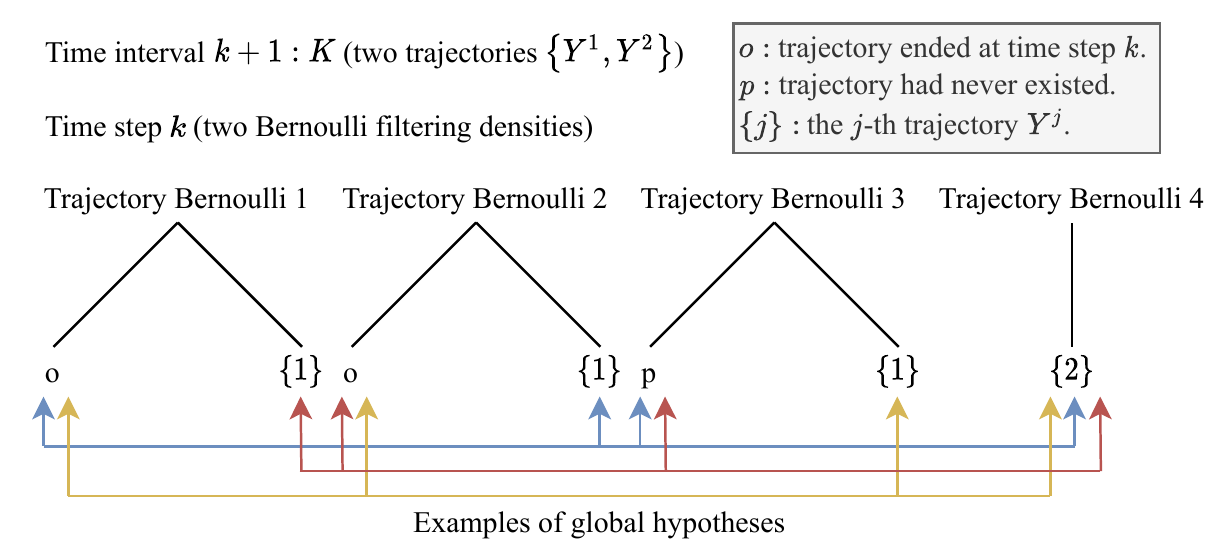}
  \caption{Hypothesis structure of the PMBM backward kernel for the example illustrated above where there are two trajectories in the time interval $k+1:K$ and two Bernoulli filtering densities at time step $k$. For the two trajectories: $Y^1$ is the trajectory of an object that existed at time step $k+1$, and $Y^2$ is the trajectory of an object that appeared after time step $k+1$. There are four trajectory Bernoulli components. For each trajectory Bernoulli component in $\pi_{k:k+1|k}({\bf X}_{k:k+1})$ (Trajectory Bernoulli 1 and 2), each has two local hypotheses: one corresponds to the case that the trajectory ended at time step $k$, and the other corresponds to the case that the trajectory Bernoulli component was updated by trajectory $Y^1$. The trajectory Bernoulli component created by trajectory $Y^1$ (Trajectory Bernoulli 3) has two local hypotheses: one corresponds to the case that the trajectory started at time step $k+1$, and the other corresponds to the case that the trajectory never existed.  The trajectory Bernoulli component created by trajectory $Y^2$ (Trajectory 4) has only a single local hypothesis since $Y^2$ remains unaltered.}
  \label{fig_illus}
\end{figure}

\begin{theorem}
  \label{thm_pmb}
  Given a PMB filtering density $f_{k|k}(\cdot)$ \eqref{eq_pmb_whole} at time step $k$, the set ${\bf Y}$ of trajectories in the time interval $k+1:K$ and the multi-trajectory dynamic model described in Section \ref{dynamic_model} with Poisson birth \eqref{eq_poisson_birth}, the multi-trajectory density in the time interval $k:K$ conditioned on ${\bf Y}$ and the sequence of measurement sets up to and including time step $K$, is a PMBM of the form
  \begin{align}
    \pi_{k:K|K}({\bf X}|{\bf Y}) &= \sum_{{\bf X}_u \uplus {\bf X}_d = {\bf X}} \pi_{k:K|K}^u({\bf X}_u)\pi_{k:K|K}^d({\bf X}_d|{\bf Y}),\label{eq_pmbm}\\
    \pi_{k:K|K}^u({\bf X}_u) &= e^{-\left\langle \lambda^u_{k:K|K},1 \right\rangle} \left[ \lambda^u_{k:K|K}(\cdot) \right]^{{\bf X}_u},
  \end{align}
  \begin{multline}
    \pi_{k:K|K}^d({\bf X}_d|{\bf Y}) = \\ \sum_{a\in {\cal A}_{k:K|K}}w^a \sum_{\biguplus^{n_{k:K|K}}_{j=1} {\bf X}^j = {\bf X}_d} \prod_{i=1}^{n_{k:K|K}} \pi^{i,a^i}_{k:K|K}\left({\bf X}^i\right)\label{eq_mbm}
  \end{multline}
  where $n_{k:K|K} = n_{k+1:K} + n_{k|k}$, $w^a$ is given by \eqref{eq_glo_hyo_weight} and 
  \begin{equation}
    \lambda^u_{k:K|K}\left(t,x^{1:\nu}\right) = \delta_{k}[t]\delta_{1}[\nu]\left(1-p^S\left(x^1\right)\right)\lambda^u_{k|k}\left(x^1\right).
  \end{equation}
  
  Also, we write ${\bf Y} = \{Y^1,\dots,Y^{n_{k+1:K}}\}$ with $Y^1,\dots,Y^m$ being trajectories of objects that existed at time step $k+1$, and $Y^m,\dots,Y^{n_{k+1:K}}$ being trajectories of objects that appeared after time step $k+1$. 
  
  For each trajectory Bernoulli component in the predicted trajectory PMB $\pi_{k:k+1|k}({\bf X}_{k:k+1})$, $i\in\{1,\dots,n_{k|k}\}$, there are $h_{k:K|K}^i = m+1$ local hypotheses. The local hypothesis, corresponding to the case that the trajectory ended at time step $k$, is given by ${\cal M}_{k+1:K}^{i,1} = \emptyset$ and 
  \begin{subequations}
    \begin{align}
      w^{i,1}_{k:K|K} &= 1-r^{i}_{k|k}+r^i_{k|k}\left\langle p^i_{k|k},1-p^S \right\rangle,\\
      r^{i,1}_{k:K|K} &= \frac{r^i_{k|k}\left\langle p^i_{k|k},1-p^S \right\rangle}{w^{i,1}_{k:K|K}},\\
      p^{i,1}_{k:K|K}\left(t,x^{1:\nu}\right) &= \delta_k[t]\delta_{1}[\nu]\frac{p^i_{k|k}\left(x^1\right)\left(1-p^S\left(x^1\right)\right)}{\left\langle p^i_{k|k},1-p^S \right\rangle}.
    \end{align}
  \end{subequations}
  The local hypothesis, corresponding to the case that the trajectory Bernoulli component is updated by trajectory $Y^j = \left(t^j,y^{1:\nu^j}\right)$, $j\in\{1,\dots,m\}$ (present at time step $k+1$, i.e., $t^j=k+1$), is given by ${\cal M}_{k+1:K}^{i,j+1} = \{j\}$ and
  \begin{subequations}
    \begin{align}
      w^{i,j+1}_{k:K|K} &= r^i_{k|k}\left\langle p^i_{k|k},g\left(y^1|\cdot\right)p^S \right\rangle,\\
      r^{i,j+1}_{k:K|K} &= 1,\\
      p^{i,j+1}_{k:K|K}\left(t,x^{1:\nu}\right) &=\delta_{k}[t]\delta_{\nu^j+1}[\nu]\delta_{y^{1:\nu^j}}\left(x^{2:\nu}\right)\nonumber\\ &~~~\times\frac{g\left(y^1|x^1\right)p^i_{k|k}\left(x^1\right)p^S\left(x^1\right)}{\left\langle p^i_{k|k},g\left(y^1|\cdot\right)p^S \right\rangle}.
    \end{align}
  \end{subequations}

  The trajectory Bernoulli component created by trajectory $Y^j = \left(t^j,y^{1:\nu^j}\right)$, $j\in\{1,\dots,m\}$, 
  has two local hypotheses $h^i_{k:K|K} = 2$. The first one corresponds to a non-existent Bernoulli and is given by ${\cal M}_{k+1:K}^{i,1} = \emptyset$ and 
  \begin{equation}
    \label{eq_non_valid}
    w^{i,1}_{k:K|K} = 1,\quad r_{k:K|K}^{i,1} = 0.
  \end{equation}
  The second one is given by ${\cal M}_{k+1:K}^{i,2} = \{j\}$ and
  \begin{subequations}
    \label{eq_newly_detected}
    \begin{align}
      w^{i,2}_{k:K|K} &= \lambda^B_{k+1}\left(y^1\right) + \left\langle \lambda^u_{k|k},g\left(y^1|\cdot\right)p^S \right\rangle,\\
      r^{i,2}_{k:K|K} &= 1,\\
      p^{i,2}_{k:K|K}\left(t,x^{1:\nu}\right) &= \underline{w}^{i,2}_{k:K|K}\delta_{\left(t^j,y^{1:\nu^j}\right)}\left(t,x^{1:\nu}\right)\nonumber\\
      &~~~+\overline{w}^{i,2}_{k:K|K}\delta_{k}[t]\delta_{\nu^j+1}[\nu]\delta_{y^{1:\nu^j}}\left(x^{2:\nu}\right)\nonumber \\
      &~~~\times \frac{g\left(y^1|x^1\right)\lambda^u_{k|k}\left(x^1\right)p^S\left(x^1\right)}{\left\langle \lambda^u_{k|k},g\left(y^1|\cdot\right)p^S \right\rangle},\\
      \underline{w}^{i,2}_{k:K|K} &= \frac{\lambda^B_{k+1}\left(y^1\right)}{w^{i,2}_{k:K|K}},\\
      \overline{w}^{i,2}_{k:K|K} &= 1- \underline{w}^{i,2}_{k:K|K}.
    \end{align}
  \end{subequations}

  The trajectory Bernoulli component created by trajectory $Y^j$, $j\in\{m+1,\dots,n_{k=1:K}\}$ (not present at time step $k+1$) only has a single local hypothesis $h_{k:K|K}^i = 1$, given by ${\cal M}_{k+1:K}^{i,1} = \{j\}$ and
  \begin{subequations}
    \label{eq_pmb_theorem_d}
    \begin{align}
      w^{i,1}_{k:K|K} &= 1,\\
      r^{i,1}_{k:K|K} &= 1,\\
      p^{i,1}_{k:K|K}(X) &= \delta_{Y^j}(X).
    \end{align}
  \end{subequations}
\end{theorem}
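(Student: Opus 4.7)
The plan is to apply the measurement-update-like identity in Lemma \ref{lemma_bs}, treating $\pi_{k:k+1|k}({\bf X}_{k:k+1})$ as a prior trajectory PMB and $\delta_{{\bf Y}}({\bf X}_{k+1:K})$ as a standard multi-object likelihood with the characteristics stated just before the theorem in Section \ref{sec_pmbm_recursion}. First I would write out $\pi_{k:k+1|k}$ explicitly by applying the single-trajectory transition \eqref{eq_single_trajectory_transition} to the PMB filtering density \eqref{eq_pmb_whole} together with Poisson birth \eqref{eq_poisson_birth}: invoking \cite[Lemma 4]{garcia2020trajectory}, this is a trajectory PMB whose Poisson part is supported on (i) length-$1$ trajectories $(k,x^1)$ with intensity $(1-p^S(x^1))\lambda^u_{k|k}(x^1)$, (ii) length-$2$ trajectories $(k,x^{1:2})$ with intensity $p^S(x^1)g(x^2|x^1)\lambda^u_{k|k}(x^1)$, and (iii) the birth part $\delta_{k+1}[t]\delta_1[\nu]\lambda^B_{k+1}(x^1)$, while the MB part consists of one Bernoulli per existing track, each carrying the local hypotheses ``died at $k$'' and ``continued to $k+1$''.

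Next, I would interpret $\delta_{{\bf Y}}({\bf X}_{k+1:K})$ as a standard multi-object measurement model: trajectories confined to time $k$ have detection probability $0$ and so cannot contribute any $Y^j$, while every other trajectory has detection probability $1$ and must match some $Y^j$ via the single-trajectory likelihood $\delta_{Y^j}(X_{k+1:K})$; Poisson clutter has zero intensity. The trajectory PMBM conjugacy in \cite[Lemma 5]{garcia2020trajectory} then immediately produces a trajectory PMBM posterior, and the remaining task is to identify its parameters. The Poisson residual consists exclusively of undetected tracks that cannot ``generate'' any $Y^j$, i.e.\ those whose object died at $k$, yielding $\lambda^u_{k:K|K}(t,x^{1:\nu}) = \delta_k[t]\delta_1[\nu](1-p^S(x^1))\lambda^u_{k|k}(x^1)$; all other Poisson mass is either matched to some $Y^j$ with $t^j=k+1$ (creating a new Bernoulli) or thinned out by the zero detection probability on longer unmatched trajectories.

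For the remaining Bernoulli enumeration I would proceed case by case. For each existing Bernoulli $i \in \{1,\dots,n_{k|k}\}$, Bernoulli conjugacy against the delta likelihood gives the misdetection branch by integrating $1-p^S$ against $p^i_{k|k}$ (the ``died at $k$'' hypothesis), and gives $m$ detection branches, one per $Y^j$ with $t^j=k+1$, whose weight is $r^i_{k|k}\langle p^i_{k|k}, g(y^1|\cdot)p^S\rangle$ and whose posterior density is concentrated on the deterministic extension $x^{2:\nu^j+1}=y^{1:\nu^j}$ with $x^1$ distributed according to the usual Bayes update. For each $Y^j$ with $j\le m$ I would also create a new Bernoulli obtained by updating the (birth + undetected-survivor) Poisson continuation source against $\delta_{Y^j}$; here the posterior single-trajectory density is a two-component mixture with weights proportional to $\lambda^B_{k+1}(y^1)$ (pure birth, trajectory is exactly $Y^j$) and $\langle \lambda^u_{k|k}, g(y^1|\cdot)p^S\rangle$ (previously undetected object survived, so the trajectory starts at $k$ and extends to $y^{1:\nu^j}$), producing \eqref{eq_newly_detected}; the complementary non-existent hypothesis \eqref{eq_non_valid} exists because a global hypothesis is free to assign $Y^j$ to an existing Bernoulli instead. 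Finally, trajectories $Y^j$ with $t^j>k+1$ do not interact with the transition from $k$ to $k+1$, so they propagate through the update as trivial Bernoullis with density $\delta_{Y^j}$ as in \eqref{eq_pmb_theorem_d}.

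The main obstacle is the bookkeeping for the Bernoulli component created by a $Y^j$ with $t^j=k+1$: one must correctly split the predicted Poisson intensity on $T_{(k,k+1)}$ into a pure-birth piece and a previously-undetected-and-surviving piece, push both through the delta likelihood, and recover $\underline{w}^{i,2}_{k:K|K}$ and $\overline{w}^{i,2}_{k:K|K}$; this also requires checking that the normalisation in \eqref{eq_glo_hyo_weight} and the admissibility constraint in \eqref{eq_global_hypo} are consistent with this enumeration of local hypotheses, i.e.\ that each $Y^j$ with $j\le m$ is accounted for exactly once across the $m+1$ hypotheses of the existing Bernoullis and the new Bernoulli. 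The rest of the calculations are routine single-Bernoulli updates against a Dirac likelihood.
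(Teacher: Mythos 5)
Your proposal follows essentially the same route as the paper's proof: compute the predicted trajectory PMB $\pi_{k:k+1|k}$ via \cite[Lemma 4]{garcia2020trajectory}, reinterpret $\delta_{\bf Y}({\bf X}_{k+1:K})$ as a standard multi-object measurement model with state-dependent detection probability and zero clutter, and apply the trajectory PMB update \cite[Lemma 5]{garcia2020trajectory}, with the same case enumeration and the same handling of the birth/undetected-survivor split for newly detected trajectories. The paper merely makes the pass-through of trajectories born after $k{+}1$ slightly more formal via an explicit factorisation of the multi-trajectory Dirac delta, which you assert by the equivalent informal argument.
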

\begin{proof}
  See Appendix \ref{proof_thm_pmb}.
\end{proof}

The local hypothesis parameterised by \eqref{eq_non_valid} corresponds to a non-existent Bernoulli component. For a global hypothesis that includes this local hypothesis, trajectory $Y^j$ has been assigned to a Bernoulli filtering density. The local hypothesis parameterised by \eqref{eq_newly_detected} covers the case that the trajectory is associated to the trajectory PPP in $\pi_{k:k+1|k}({\bf X}_{k:k+1})$; i.e., the corresponding object was first detected at time step $k+1$. Note that, for an object that was first detected at time step $k+1$, it might be born at any time before, or at, time step $k+1$. Therefore, there is uncertainty on the start time of its corresponding trajectory. 

% For trajectory Bernoulli component created by the set $\{Y^j\}_{j=m+1}^{n_{k+1|K}}$ of trajectories that appeared after time step $k+1$, each contains only a single local hypothesis since these trajectories remain unaltered.

% \begin{remark}
%   Theorem \ref{thm_pmb} can be easily adapted to MB filtering densities with MB birth model. This is achieved by setting the Poisson intensity to zero and changing the way trajectory Bernoulli components are created. In this case, each Bernoulli component in the MB birth density at time step $k$, instead of each sampled trajectory that existed at time step $k+1$, creates a trajectory Bernoulli component, and a local hypothesis is included either for the case that one of such trajectories was initialised using the corresponding Bernoulli birth component, or for the case that the corresponding object hypothesised to be born at time step $k$ had never existed.
% \end{remark}

\subsection{Backward simulation for sets of trajectories}

A particle approximation of the multi-trajectory density is
\begin{equation}
  \label{eq_particle}
  \pi({\bf X}_{\alpha:\gamma}) \approx \sum_{i=1}^{T}w^{(i)}\delta_{{\bf X}^{(i)}}({\bf X}_{\alpha:\gamma})
\end{equation}
where $T$ is the number of particles and the $i$-th particle has state ${\bf X}^{(i)}$ and weight $w^{(i)}$. By running the backward simulation for sets of trajectories $T$ times for $k = K-1,\dots,1$ where we recursively draw samples of ${\bf X}_{k:K}$ from the backward kernel \eqref{eq_lemma_bs}, we can obtain $T$ particles $\{{\bf X}^{(i)}_{1:K}\}_{i=1}^T$ representing the multi-trajectory density $\pi_{1:K|K}({\bf X}_{1:K})$ with uniform weights $w^{(i)} = 1/T$. Note that, multiple backward set of trajectories can be generated independently, without having to rerun the forward multi-object filter. That is the complexity of backward simulation for sets of trajectories is linear with the number of particles.

% By increasing the number $T$ of particles, we can obtain more accurate approximation \eqref{eq_particle} of the multi-trajectory density. 

For the PMBM backward kernel \eqref{eq_pmbm}, it is sufficient to sample sets of trajectories described by the MBM \eqref{eq_mbm}, which are trajectories of objects that are hypothesised to have been detected at some time, possibly also including object states before their first detection. In other words, we do not sample trajectories of objects that are hypothesised to exist but were never detected. To do so, we first sample a global hypothesis to obtain a trajectory MB, and then we sample a single trajectory state for each Bernoulli component. It is, however, generally intractable to exhaustively enumerate all the global hypotheses. Therefore, approximations are needed for a tractable implementation. One such implementation will be presented for a linear-Gaussian dynamic model in the next section.

% For this particle representation of the multi-trajectory density, the cardinality distribution of the set ${\bf X}_{\alpha:\gamma}$ of trajectories is given by
% \begin{equation}
%   \text{Pr}(|{\bf X}_{\alpha:\gamma}|=n) = \sum_{i=1}^{T}w^{(i)}\delta_n\left[\left|{\bf X}^{(i)}\right|\right],
% \end{equation}
% the cardinality distribution of the trajectories born at time step $k$ with $\alpha \leq k \leq \gamma$ is given by
% \begin{equation}
%   \text{Pr}(n~\text{births at time}~k) = \sum_{i=1}^{T}w^{(i)}\delta_{n}\left[\sum_{(t,x^{1:\nu})\in{\bf X}^{(i)}}\delta_k[t]\right],
% \end{equation}
% and the cardinality distribution of the trajectories that die at time step $k$ with $\alpha \leq k \leq \gamma-1$ is given by  
% \begin{align}
%   &\text{Pr}(n~\text{deaths at time}~k)\nonumber\\
%   &~~~= \sum_{i=1}^{T}w^{(i)}\delta_{n}\left[\sum_{(t,x^{1:\nu})\in{\bf X}^{(i)}}\delta_k[t+\nu-1]\right].
% \end{align}

\section{A Tractable Implementation for Linear-Gaussian Dynamic Model}

In this section, we present a tractable implementation of the proposed multi-trajectory particle smoother for PMB filtering densities with the following assumptions:
\begin{assumption}
\label{assumption}
The linear Gaussian dynamic model and the PMB filtering densities are defined as follows
\begin{itemize}
  \item The survival probabilities are constant, i.e., $p^S(\cdot) = p^S$.
  \item The linear Gaussian single-object state transition density is $g(\cdot|x) = {\cal N}(\cdot;Fx,Q)$ where $F$ is the state transition matrix and $Q$ is the motion noise covariance matrix. 
  \item The Poisson birth intensity is a Gaussian mixture
  \begin{equation}
    \lambda^B_k(x) = \sum_{i=1}^{N^b_k}w_k^{b,i}{\cal N}\left(x;x_k^{b,i},P_k^{b,i}\right)\label{eq_Gaussian_birth}
  \end{equation}
  where $N^b_k$ is the number of components, $w_k^{b,i}$, $x_k^{b,i}$ and $P_k^{b,i}$ are the weight, the mean and the covariance of the $i$-th component, respectively.
  \item The PMB filtering density at time step $k$ is parameterised by $\left\{\lambda^u_{k|k}(\cdot), \left\{r^i_{k|k},p^i_{k|k}(\cdot)\right\}_{i=1}^{n_{k|k}}\right\}$. Here, 
  \begin{equation}
    \lambda^u_{k|k}(x) = \sum_{i=1}^{N_{k|k}^{u}}w_{k|k}^{u,i}{\cal N}\left(x;x_{k|k}^{u,i},P_{k|k}^{u,i}\right)\label{eq_Gaussian_ppp}
  \end{equation}
  is the Poisson RFS intensity for undetected objects where $N^u_{k|k}$ is the number of components, $w_{k|k}^{u,i}$, $x_{k|k}^{u,i}$ and $P_{k|k}^{u,i}$ are the weight, the mean and the covariance of the $i$-th component, respectively. In addition, there are $n_{k|k}$ Bernoulli components, and the $i$-th component has probability of existence $r^i_{k|k}$ and single-object state density $p^i_{k|k}(x) = {\cal N}\left(x;x^{i}_{k|k},P^{i}_{k|k}\right)$
  where $x_{k|k}^{i}$ is its mean and $P_{k|k}^{i}$ is its covariance.
\end{itemize}
\end{assumption}

\subsection{Gaussian implementation for backward kernel}

The backward kernel for sets of trajectories under Assumption \ref{assumption} is given by the following lemma. 
For general non-linear dynamic models, the smoothed state density can be computed using Gaussian assumed density approximations \cite{sarkka2013bayesian}.
\begin{lemma}
  \label{lemma_pmbm}
  Given the linear Gaussian dynamic model, the PMB filtering density at time step $k$ specified in Assumption \ref{assumption} and the set ${\bf Y} = \{Y^1,\dots,Y^{n_{k+1:K}}\}$ of trajectories in the time interval $k+1:K$ specified in Theorem \ref{thm_pmb}, the multi-trajectory density in the time interval $k:K$ is a PMBM of the form \eqref{eq_pmbm} with $n_{k:K|K} = n_{k+1:K} + n_{k|k}$ and Poisson intensity
  \begin{equation}
    \lambda^u_{k:K|K}\left(t,x^{1:\nu}\right) = \delta_{k}[t]\delta_{1}[\nu]\left(1-p^S\right)\lambda^u_{k|k}\left(x^1\right).
  \end{equation}
  The local hypothesis of the $i$-th trajectory Bernoulli component, $i\in\{1,\dots,n_{k|k}\}$, corresponding to the case that the trajectory ended at time step $k$, is given by 
  \begin{subequations}
    \begin{align}
      w^{i,1}_{k:K|K} &= 1 - r^i_{k|k} + r^i_{k|k}\left(1-p^S\right),\label{eq_W2}\\
      r^{i,1}_{k:K|K} &= \frac{r^i_{k|k}\left(1-p^S\right)}{w^{i,1}_{k:K|K}},\label{eq_mis_r}\\
      p^{i,1}_{k:K|K}\left(t,x^{1:\nu}\right) &= \delta_k[t]\delta_{1}[\nu]{\cal N}\left(x^1;x^{i}_{k|k},P^{i}_{k|k}\right)\label{eq_mis}.
    \end{align}
  \end{subequations}
  The local hypothesis of the $i$-th trajectory Bernoulli component, $i\in\{1,\dots,n_{k|k}\}$, corresponding to the case that the trajectory Bernoulli component is updated by trajectory $Y^j = \left(t^j,y^{1:\nu^j}\right)$, $j\in\{1,\dots,m\}$, is given by 
  \begin{subequations}
    \begin{align}
      w^{i,j+1}_{k:K|K} &= r^i_{k|k}p^S{\cal N}\left(y^1;Fx_{k|k}^i,P_{k+1|k}^i\right),\label{eq_W1_2}\\
      r^{i,j+1}_{k:K|K} &= 1,\\
      p^{i,j+1}_{k:K|K}\left(t,x^{1:\nu}\right) &=\delta_{k}[t]\delta_{\nu^j+1}[\nu]\delta_{y^{1:\nu^j}}\left(x^{2:\nu}\right)\nonumber\\ &~~~\times{\cal N}\left(x^1;x_{k|K}^{i},P_{k|K}^{i}\right),\label{eq_Gaussian_cov2}\\
      x^{i}_{k|K} &= x^{i}_{k|k} + G^i\left(y^1 - Fx^{i}_{k|k}\right),\\
      P^{i}_{k|K} &= P^{i}_{k|k} - G^iFP^{i}_{k|k},\label{eq_Gaussian_P}\\
      G^i &= P^{i}_{k|k}F^T\left(P_{k+1|k}^{i}\right)^{-1},\\
      P_{k+1|k}^{i} &= FP_{k|k}^{i}F^T+Q.
    \end{align}
  \end{subequations}
  The local hypothesis of the $i$-th trajectory Bernoulli component, $i\in\{n_{k|k}+1,\dots,n_{k|k}+m\}$, corresponding to the case that the object with trajectory $Y^j = \left(t^j,y^{1:\nu^j}\right)$ was first detected at time step $k+1$, is given by
  \begin{subequations}
    \label{eq_newly_detected_Gaussian}
    \begin{align}
      w^{i,2}_{k:K|K} &=\sum_{i=1}^{N^b_{k+1}}w_{k+1}^{b,i}{\cal N}\left(y^1;x_{k+1}^{b,i},P_{k+1}^{b,i}\right)\nonumber\\ &~~~+ p^S\sum_{i=1}^{N_{k|k}^{u}}w_{k|k}^{u,i}{\cal N}\left(y^1;Fx_{k|k}^{u,i},P_{k+1|k}^{u,i}\right),\label{eq_W1_1}\\
      r_{k:K|K}^{i,2} &= 1,\\
      p^{i,2}_{k:K|K}\left(t,x^{1:\nu}\right) &= \underline{w}^{i,2}_{k:K|K}\delta_{\left(t^j,y^{1:\nu^j}\right)}\left(t,x^{1:\nu}\right)\nonumber\\
      &~~~+\overline{w}^{i,2}_{k:K|K}\delta_{k}[t]\delta_{\nu^j+1}[\nu]\delta_{y^{1:\nu^j}}\left(x^{2:\nu}\right)\nonumber \\
      &~~~\times \frac{\sum_{i=1}^{N^u_{k|k}}w_{k|k}^{u,i}{\cal N}\left(x^1;x^{u,i}_{k|K},P^{u,i}_{k|K}\right)}{\sum_{i=1}^{N^u_{k|k}}w_{k|k}^{u,i}},\label{eq_lemma_ppp}\\
      x^{u,i}_{k|K} &= x^{u,i}_{k|k} + G^i\left(y^1 - Fx^{u,i}_{k|k}\right),\\
      P^{u,i}_{k|K} &= P^{u,i}_{k|k} - G^iFP^{u,i}_{k|k},\\
      G^i &= P^{u,i}_{k|k}F^T\left(P_{k+1|k}^{u,i}\right)^{-1},\\
      P_{k+1|k}^{u,i} &= FP_{k|k}^{u,i}F^T+Q,\\
      \underline{w}^{i,2}_{k:K|K} &= \frac{\sum_{i=1}^{N^b_{k+1}}w_{k+1}^{b,i}{\cal N}\left(y^1;x_{k+1}^{b,i},P_{k+1}^{b,i}\right)}{w^{i,2}_{k:K|K}},\label{eq_ww1}\\ 
      \overline{w}^{i,2}_{k:K|K} &= 1- \underline{w}^{i,2}_{k:K|K}\label{eq_ww2}.
    \end{align}
  \end{subequations}
  The local hypothesis of the $i$-th trajectory Bernoulli component, $i\in\{n_{k|k}+j\}$, $j\in\{m+1,\dots,n_{k+1:K}\}$, corresponding to the case that trajectory $Y^j$ remains unaltered, has $w^{i,1}_{k:K|K} = 1$, $r^{i,1}_{k:K|K} = 1$ and $p^{i,1}_{k:K|K}(X) = \delta_{Y^j}(X)$. 
\end{lemma}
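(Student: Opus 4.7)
The plan is to derive Lemma~\ref{lemma_pmbm} as a direct specialization of Theorem~\ref{thm_pmb} under Assumption~\ref{assumption}, evaluating each inner product in closed form using standard Gaussian identities. Every branch of Theorem~\ref{thm_pmb} produces expressions of the form $\langle p(\cdot),\phi(\cdot)p^S\rangle$ or $\langle \lambda(\cdot),\phi(\cdot)p^S\rangle$ where $\phi$ is either the constant $1$ or the transition density $g(y^1|\cdot)={\cal N}(y^1;F\cdot,Q)$, so the two identities I would invoke repeatedly are
\begin{equation*}
\int {\cal N}(y^1;Fx,Q){\cal N}(x;m,P)\,dx = {\cal N}(y^1;Fm,FPF^T+Q)
\end{equation*}
and the product-to-Kalman identity ${\cal N}(y^1;Fx,Q){\cal N}(x;m,P) = {\cal N}(y^1;Fm,FPF^T+Q)\,{\cal N}(x;m',P')$ where $m',P'$ coincide with the smoothed mean and covariance in \eqref{eq_Gaussian_cov2}--\eqref{eq_Gaussian_P}. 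The constant $p^S$ also immediately trivializes $\langle p^i_{k|k},1-p^S\rangle = 1-p^S$ and gives the $(1-p^S)\lambda^u_{k|k}$ factor in the Poisson intensity.

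First I would handle the ``trajectory ended at time step $k$'' branch: the inner product collapses to $1-p^S$ and the conditional density reduces to $p^i_{k|k}$, yielding \eqref{eq_W2}--\eqref{eq_mis}. Second, for the branch where the $i$-th Bernoulli is matched to $Y^j$ with $t^j=k+1$, the marginal identity gives the weight \eqref{eq_W1_2}, while the product-to-Kalman identity gives the smoothed Gaussian on $x^1$ in \eqref{eq_Gaussian_cov2}; the factor $\delta_{y^{1:\nu^j}}(x^{2:\nu})$ on the later states is inherited directly from Theorem~\ref{thm_pmb}. Third, for the ``newly detected'' branch, I would apply the same two identities componentwise to the Gaussian-mixture intensities \eqref{eq_Gaussian_birth} and \eqref{eq_Gaussian_ppp}: pointwise evaluation of $\lambda^B_{k+1}(y^1)$ gives the first sum in \eqref{eq_W1_1}, the marginal identity applied to each PPP component gives the second sum, and the two-part density in \eqref{eq_lemma_ppp} separates the newborn branch (a Dirac at $Y^j$, weight $\underline{w}^{i,2}_{k:K|K}$) from the surviving-undetected branch (smoothed Gaussian mixture on $x^1$, weight $\overline{w}^{i,2}_{k:K|K}$), with \eqref{eq_ww1}--\eqref{eq_ww2} obtained by identifying the birth and PPP contributions to $w^{i,2}_{k:K|K}$. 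Finally, the branch for $Y^j$ with $t^j>k+1$ involves no Gaussian integration and copies through from Theorem~\ref{thm_pmb}.

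The main obstacle I anticipate is bookkeeping in the PPP update rather than analytical depth: the PPP prior is a Gaussian mixture, so after applying the product-to-Kalman identity one has to make sure the resulting Gaussian mixture on $x^1$, together with the mixing weights between the Dirac and smoothed parts, matches the form stated in \eqref{eq_lemma_ppp} and is consistent with the overall PMBM normalization. All other steps are mechanical substitutions of closed-form Gaussian integrals into the general expressions of Theorem~\ref{thm_pmb}, and the Poisson intensity $\lambda^u_{k:K|K}$ follows by direct inspection.
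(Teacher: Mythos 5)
Your proposal is correct and matches the paper's (implicit) derivation: the paper gives no separate proof of Lemma~\ref{lemma_pmbm}, presenting it as a direct specialization of Theorem~\ref{thm_pmb} under Assumption~\ref{assumption}, obtained exactly as you describe by evaluating each inner product with the Gaussian marginalisation and product-to-Kalman identities. One remark: carrying out your PPP bookkeeping exactly yields posterior mixture weights proportional to $w^{u,i}_{k|k}{\cal N}\left(y^1;Fx^{u,i}_{k|k},P^{u,i}_{k+1|k}\right)$ rather than the prior weights $w^{u,i}_{k|k}$ appearing in \eqref{eq_lemma_ppp}, so the consistency check you flag as the main obstacle would surface this discrepancy (the two coincide when $N^u_{k|k}=1$).
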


% In Lemma \ref{lemma_pmbm}, if a trajectory $Y$ in the time interval $k+1:K$ is associated to a trajectory Bernoulli component in $\pi_{k:k+1|k}({\bf X}_{k:k+1})$, then its length is extended by one and its state at time step $k$ is Gaussian distributed. If this trajectory is instead associated to the Poisson RFS density, its length is, however, not necessarily extended by one. This is possible as it can either be the case that the corresponding object was born at time step $k+1$, or the case that the corresponding undetected object remained undetected at time step $k+1$. For the latter, the state density of the trajectory at time step $k$ is a Gaussian mixture. 

\subsection{Practical considerations}

The difficulty in drawing samples of sets of trajectories from the trajectory PMBM backward kernel \eqref{eq_pmbm} is the number of global hypotheses since enumerating every global hypothesis is of combinatorial complexity. A simple solution is to truncate the MB mixture \eqref{eq_mbm} to only keep global hypotheses with non-negligible weights. This can be achieved by solving a ranked assignment problem using, e.g., Murty's algorithm \cite{crouse2016implementing}. 

As trajectory Bernoulli components created by trajectories $\{Y^j\}_{j=m+1}^{n_{k+1:K}}$ not present at time step $k+1$ all have a single local hypothesis with weight $w^{i,1}_{k:K|K} = 1$, $i\in\{n_{k|k}+m+1,\dots,n_{k:K|K}\}$, the global hypothesis weight \eqref{eq_glo_hyo_weight} becomes
\begin{align}
  w^a &\propto \prod_{i=1}^{n_{k|k}} w^{i,a^i}_{k:K|K}\prod_{i={n_{k|k}}+1}^{n_{k|k}+m}w^{i,a^i}_{k:K|K}\nonumber\\
  &\propto \prod_{i=1:a^i>1}^{n_{k|k}} \frac{w^{i,a^i}_{k:K|K}}{w^{i,1}_{k:K|K}}\prod_{i={n_{k|k}}+1}^{n_{k|k}+m}w^{i,a^i}_{k:K|K}.\label{eq_glo_hyo_weight2}
\end{align}
% For each global hypothesis, all the trajectories in the time interval $k+1:K$ present at time step $k+1$ must be associated to either a Bernoulli filtering density or the Poisson RFS density for undetected objects at time step $k$. 
We can then construct the corresponding cost matrix as:
\begin{subequations}
\begin{align}
  C &= -\log \begin{bmatrix}
    W_1 & W_2
  \end{bmatrix},\\
  W^{(j,i)}_1 &= \frac{w^{i,j+1}_{k:K|K}}{w^{i,1}_{k:K|K}},\label{eq_weight_matrix_entry}\\
  W_2 &= \text{diag}\left( w_{k:K|K}^{{n_{k|k}}+1,2},\dots,w_{k:K|K}^{n_{k|k}+m,2} \right)\label{eq_weight_matrix_entry2}
\end{align}
\end{subequations}
where $W_2\in \mathbb{R}^{m\times m}$ is a diagonal matrix, $W_1 \in \mathbb{R}^{m\times n_{k|k}}$ and its $(j,i)$-th entry is $W_1^{(j,i)}$, the weight of associating the $j$-th trajectory to the $i$-th trajectory Bernoulli component. Each global hypothesis can be represented as an $m \times (n_{k|k}+m)$ assignment matrix $S$ consisting of $0$ or $1$ entries such that each row sums to one and that each column sums to zero or one. Then we obtain the $M$-best global hypotheses that minimise $\text{tr}(S^TC)$ using Murty's algorithm. Note that the assignment problem formulated here is similar to the assignment problem involved in the PMBM filter update where measurements are associated to Bernoulli/PPP components.

In addition to pruning global hypotheses with small weights, we apply ellipsoidal gating to remove unlikely local hypotheses. Specifically, if the squared Mahalanobis distance between the first state $y^1$ of trajectory $Y^j$, $1 \leq j \leq m$ and the predicted density of ${\cal N}(x;x^i_{k|k},P^i_{k|k})$, i.e.,
\begin{multline*}
  \text{SMD}(y^1;x_{k|k}^i,P_{k|k}^i) \triangleq \\ 
  \left( y^1 - Fx^i_{k|k} \right)^T \left( FP^i_{k|k}F^T + Q \right)^{-1} \left( y^1 - Fx^i_{k|k} \right)
\end{multline*}
is larger than a pre-defined threshold $\Gamma_g$, we then set $W^{(j,i)}_1 = 0$. This also applies to the Gaussian components of \eqref{eq_Gaussian_birth} and \eqref{eq_Gaussian_ppp}. In addition, clustering can be used to further simplify the computations of the assignment problem.

Having discussed strategies on how to efficiently sample the global hypotheses, we proceed to describe how to simplify the sampling of trajectories from the trajectory Bernoulli densities. We note that a fairly large number of particles may be needed to find the mode of the multi-trajectory density $\pi_{1:K|K}({\bf X})$ when the Gaussian covariance matrices in \eqref{eq_lemma_ppp} and \eqref{eq_Gaussian_cov2} are large. A simple heuristic that helps to quickly find the mode in some situations is to approximate each of these Gaussian densities as a Dirac delta centred at its mean. For instance, in this case the single-trajectory density \eqref{eq_Gaussian_cov2} becomes
\begin{equation*}
  \hat{p}^{i,j+1}_{k:K|K}\left(t,x^{1:\nu}\right) =\delta_{k}[t]\delta_{\nu^j+1}[\nu]\delta_{x_{k|K}^i}\left(x^1\right)\delta_{y^{1:\nu^j}}\left(x^{2:\nu}\right),
\end{equation*}
and this also avoids the need for computing the corresponding Gaussian covariance $P_{k|K}^i$ \eqref{eq_Gaussian_P}.

The robustness of the proposed multi-trajectory particle smoother mainly depends on the number $T$ of particles and the maximum number $M$ of global hypotheses used in backward simulation. For scenarios with large motion noises or PMB filtering densities with many Bernoulli components, we may need large $T$ and $M$ to obtain good smoothing performance. In fact, if the computational complexity is not a concern in offline applications (which is usually the case), one can always use larger $T$ and $M$ to obtain improved results.

Finally, the pseudocode of linear/Gaussian backward simulation for sets of trajectories with PMB filtering densities can be found in Appendix \ref{pseudo_code}.

\section{Illustrative Example and Simulation Results}

In this section, we first illustrate how to perform backward smoothing for sets of trajectories using Theorem \ref{thm_smoothing} via a one-dimensional example shown in Fig. \ref{fig_1d_example_gt}. Then, we evaluate the performance of the proposed multi-trajectory particle smoother for PMB filtering densities in two different two-dimensional scenarios.

\subsection{Illustrative example}
\label{sec_example}
We consider a one-dimensional scenario on the line segment $[-1\text{~m},1 \text{~m}]$, of length $K=8$ time steps, where two initially well-separated objects first approach each other (on the real line), then pause for $1\text{~s}$, before crossing each other. In the model, objects survive with probability $p^S = 0.95$ and new objects appear according to a Poisson birth model with intensity $\lambda^B(x) = 0.1\times\text{Uniform}[-1,1]$. A nearly constant velocity motion model is used with sampling period $T_s = 1 \text{~s}$ and single-object state transition density parameterised by
\begin{equation*}
  F = \begin{bmatrix}
    1 & T_s\\
    0 & 1
  \end{bmatrix},\quad Q = \frac{1}{900}\begin{bmatrix}
    T_s^3/3 & T_s^2/2\\
    T_s^2/2 & T_s
  \end{bmatrix}
\end{equation*}
We also assume that the multi-object filtering density at each time step is a multi-object Dirac delta, which is a special type of PMB with zero PPP intensity and each Bernoulli component having probability of existence one and single target density being a Dirac delta. The ground truth and filtering estimates are illustrated in Fig. \ref{fig_1d_example_gt}.

\begin{figure*}[!t]
  \centering
  \subfloat[\label{fig_1d_example_gt}]
  {\includegraphics[width=0.5\columnwidth]{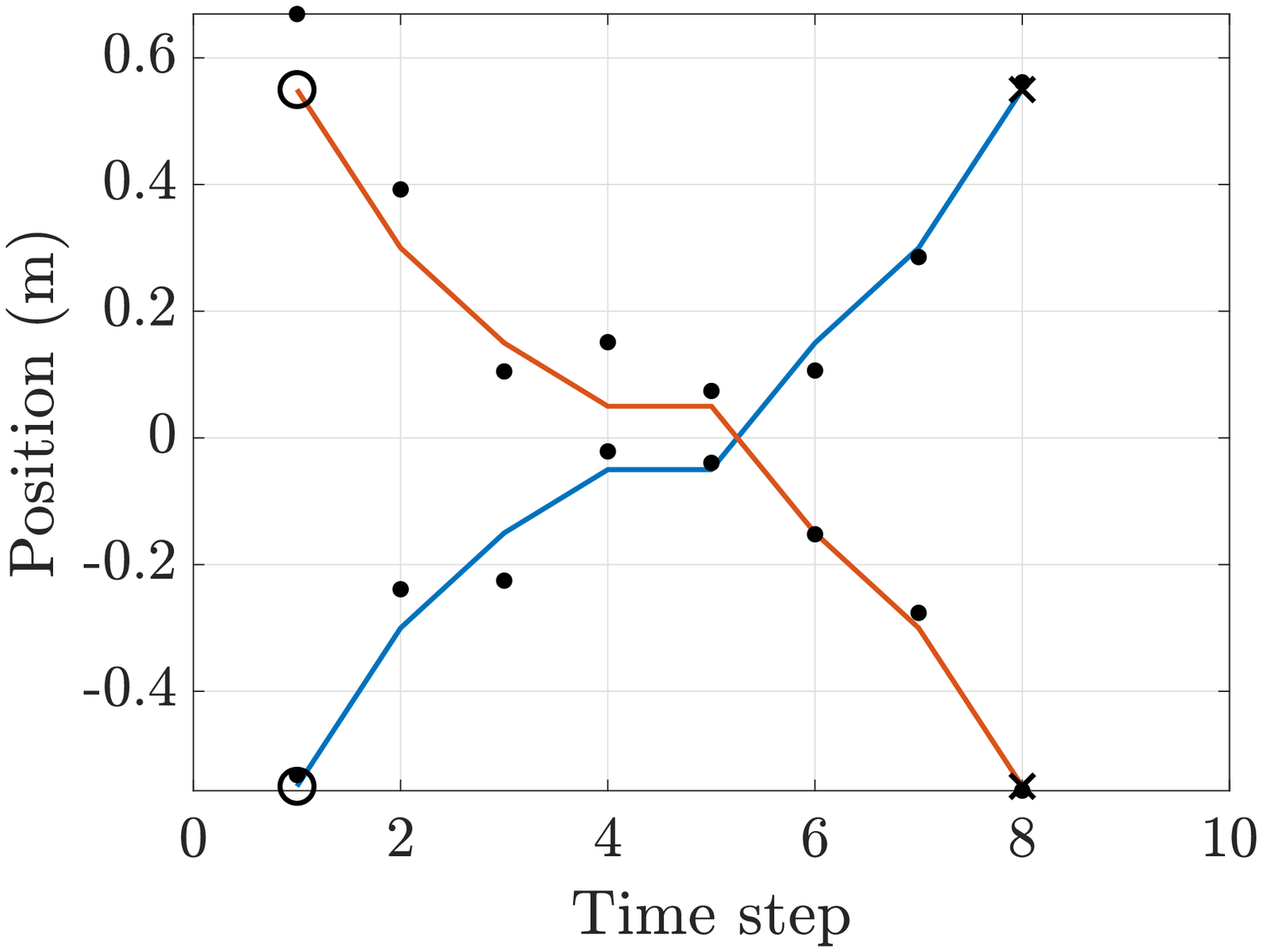}}
  \subfloat[\label{fig_1d_example_est1}]
  {\includegraphics[width=0.5\columnwidth]{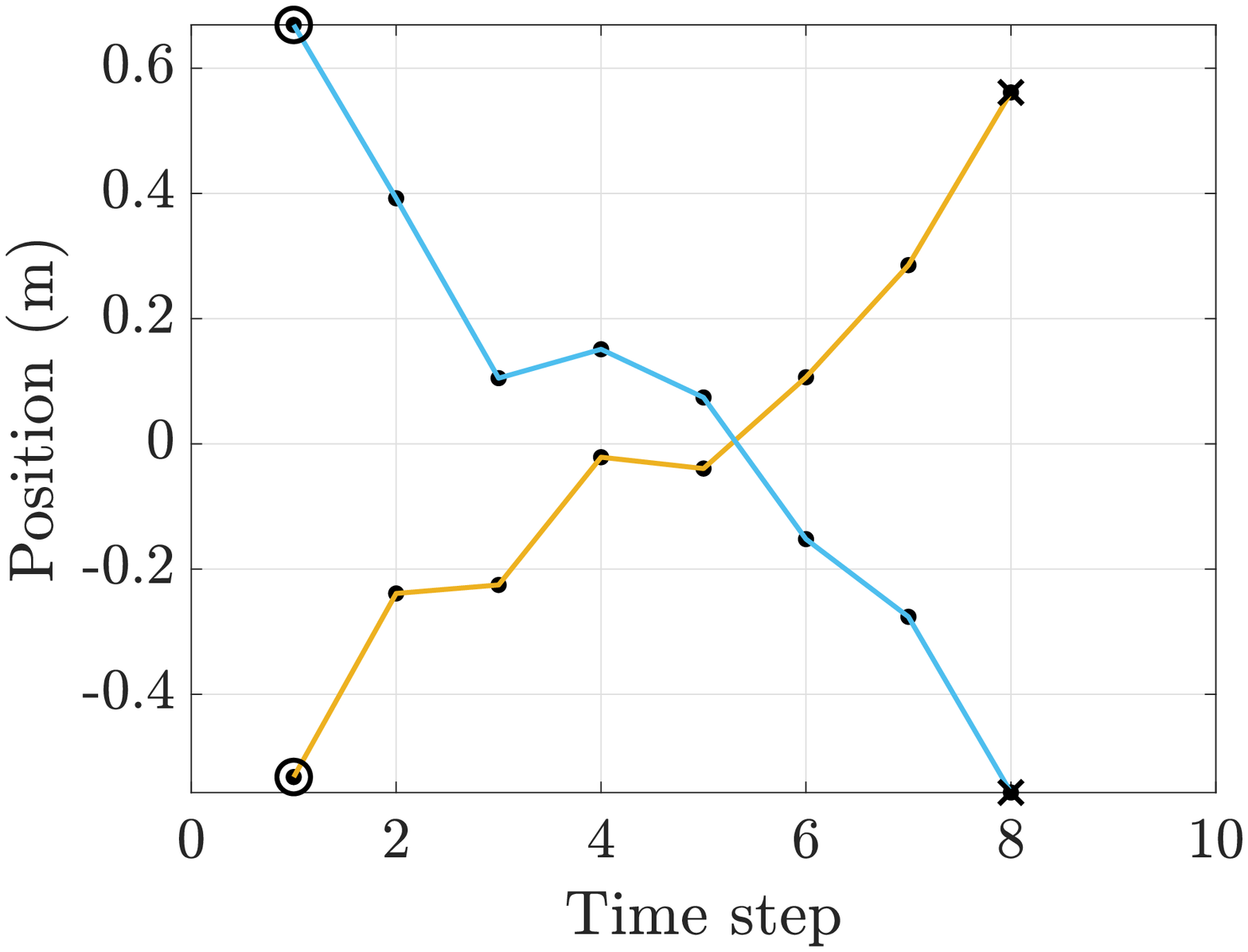}}
  \subfloat[\label{fig_1d_example_est2}]
  {\includegraphics[width=0.5\columnwidth]{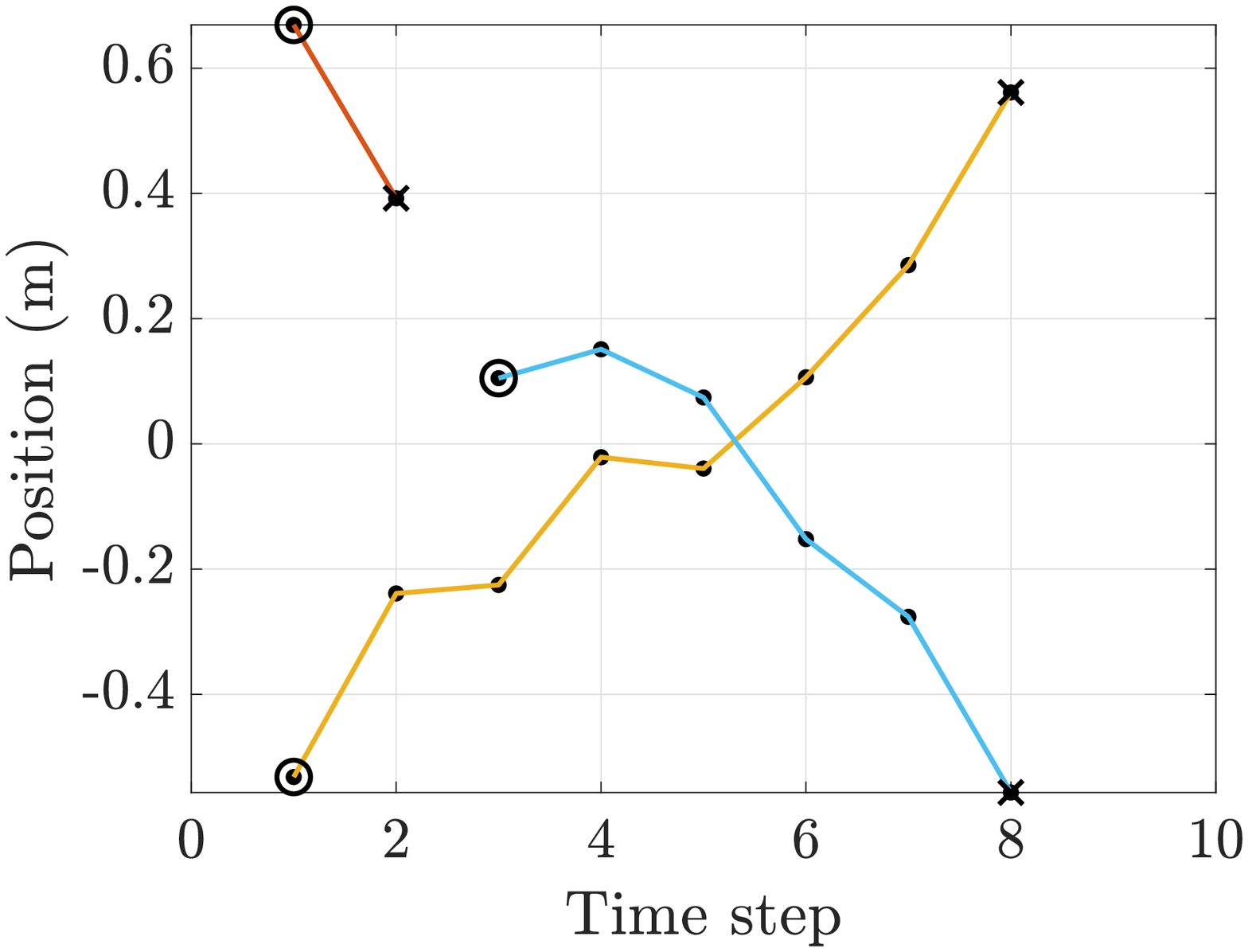}}
  \subfloat[\label{fig_1d_example_est3}]
  {\includegraphics[width=0.5\columnwidth]{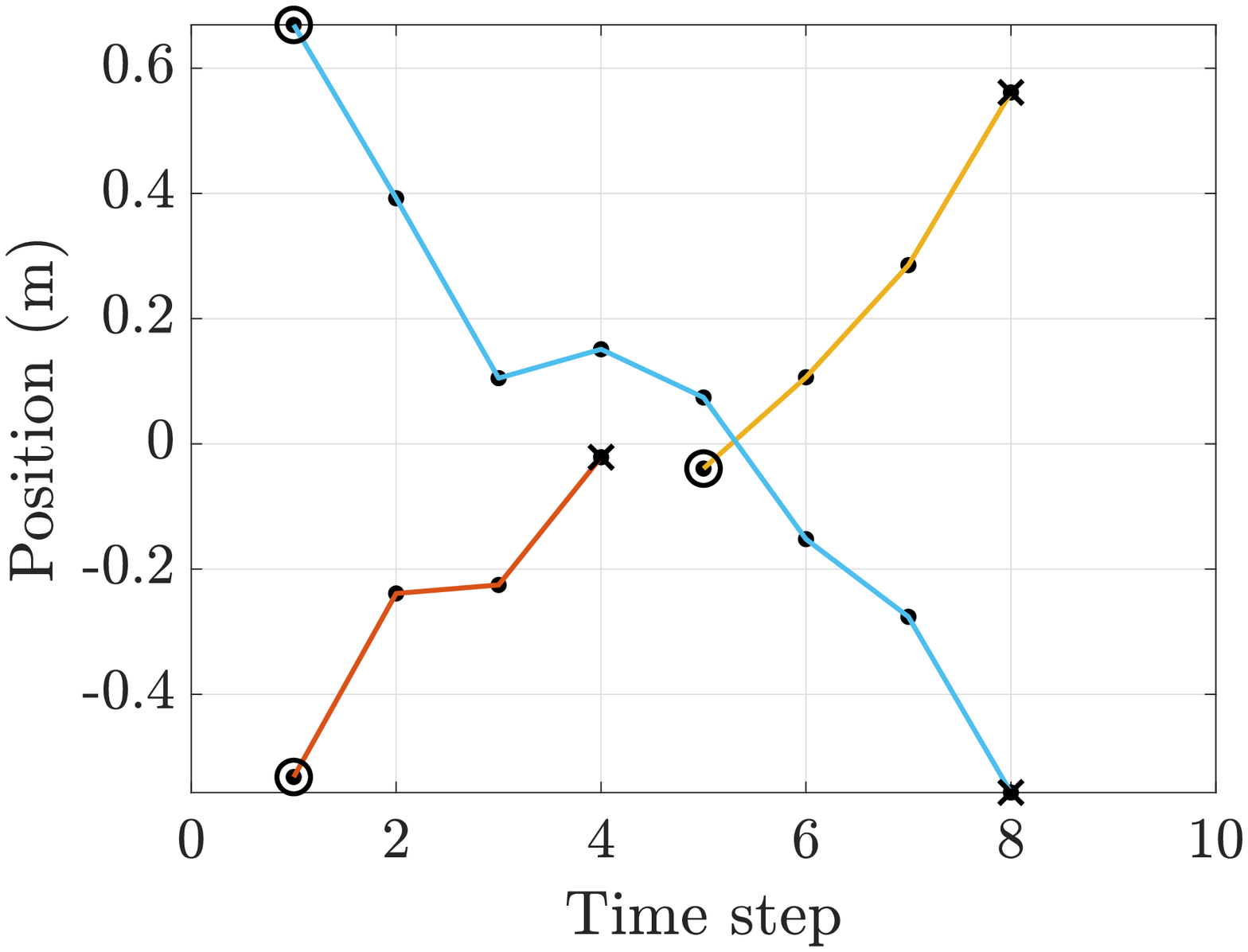}}
  \caption{Single-object Dirac deltas and true/estimated trajectories of a one-dimensional illustrative example. The single-object Dirac deltas are shown as black dots. Start/end positions of trajectories are marked by circles/crosses. Subfigure (a) shows the true trajectories. Subfigure (b), (c) and (d) show the estimated sets of trajectories in descending order of posterior probability; their posterior probabilities are $0.78$, $0.15$ and $0.03$, respectively.}
  \label{fig_1d_example}
\end{figure*}

The objective is to compute the multi-trajectory posterior $\pi_{1:K|K}({\bf X}_{1:K})$, which, in this example, can be directly derived using Theorem \ref{thm_smoothing}. However, due to the multiple possible associations of single-object Dirac deltas at different time steps, the number of mixture components in the multi-trajectory smoothing density $\pi_{k:K|K}({\bf X}_{k:K})$ increases very fast when computed backward in time, and thus it soon becomes infeasible to evaluate $\pi_{k:K|K}({\bf X}_{k:K})$ without approximation. As shown in Appendix \ref{appendix_example_expression}, the multi-trajectory smoothing density $\pi_{K-1:K|K}({\bf X}_{K-1:K})$ has already a mixture representation of 7 components. In this example, we compute the approximate $\pi_{1:K|K}({\bf X}_{1:K})$ by pruning mixture components with negligible weights. Fig. \ref{fig_1d_example_est1}, \ref{fig_1d_example_est2} and \ref{fig_1d_example_est3} show the estimated sets of trajectories in descending order of posterior probability. The results show that the estimate with unbroken trajectories has the highest weight. With the considered multi-object dynamic model, the set of trajectories with the highest posterior probability closely matches the true trajectories. The other sets of trajectories (with less probability) do not properly link one of the trajectories and, therefore, estimate an additional trajectory.

\subsection{Simulation results}

We present the results from a Monte Carlo simulation with 500 runs where the performance of the following multi-object filters/smoothers are compared:
\begin{enumerate}
  \item Track-oriented PMB filter \cite{pmbmpoint}, referred to as TO-PMB.
  \item Variational PMB filter \cite{variational}, referred to as V-PMB.
  \item Multi-trajectory particle smoother with TO-PMB filtering densities, referred to as BS-TO-PMB.
  \item Multi-trajectory particle smoother with V-PMB filtering densities, referred to as BS-V-PMB.
  \item Trajectory PMBM filter for the set of all trajectories \cite{granstrom2018poisson}, referred to as T-PMBM.
  \item Trajectory PMB filter for the set of all trajectories \cite{garcia2020trajectory}, referred to as T-PMB.
  \item $\delta$-GLMB filter with a multi-scan estimator \cite{nguyen2019glmb} and RTS smoothing, referred to as GLMB.
  \item Multi-scan GLMB with batch smoothing \cite{vo2019multi}, referred to as M-GLMB.
\end{enumerate}
We note that both 7) and 8) can be considered as implementations of the trajectory $\text{MBM}_{01}$ filter in \cite{garcia2019multiple}.

\begin{figure}[!t]
  \centering
  \includegraphics[width=\columnwidth]{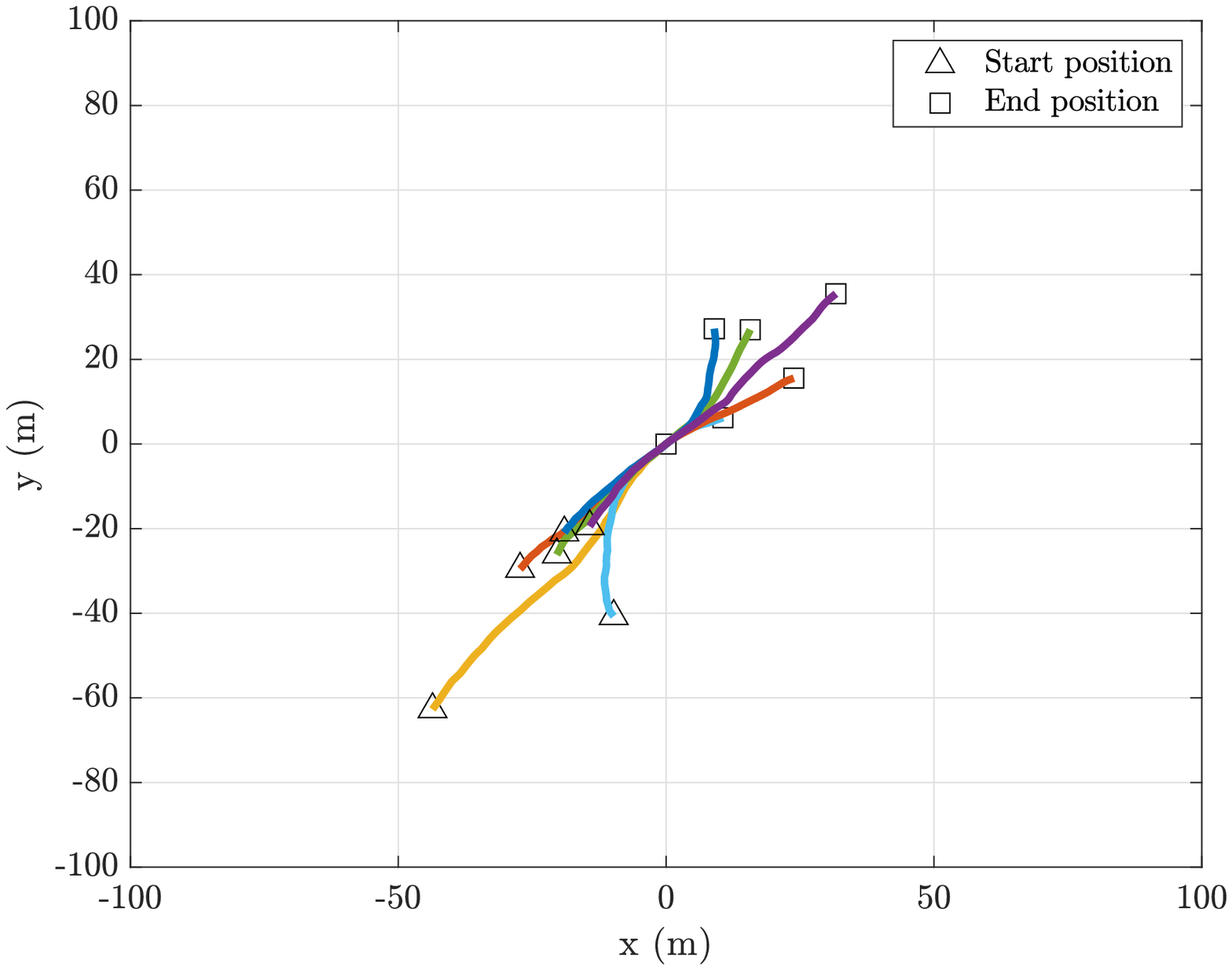}
  \includegraphics[width=\columnwidth]{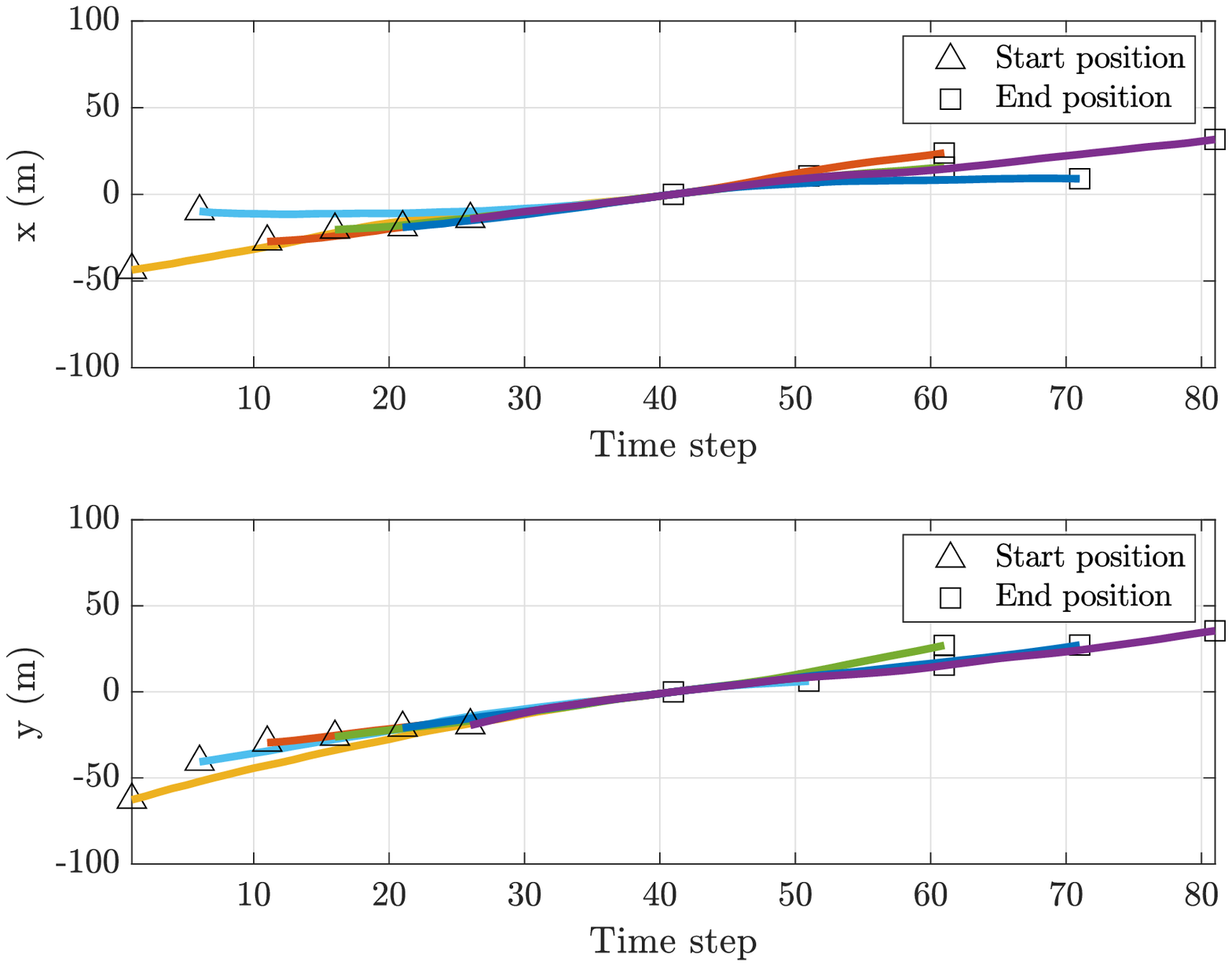}
  \caption{True trajectories of the scenario with a high risk of track coalescence for one of 500 Monte Carlo runs. The figure on the top shows the two-dimensional trajectories, and how their decompositions into $x$ and $y$ dimensions vary over time is illustrated in the two figures on the bottom. Note that for different Monte Carlo runs, only the birth and death time of objects are deterministic. Start/end positions of trajectories are marked by triangle/square, respectively. There are six objects. They are born at time step 1, 6, 11, 16, 21, 26 and die at time step 41, 51, 61, 61, 71, 81, respectively. At time step 41, one object dies when all objects are in close proximity.}
  \label{fig_gt}
\end{figure}

We consider a two-dimensional scenario with $81$ time steps where six initially well-separated objects move in close proximity to each other and thereafter separate, in the area $[-100\text{~m},100\text{~m}]\times[-100\text{~m},100\text{~m}]$. The true trajectories of the simulated scenario are illustrated in Fig. \ref{fig_gt}. We use a nearly constant velocity motion model with sampling period $T_s = 1~\text{s}$ and single-object state transition density parameterised by
\begin{equation*}
  F = I_2 \otimes \begin{bmatrix}
    1 & T_s\\
    0 & 1
  \end{bmatrix},\quad Q = \sigma_q^2I_2\otimes\begin{bmatrix}
    T_s^3/3 & T_s^2/2\\
    T_s^2/2 & T_s
  \end{bmatrix}
\end{equation*}
where $I_2$ is an identity matrix, $\otimes$ denotes the Kronecker product, and $\sigma_q = 0.1$. Each object survives with probability $p^S = 0.98$. We also consider the point object measurement model where each object generates at most one measurement. The probability of detection is $p^D = 0.7$ and the clutter is uniformly distributed in the tracking area with Poisson rate $\lambda^C = 30$. The measurement model is also linear and Gaussian with observation matrix $I_2\otimes\begin{bmatrix}
  1 & 0
\end{bmatrix}$ and measurement noise covariance $\sigma_r^2I_2$ where $\sigma_r = 1$.

% Gaussian mixture of the form \eqref{eq_Gaussian_birth} where $n_k^b = 6$, $w_k^{b,i}=0.02$, $P^{b,i}_k = \text{diag}(16,4,16,4)$, and the mean of each Gaussian component $x^{b,i}_k$ is given by one of the object true initial states rounded to the nearest integers. This implies that objects are born at known point sources. The MB birth model used in GLMB and MS-GLMB has the same PHD as the Poisson birth process. 

For implementations with Poisson birth model, the Poisson birth intensity is a single Gaussian (c.f. \eqref{eq_Gaussian_birth}) with parameter $N_k^b = 1$, $w_k^{b,1}=0.05$, $m^{b,1}_k = [-25,1,-25,1]^T$ and $P^{b,1}_k = \text{diag}(225,1,225,1)$. The MB birth model used in GLMB and MS-GLMB contains a single Bernoulli with the same PHD as the Poisson birth process. 

All the implementations use ellipsoidal gating (with gating size computed using the inverse-chi-squared distribution at probability 0.9999) to remove unlikely local hypotheses and Murty's algorithm to find the $M$-best global hypotheses with highest weight, with the only exception being MS-GLMB where the multi-scan data association problem is solved using Gibbs sampling. The maximum number of hypotheses is 100 for TO-PMB, V-PMB and T-PMB, and 1000 for T-PMBM and GLMB. For filters with Poisson birth, Bernoulli components with probability of existence smaller than $10^{-4}$ and Poisson components with weights smaller than $10^{-4}$ are pruned. For T-PMBM and T-PMB, Bernoulli components with probability of being alive at the current time step smaller than $10^{-4}$ are considered dead, and both filters are implemented without $L$-scan approximation, i.e., single-object states at different time steps are not considered independent. For T-PMBM and GLMB, we prune global hypotheses with weight smaller than $10^{-4}$. For MS-GLMB, the smoothed multi-trajectory estimate obtained from GLMB is used to initialise the Markov chain and 1000 iterations are used in the multi-scan Gibbs sampler. For BS-TO-PMB and BS-V-PMB, the number of particles is 1000 and only a maximum of 100 global hypotheses with weight larger than $10^{-4}$ can be sampled.

For filters with Poisson birth, estimates are extracted from Bernoulli components with probability of existence $r \geq 0.5$. For T-PMBM, this is done for the global hypothesis with the highest weight. Also note that we only extract T-PMBM and T-PMB estimates at the last time step. For GLMB and M-GLMB, we also report estimates from the global hypothesis with the highest weight. For BS-TO-PMB and BS-V-PMB, we report the particle (set of trajectories estimate) with the highest likelihood accumulated over time, see Appendix \ref{pseudo_code} for implementation details. 

The multi-object state estimation performance is evaluated using the generalised optimal sub-pattern assignment (GOSPA) metric \cite{gospa} with $\alpha =2$, $c=20$ and $p=1$. Given a metric $d_b(\cdot,\cdot)$ in $\mathbb{R}^{n_x}$ , a scalar $c > 0$, and a scalar $p$ with $1 \leq p < \infty$, the GOSPA metric ($\alpha = 2$) between sets ${\bf x}$ and ${\bf y}$ is  \cite[Proposition 1]{gospa}
\begin{multline*}
  d({\bf x},{\bf y}) \\= \left[\min _{\theta \in \Gamma}\left(\sum_{(i, j) \in \theta} d_b^p\left(x_{i}, y_{j}\right)+\frac{c^{p}}{2}(|{\bf x}|+|{\bf y}|-2|\theta|)\right)\right]^{\frac{1}{p}}
\end{multline*}
where $\theta$ is an assignment set between sets $\{1,\dots,|{\bf x}|\}$ and $\{1,\dots,|{\bf y}|\}$, and $\Gamma$ is the set of all possible assignment sets. In addition, the multi-trajectory estimation performance is evaluated using the linear programming (LP) metric for sets of trajectories \cite{garcia2020metric} with $c=20$, $p=1$ and $\gamma=2$, which is an extension of GOSPA to sets of trajectories.

\begin{figure}[!t]
  \centering
  \includegraphics[width=\columnwidth]{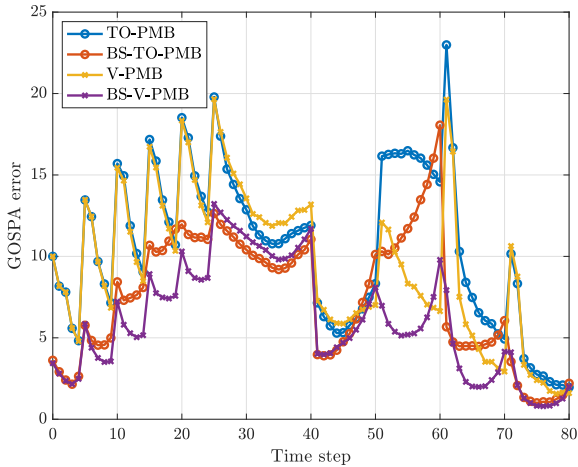}
  \caption{GOSPA error versus time for the scenario with a high risk of track coalescence. The two multi-trajectory particle smoothers outperform their corresponding multi-object filters in terms of GOSPA error at almost all the time steps.}
  \label{fig_gospa}
\end{figure}

\begin{table}[!t]
  \caption{Average GOSPA error and its decomposition where localisation refers to the localisation error normalised by the estimated cardinality}
  \label{table_gospa}
  \centering
  \begin{tabular}{ccccc}
  \hline
            & GOSPA & Localisation & Missed & False \\ \hline
  TO-PMB    & 859.5  & 93.6         & 358.5   & 196.0  \\
  BS-TO-PMB & 606.8  & 64.3         & 272.3   & 119.2  \\
  V-PMB     & 772.0  & 83.3         & 363.8   & 140.0  \\
  BS-V-PMB & $\mathbf{492.3}$  & 55.7         & 226.7   & 78.6  \\ \hline
  \end{tabular}
\end{table}

\begin{figure}[!t]
  \centering
  \includegraphics[width=\columnwidth]{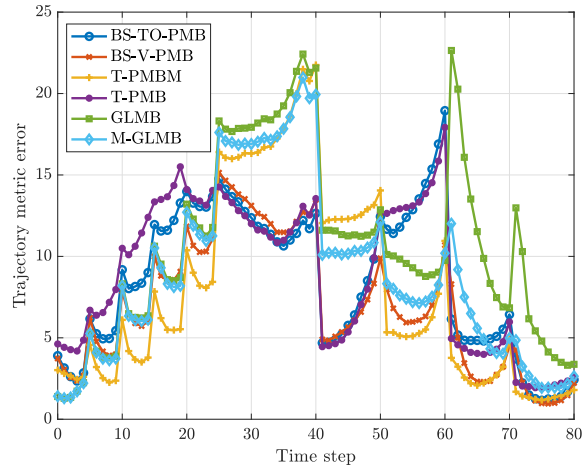}
  \caption{LP trajectory metric error versus time for the scenario with a high risk of track coalescence. BS-V-PMB has the best overall performance.}
  \label{fig_tra_metric}
\end{figure}

\begin{table}[!t]
  \caption{Average LP trajectory metric error and its decomposition}
  \label{table_tra_metric}
  \centering
  \begin{tabular}{cccccc}
  \hline
            & Total & Localisation & Missed & False & Switch \\ \hline
  BS-TO-PMB & 688.2              & 277.9         & 274.9   & 118.3 & 17.2   \\
  BS-V-PMB  & $\mathbf{564.4}$   & 242.6         & 225.2   & 79.0  & 17.7   \\
  T-PMBM    & 640.5              & 243.9         & 356.2   & 22.9  & 17.5   \\
  T-PMB     & 722.3              & 280.3         & 237.8   & 185.9 & 18.3   \\
  GLMB      & 867.0              & 285.9         & 329.8   & 232.1 & 19.2   \\
  M-GLMB    & 724.3              & 262.5         & 324.5   & 116.5  & 18.8   \\ \hline
  \end{tabular}
\end{table}

\begin{figure*}[!t]
  \centering
  \subfloat[\label{tra_loc}]
  {\includegraphics[width=0.5\columnwidth]{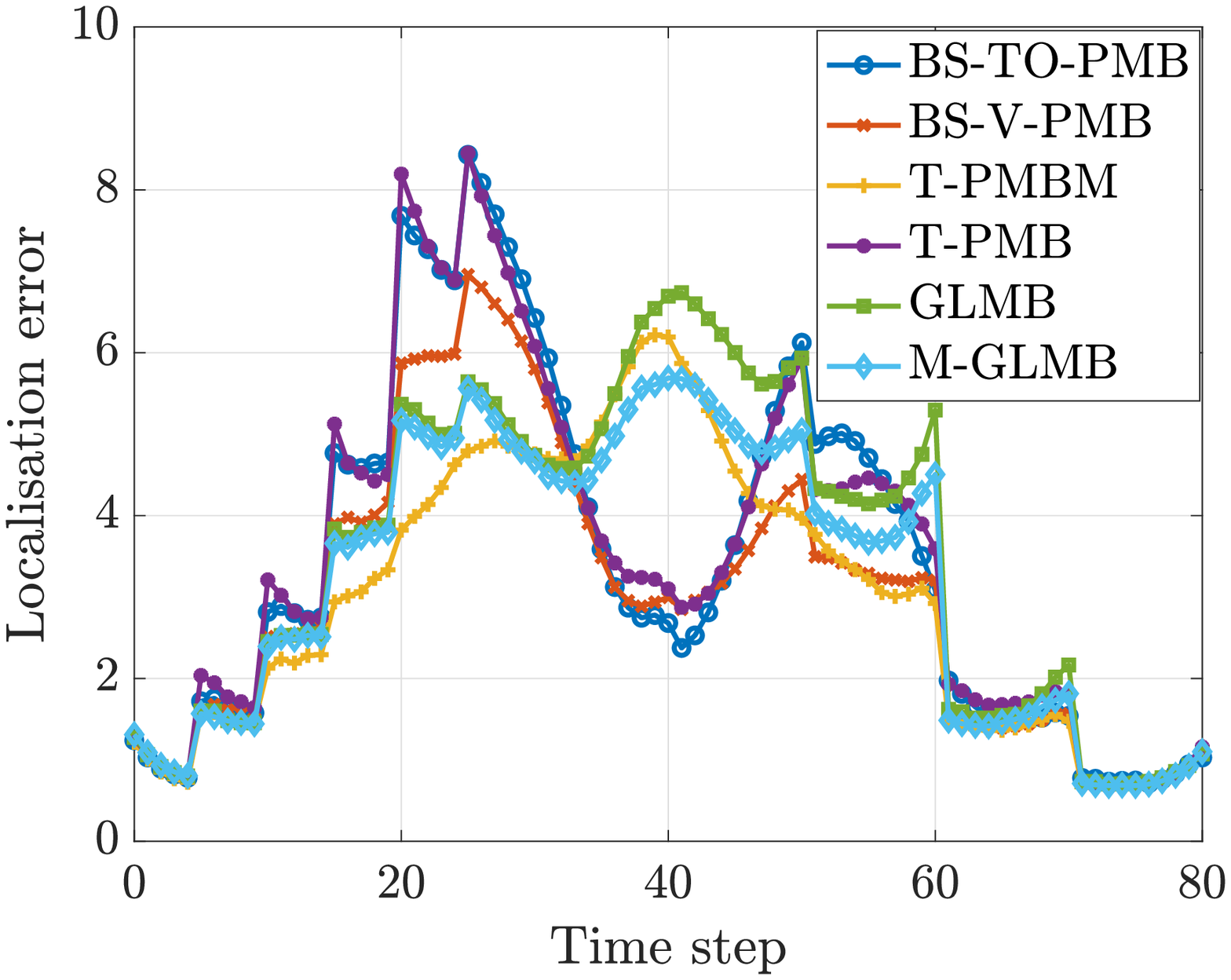}}
  \subfloat[\label{tra_missed}]
  {\includegraphics[width=0.5\columnwidth]{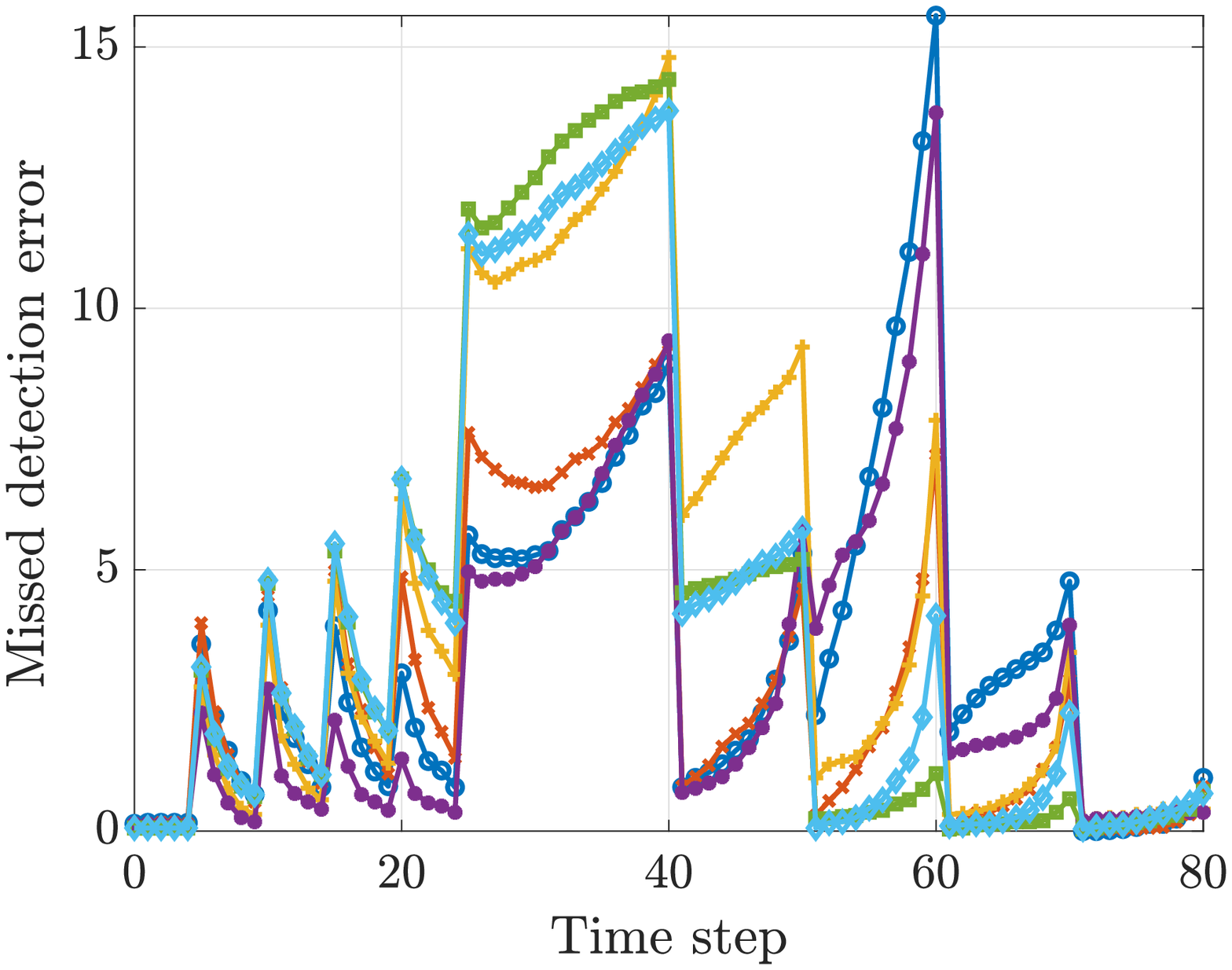}}
  \subfloat[\label{tra_false}]
  {\includegraphics[width=0.5\columnwidth]{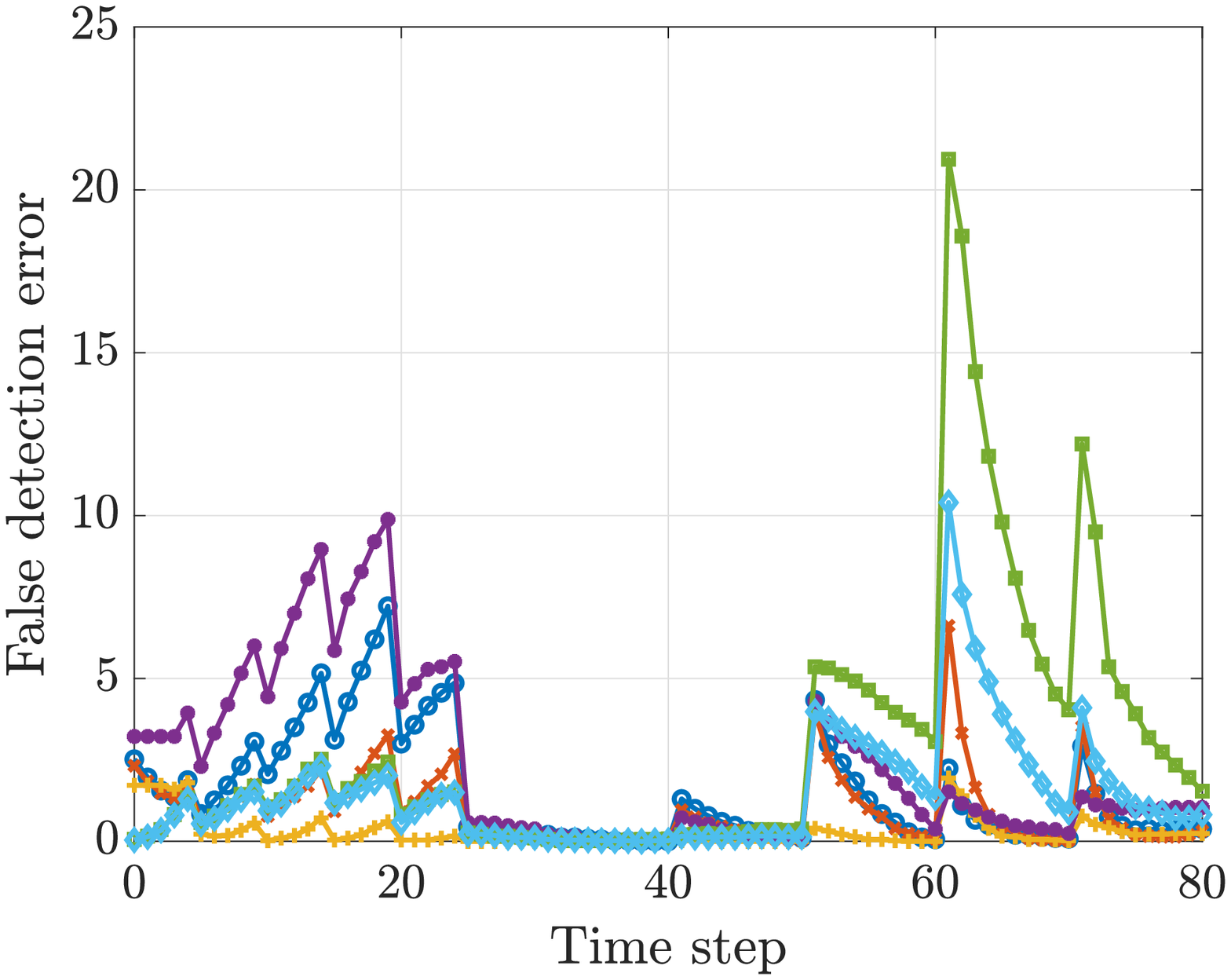}}
  \subfloat[\label{tra_switch}]
  {\includegraphics[width=0.5\columnwidth]{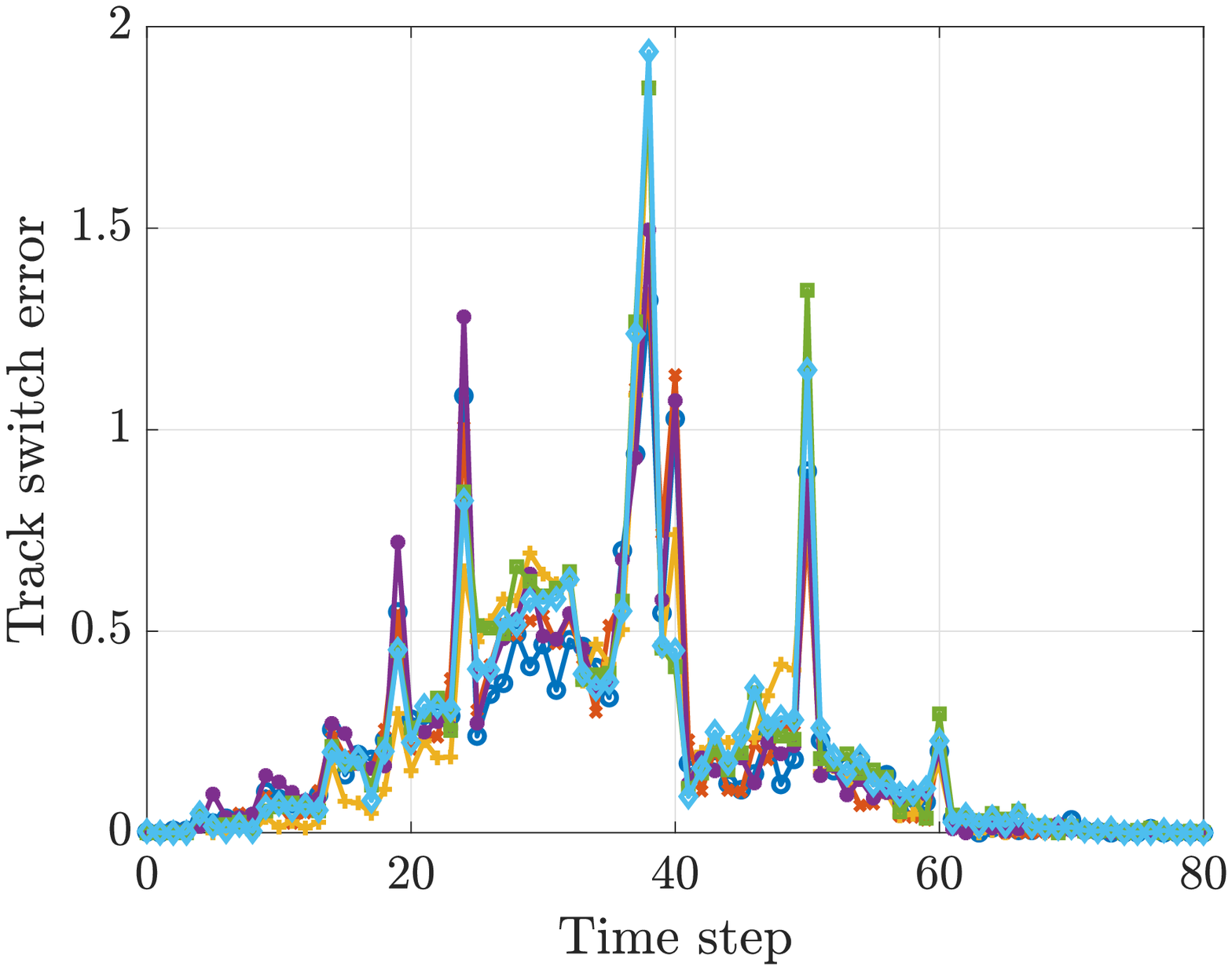}}
  \caption{Decompositions of trajectory metric error versus time for the scenario with a high risk of track coalescence.}
  \label{fig_tra_metric_decomposition}
\end{figure*}

\begin{figure*}[!t]
  \centering
  \subfloat[\label{stat_card}]
  {\includegraphics[width=0.666\columnwidth]{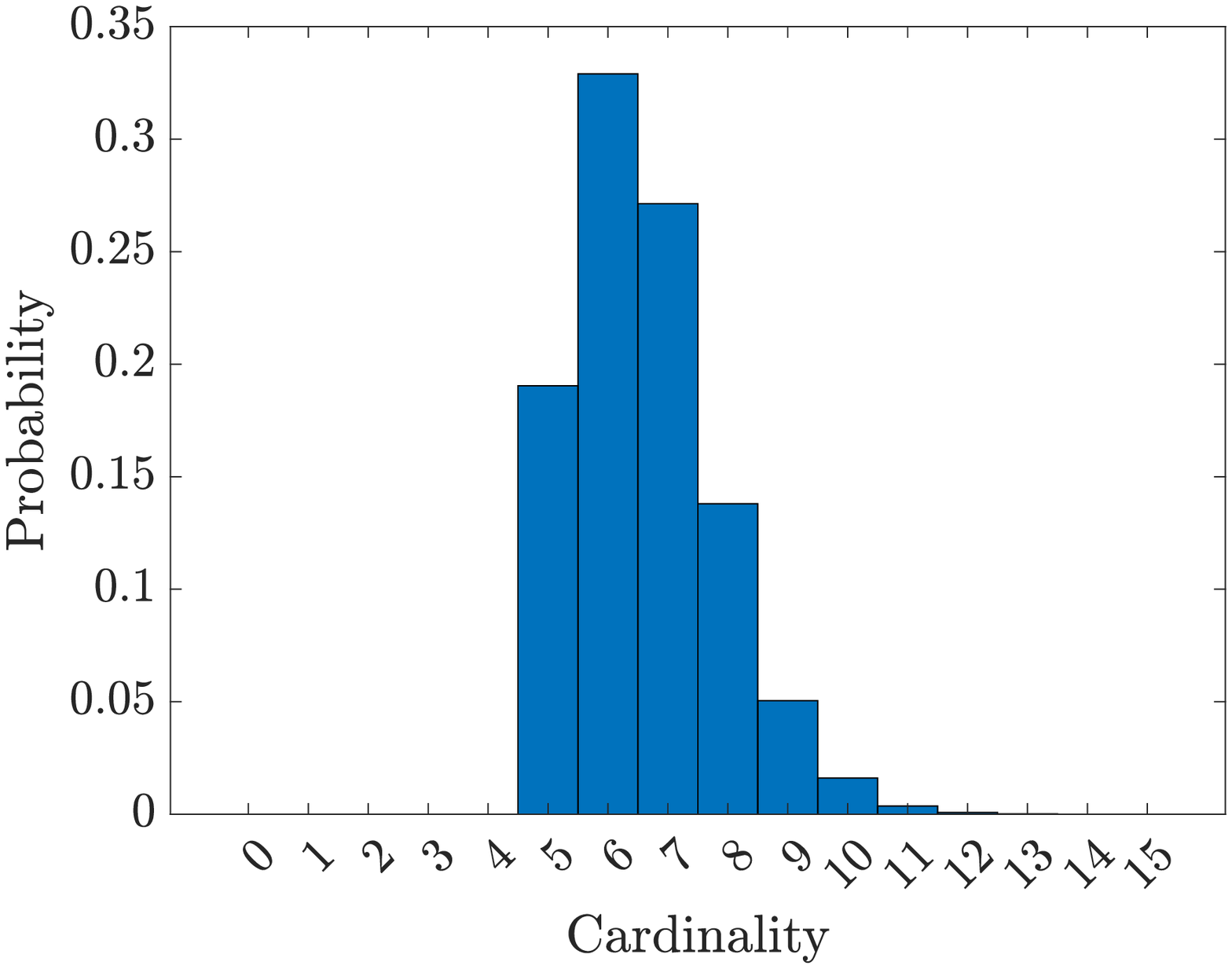}}
  \subfloat[\label{stat_birth}]
  {\includegraphics[width=0.666\columnwidth]{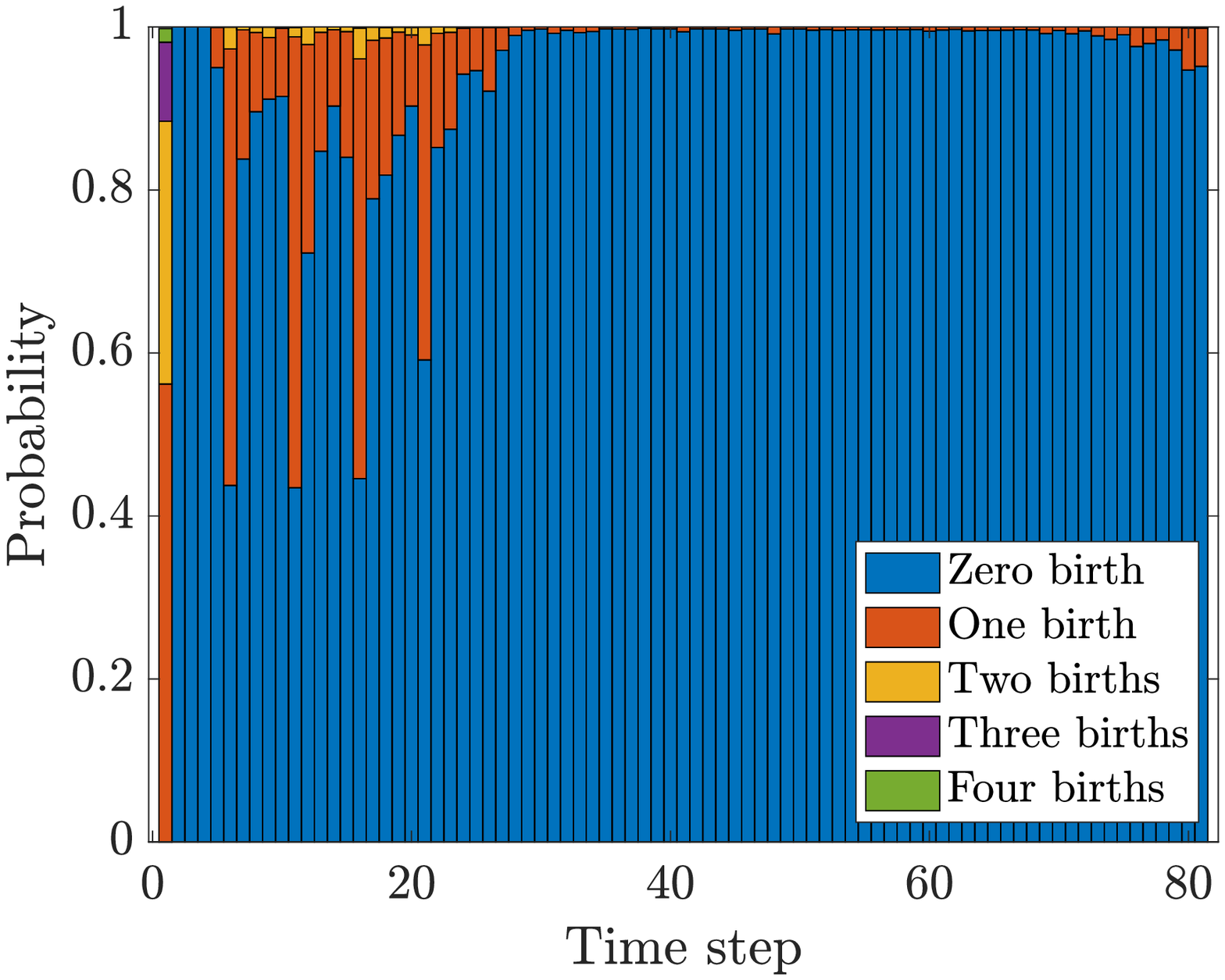}}
  \subfloat[\label{stat_death}]
  {\includegraphics[width=0.666\columnwidth]{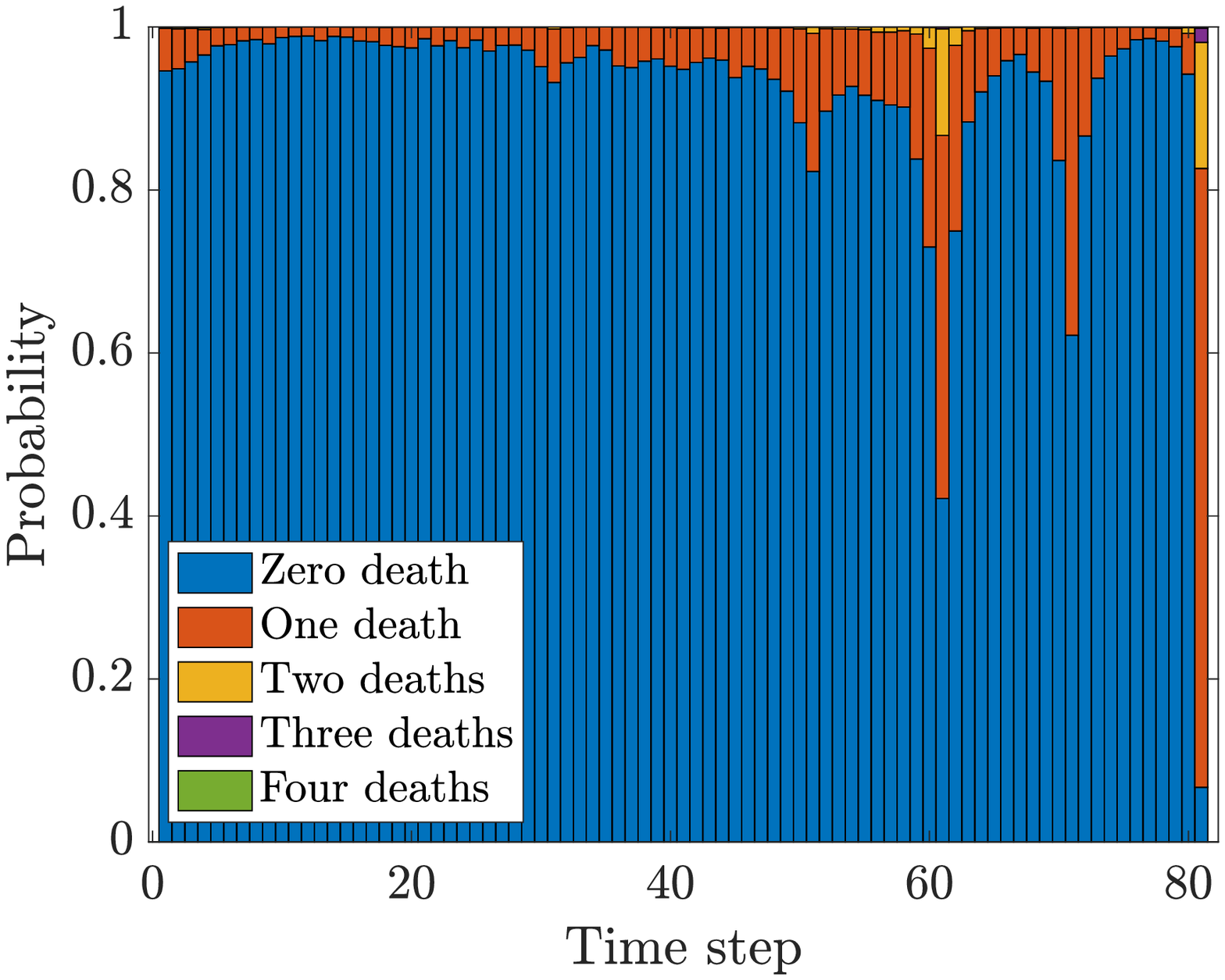}}
  \caption{Some statistics (averaged over 500 Monte Carlo runs) of the particle representation of the multi-trajectory density $\pi_{1:K|K}({\bf X})$ obtained from BS-V-PMB. The computation of these statistics is given in Appendix \ref{statistics}. Subfigure (a) shows the cardinality distribution of the set ${\bf X}$ of trajectories in the time interval $1:K$. Subfigure (b) and (c), respectively, show the cardinality distributions of trajectory birth and death at different time steps where the probability of a specific birth/death event is represented by the length of its corresponding bar.}
  \label{fig_stats}
\end{figure*}

We first analyse how the two multi-trajectory smoothers can improve multi-object estimation performance with respect to the forward filters. For TO-PMB, BS-TO-PMB, V-PMB and BS-V-PMB, the GOSPA error versus time is shown in Fig. \ref{fig_gospa}, and the average GOSPA error and its decomposition into localisation error, missed detection error and false detection error are presented in Table \ref{table_gospa}. It can be seen from Fig. \ref{fig_gospa} that TO-PMB shows the largest estimation error when objects moving in close proximity begin to separate, a problem known as track coalescence commonly observed in JPDAF. As a comparison, V-PMB resolves the coalescence by using a more accurate MB approximation method \cite{variational}. Both BS-TO-PMB and BS-V-PMB outperform their corresponding forward filters in terms of localisation error, missed and false detections by a large margin. Between these two smoothers, BS-V-PMB has better estimation performance than BS-TO-PMB.

We proceed to analyse the trajectory estimation performance of different implementations. The LP trajectory metric error versus time for all the implementations that estimate trajectories is shown in Fig. \ref{fig_tra_metric}, and the numerical values of the average LP trajectory metric error are presented in Table \ref{table_tra_metric}. On the whole, BS-V-PMB has the best estimation performance averaged over different time steps, followed by T-PMBM. In particular, BS-V-PMB has the best performance when objects are in close proximity, whereas T-PMBM has the best performance on initiating and terminating trajectories. In principle, T-PMBM will produce optimal trajectory estimates if it is implemented without approximation, and we apply an optimal estimator (e.g. in the sense of minimising the mean trajectory metric error). However, in practice, we use pruning and a suboptimal estimator. When objects are well-spaced, only hypotheses with negligible weight are pruned, so the performance of T-PMBM is effectively optimal. With closely-spaced objects, there are many feasible hypotheses which cannot be effectively enumerated, and the additional ability of BS-V-PMB to reason over the entire sequence is clearly evident. BS-TO-PMB outperforms T-PMB, with the latter being an efficient approximation of T-PMBM. GLMB is less accurate than the other implementations using Poisson birth model due to their less efficient representation of the multi-object posterior \cite{pmbmpoint2}, even though for the considered scenario where at most one object is born at a time it is more beneficial to use a Bernoulli birth model. M-GLMB has better performance than GLMB by improving the GLMB estimates using a multi-scan Gibbs sampler with batch smoothing.

% It should be noted, however, that BS-TO-PMB and BS-V-PMB are off-line methods and they have much higher computational complexity than T-PMBM and T-PMB. 
The decompositions of the LP trajectory metric error into localisation error, missed detection error, false detection error and track switch error are shown in Fig. \ref{fig_tra_metric_decomposition}. T-PMBM, GLMB and M-GLMB show large missed detection errors when objects are in close proximity. In this case, many global hypotheses in T-PMBM and GLMB can have non-negligible weights due to the high data association uncertainty, and capping the number of global hypotheses may result in larger approximation errors than merging multiple global hypotheses into one. This explains why T-PMBM has worse performance than its approximation T-PMB when objects are in close proximity (before separation). Moreover, T-PMBM has very small false detection error, whereas GLMB has difficulty in terminating trajectories of dead objects. T-PMB has less missed detection error but more false detection error than BS-TO-PMB. The track switch error of all the implementations becomes large when objects are in close proximity, and implementations with Poisson birth model have lower track switch error than the two GLMB implementations.

Fig. \ref{fig_stats} shows the cardinality distributions of the estimated set of trajectories as well as the trajectory birth and death at different time steps, computed using the particle representations of the multi-trajectory density $\pi_{1:K|K}({\bf X})$ in BS-V-PMB. As can be seen, these statistics, in general, well reflect the ground truth except at time step 41 when one object died. Specifically, Fig. \ref{stat_birth} and Fig. \ref{stat_death} show that it is very likely that no object dies at time step 41. 

The average execution times in seconds of a single run\footnote{MATLAB implementation on 3.0 GHz Intel Core i5.} (81
time steps) for $p^D =0.7$, $\lambda^C =30$, $\sigma_q = 0.1$, and $\sigma_r=1$ are: 281.0 (BS-TO-PMB), 209.4 (BS-V-PMB), 171.0 (T-PMBM), 6.5 (T-PMB), 11.1 (GLMB), 8683.7 (M-GLMB). The fastest implementation is T-PMB, followed by GLMB. T-PMBM is slower than GLMB as we do not use the L-scan approximation \cite{garcia2020trajectory}. BS-V-PMB is faster than BS-TO-PMB even though V-PMB is slower than TO-PMB. This is due to the fact that the computational bottleneck of BS-V-PMB and BS-TO-PMB is backward simulation and that it is faster to run backward simulation on the PMB filtering densities obtained using V-PMB for the considered scenario. M-GLMB is significantly slower than the other implementations as it solves an 81-scan data association problem.

% This explains why the track switch error of BS-V-PMB (as well as the other filters/smoothers) in Fig. \ref{tra_switch} peaks at time step 51.

We proceed to analyse the performance of the filters and smoothers with Poisson birth for different scene parameters, and the results are shown in Table \ref{table_scene}. In general, BS-V-PMB has the best trajectory estimation performance, followed by T-PMBM. As expected, if the scenario has lower signal-to-noise ratio, e.g., when the motion noise or clutter intensity increases, performance of all the filters/smoothers decreases. If the scenario has higher signal-to-noise ratio, e.g., measurement noise or clutter intensity decreases, or probability of detection increases, performance of all the filters/smoothers increases. 

\begin{table}[!t]
  \caption{Average LP trajectory metric error for different scene parameters of the scenario with a high risk of track coalescence}
  \label{table_scene}
  \centering
  \begin{tabular}{ccccc}
  \hline
            & BS-TO-PMB & BS-V-PMB & T-PMBM & T-PMB \\ \hline
  No change    & 688.2  & $\mathbf{564.4}$  & 640.5   & 722.3  \\
  $\sigma_q=0.5$ & 1096.4  & $\mathbf{915.9}$  & 926.1   & 1088.2  \\
  $\sigma_r=0.5$ & 460.5  & $\mathbf{374.7}$ & 409.8   & 476.4  \\
  $p^D = 0.8$ & 559.0  & $\mathbf{457.2}$ & 505.9   & 582.6  \\ 
  $\lambda^C = 10$ & 642.2  & $\mathbf{511.7}$  & 565.9   & 668.3  \\
  $\lambda^C = 50$ & 712.1  & $\mathbf{599.5}$   & 693.2   & 752.2  \\\hline
  \end{tabular}
\end{table}

\begin{figure}[!t]
  \centering
  \includegraphics[width=\columnwidth]{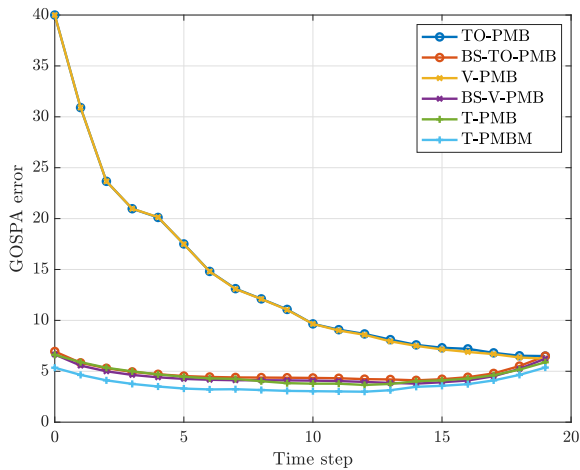}
  \caption{GOSPA error versus time for the scenario with simultaneous object births. The two trajectory filters and the two multi-trajectory particle smoothers significantly outperform the two PMB filters based on sets of objects.}
  \label{fig_gospa2}
\end{figure}

\begin{table}[!t]
  \caption{Average GOSPA error and its decomposition where localisation refers to the localisation error normalised by the estimated cardinality}
  \label{table_gospa2}
  \centering
  \begin{tabular}{ccccc}
  \hline
            & GOSPA & Localisation & Missed & False \\ \hline
  TO-PMB    & 281.7  & 24.1         & 192.1   & 11.6  \\
  BS-TO-PMB & 96.5  & 15.4         & 33.8   & 3.6  \\
  V-PMB     & 280.2  & 24.1         & 191.2   & 11.2  \\
  BS-V-PMB & 90.7  & 15.4         & 29.8   & 1.8  \\ 
  T-PMBM & $\mathbf{74.4}$  & 14.4         & 17.0   & 1.2  \\
  T-PMB & 91.5  & 15.7         & 22.8   & 7.7  \\\hline
  \end{tabular}
\end{table}

To further demonstrate the ability of the proposed multi-trajectory smoothers to infer object birth locations before first detection, we consider another scenario with $20$ time steps where four objects are born at time step 1 and no object dies. Compared to the first scenario, here the probability of detection is $p^D = 0.5$ and the Poisson clutter rate is $\lambda^C = 5$. We compare the performance of different implementations with Poisson birth model. The Poisson birth intensity is parameterised by $N_k^b = 1$, $w_k^{b,1}=0.02$, $x^{b,1}_k = [0,0,0,0]^T$, $P^{b,1}_k = \text{diag}(100,2,100,2)^2$, and the initial Poisson intensity for undetected objects is set to $\lambda^u_{0|0}(x) = 4{\cal N}(x;x^{b,1}_k,P^{b,1}_k)$.

The GOSPA error versus time for the considered scenario is shown in Fig. \ref{fig_gospa2} and its decomposition is presented in Table \ref{table_gospa2}. The results show that T-PMBM has the best estimation performance, and that TO-PMB and V-PMB have the worst estimation performance. The estimation performance of T-PMB, BS-TO-PMB and BS-V-PMB is similar, and it is slightly worse than T-PMBM. Due to the low detection probability, objects were usually detected a few time steps after they were born. This explains the high missed detection error of TO-PMB and V-PMB. The estimation of object states before first detection can be obtained by considering the posterior density on sets of trajectories, which captures all the information about the trajectories, including those of undetected objects. For T-PMBM and T-PMB, this information is explicitly carried over time via forward filtering, whereas for BS-TO-PMB and BS-V-PMB, this information is inferred from filtering densities of detected objects at later time steps and gradually recovered via backward smoothing.

At last, we note that the forward-backward smoothers BS-TO-PMB, BS-V-PMB and the trajectory filters T-PMBM, T-PMB are suitable for different applications. BS-TO-PMB and BS-V-PMB are offline methods and their backward smoothing steps do not utilise any information of the measurements and the multi-object measurement model, and therefore they are more suitable for offline trajectory analytics. T-PMBM and T-PMB are online methods, and they can also be applied to batch problems. The difference is that T-PMBM and T-PMB solve the data associations while filtering, whereas with backward simulation, we are not constrained to the previously solved data associations in the sense that future measurements can be utilised to improve trajectory estimation.

\section{Conclusions}

In this paper, we have derived a multi-trajectory backward smoothing equation based on sets of trajectories. This allows us to leverage filters that do not keep trajectory information to compute the posterior density of sets of trajectories, and has important applications to offline trajectory analytics. In addition, we have proposed a multi-trajectory particle smoother using backward simulation for PMB filtering densities along with its tractable implementation based on ranked assignment. The simulation results show that the proposed methods have superior trajectory estimation performance compared to several state-of-the-art algorithms.

A follow-up work direction is developing an implementation that works for forward densities with particle representation. In addition, it would be interesting to study how to extract better estimates from the particle representation of multi-trajectory densities \eqref{eq_particle}, e.g., by merging different particles in the multi-object trajectory space.

% \cleardoublepage
% \newpage
\bibliographystyle{IEEEtran}
\bibliography{mybibli.bib}

\cleardoublepage
{\bfseries \huge Supplementary Materials}

\appendices
\section{Proof of Theorem \ref{thm_multipredicts}}
\label{proof_thm_multipredicts}

We prove Theorem \ref{thm_multipredicts} by induction\footnote{A direct proof using sets integrals can be found in \cite[Appendix A]{xia2020backward}.}. The base case one-step prediction with $\gamma = \eta+1$ has been proved in \cite{garcia2019multiple}. We proceed to show that if \eqref{eq_multistep1} and \eqref{eq_multistep2} hold, then the $(\gamma+1-\eta)$-step predicted multi-trajectory density of $\pi_{\alpha:\eta|k}(\cdot)$ is also of the form \eqref{eq_multistep1} and \eqref{eq_multistep2}. We define ${\bf W}^{\gamma+1}$ as the set of trajectories born at time step $\gamma+1$. Then the one-step predicted multi-trajectory density of $\pi_{\alpha:\gamma|k}({\bf X}_{\alpha:\gamma})$ is 
\begin{multline}
  \label{eq_proof_1}
  \pi_{\alpha:\gamma+1|k}({\bf X}_{\alpha:\gamma+1}) = \pi_{\alpha:\gamma|k}({\bf X}_{\alpha:\gamma})\pi_{\gamma+1:\gamma+1}({\bf W}^{\gamma+1})\\\times \left. \prod_{ \left( t,x^{1:\nu} \right) \in {\bf X}_{\alpha:\gamma+1}^{\gamma}} \right[ \left( 1+p^S\left(x^\nu\right)\left(\delta_{\gamma-t+2}[\nu]-1\right) \right) \\\times\left. \prod_{\ell=\gamma-t+1}^{\nu-1} g \left( x^{\ell+1} | x^\ell\right) p^S\left(x^\ell\right) \right ].
\end{multline}
We further write ${\bf X}_{\alpha:\gamma+1}^{\gamma}  = {\bf Y} \uplus {\bf V}$ where ${\bf Y} \subseteq {\bf X}_{\alpha:\gamma+1}^{\eta}$ is the set of trajectories present at time step $\eta$ and ${\bf V} \subseteq {\bf W}$ is the set of trajectories that appeared after time step $\eta$ (by assumption $\eta < \gamma$). This allows us to separate the product over ${\bf X}_{\alpha:\gamma+1}^{\gamma}$ in \eqref{eq_proof_1} into two products over ${\bf Y}$ and ${\bf V}$, respectively. By plugging \eqref{eq_multistep1} and \eqref{eq_multistep2} into \eqref{eq_proof_1} and combining these two products with the product over ${\bf X}_{\alpha:\gamma}^{\eta}$ in \eqref{eq_multistep1} and the product over ${\bf W}$ in \eqref{eq_multistep2}, respectively, we obtain
\begin{multline}
  \label{eq_proof_2}
  \pi_{\alpha:\gamma+1|k}({\bf X}_{\alpha:\gamma+1}) = \pi_{\alpha:\eta|k}({\bf X}_{\alpha:\eta})\pi_{\eta+1:\gamma+1}({\bf W}\uplus{\bf W}^{\gamma+1})\\\times \left. \prod_{\left(t,x^{1:\nu}\right)\in{\bf X}_{\alpha:\gamma+1}^{\eta}} \right[ \left( 1+p^S\left(x^\nu\right)\left(\delta_{\gamma-t+2}[\nu]-1\right) \right)\\
  \times \left. \prod_{\ell=\eta-t+1}^{\nu-1} g \left( x^{\ell+1} | x^\ell\right) p^S\left(x^\ell\right) \right ],
\end{multline}
\begin{multline}
  \label{eq_proof_3}
  \pi_{\eta+1:\gamma+1}({\bf W}\uplus{\bf W}^{\gamma+1}) =\prod_{\ell=\eta+1}^{\gamma+1} \beta\left(\tau^\ell({\bf W}^{\ell})\right)\\\times \left. \prod_{\left(t,x^{1:\nu}\right)\in{\bf W}}\right[\left( 1+p^S\left(x^\nu\right)\left(\delta_{\gamma-t+2}[\nu]-1\right) \right)\\
  \times \left. \prod_{\ell=1}^{\nu-1} g \left( x^{\ell+1} | x^\ell\right) p^S\left(x^\ell\right)  \right].
\end{multline}
We can then observe that \eqref{eq_proof_2} has the same form as \eqref{eq_multistep1}, and that \eqref{eq_proof_3} has the same form as \eqref{eq_multistep2}. 

This finishes the proof of Theorem \ref{thm_multipredicts}.

\section{Proof of Corollary \ref{corollary_}}
\label{proof_corollary_}

We observe that the only factor in \eqref{eq_multistep1} that depends on time step $\alpha$ is $\pi_{\alpha:\eta|k}({\bf X}_{\alpha:\eta})$. This means that the quotient 
\begin{equation*}
  \frac{\pi_{\alpha:\gamma|k}({\bf X}_{\alpha:\gamma})}{\pi_{\alpha:\eta|k}({\bf X}_{\alpha:\eta})}
\end{equation*} 
does not depend on time step $\alpha$, and therefore it holds that 
\begin{equation}
  \label{eq_proof_21}
  \frac{\pi_{k:\gamma|k}({\bf X}_{k:\gamma})}{\pi_{k:\eta|k}({\bf X}_{k:\eta})} = \frac{\pi_{k+1:\gamma|k}({\bf X}_{k+1:\gamma})}{\pi_{k+1:\eta|k}({\bf X}_{k+1:\eta})}.
\end{equation}
Setting $\eta = k+1$ in \eqref{eq_proof_21} gives
\begin{equation}
  \frac{\pi_{k:\gamma|k}({\bf X}_{k:\gamma})}{\pi_{k:k+1|k}({\bf X}_{k:k+1})} = \frac{\pi_{k+1:\gamma|k}({\bf X}_{k+1:\gamma})}{f_{k+1|k}(\tau^{k+1}({\bf X}_{k+1:k+1}))},
\end{equation}
which can be rearranged to obtain \eqref{eq_corollary}.

This finishes the proof of Corollary \ref{corollary_}.

\section{Proof of Theorem \ref{thm_smoothing}}
\label{proof_thm_smoothing}

Let ${\bf X}$ and ${\bf Y}$ be the set of trajectories in the time interval $k:K$ and $k+1:K$, respectively. Applying the total probability theorem and Bayes' rule gives
\begin{align}
  \pi_{k:K|K}({\bf X}) &= \int \pi_{k:K|K}({\bf X},{\bf Y})\delta {\bf Y},\nonumber \\
  &= \int \pi_{k+1:K|K}({\bf Y})\pi_{k:K|K}({\bf X}|{\bf Y})\delta {\bf Y}
\end{align}
where $\pi_{k:K|K}({\bf X}|{\bf Y})$ defines the backward transition density from ${\bf Y}$ to ${\bf X}$ conditioned on the sequence of sets of measurements up to and including time step $K$. Assume that ${\bf X}$ is independent of measurements $({\bf z}_{k+1},\dots,{\bf z}_K)$ that are in the future given ${\bf Y}$:
\begin{equation}
  \label{eq_appendix_1}
  \pi_{k:K|K}({\bf X}|{\bf Y}) = \pi_{k:K|k}({\bf X}|{\bf Y}).
\end{equation}
Then we have 
\begin{align}
  \pi_{k:K|K}({\bf X}) &= \int \pi_{k+1:K|K}({\bf Y})\pi_{k:K|k}({\bf X}|{\bf Y})\delta {\bf Y}\nonumber\\
  &= \pi_{k:K|k}({\bf X})\int \pi_{k+1:K|K}({\bf Y})\frac{\pi_{k:K|k}({\bf X}|{\bf Y})}{\pi_{k:K|k}({\bf X})}\delta {\bf Y}.
\end{align}
Bayes' rule then yields
\begin{equation}
  \label{eq_proof_31}
  \pi_{k:K|K}({\bf X}) = \pi_{k:K|k}({\bf X})\int \pi_{k+1:K|K}({\bf Y})\frac{\pi_{k+1:K|k}({\bf Y}|{\bf X})}{\pi_{k+1:K|k}({\bf Y})}\delta {\bf Y}
\end{equation}
where $\pi_{k+1:K|k}({\bf Y}|{\bf X})$ defines the transition density from ${\bf X}$ to ${\bf Y}$, which is a multi-trajectory Dirac delta. The integral over ${\bf Y}$ in \eqref{eq_proof_31} can then be cancelled out by applying the prediction equation for sets of trajectories \cite[Eq. (8)]{garcia2019multiple}, which gives us
\begin{equation}
  \label{eq_proof_32}
  \pi_{k:K|K}({\bf X}) = \frac{\pi_{k:K|k}({\bf X})\pi_{k+1:K|K}({\bf X}_{k+1:K})}{\pi_{k+1:K|k}({\bf X}_{k+1:K})}.
\end{equation}
Applying Corollary \ref{corollary_} on \eqref{eq_proof_32} yields
\begin{equation}
  \label{eq_proof_33}
  \frac{\pi_{k:K|k}({\bf X})}{\pi_{k+1:K|k}({\bf X}_{k+1:K})} = \frac{\pi_{k:k+1|k}({\bf X}_{k:k+1})}{f_{k+1|k}\left(\tau^{k+1}({\bf X}_{k+1:k+1})\right)}.
\end{equation}
The proof is finished by plugging \eqref{eq_proof_33} into \eqref{eq_proof_32}.

\section{Proof of Lemma \ref{lemma_bs}}
\label{proof_lemma_bs}

We first rewrite $\pi_{k:K|K}({\bf X}|{\bf Y})$ using \eqref{eq_appendix_1} and Bayes' rule:
\begin{equation}
  \pi_{k:K|K}({\bf X}|{\bf Y}) = \frac{\pi_{k:K|k}({\bf X})\pi_{k+1:K|k}({\bf Y}|{\bf X})}{\pi_{k+1:K|k}({\bf Y})}
\end{equation}
where $\pi_{k+1:K|k}({\bf Y}|{\bf X})$ defines the transition density from ${\bf X}$ to ${\bf Y}$, which is a multi-trajectory Dirac delta. We then apply Corollary \ref{corollary_}, which gives
\begin{equation}
  \pi_{k:K|K}({\bf X}|{\bf Y}) = \frac{\pi_{k:k+1|k}({\bf X}_{k:k+1})\delta_{\bf Y}({\bf X}_{k+1:K})}{f_{k+1|k}\left(\tau^{k+1}({\bf Y}_{k+1:k+1})\right)}.
\end{equation}
As $f_{k+1|k}\left(\tau^{k+1}({\bf Y}_{k+1:k+1})\right)$ does not depend on ${\bf X}$, we can further express $\pi_{k:K|k}({\bf X}|{\bf Y})$ as \eqref{eq_lemma_bs}.

This finishes the proof of Lemma \ref{lemma_bs}.

\section{Proof of Theorem \ref{thm_pmb}}
\label{proof_thm_pmb}

We first give the explicit expression of the multi-trajectory density $\pi_{k:k+1|k}({\bf X}_{k:k+1})$. Given the PMB filtering density at time step $k$ (cf. \eqref{eq_pmb_whole}) and the multi-trajectory dynamic model described in Section \ref{dynamic_model} with Poisson birth density \eqref{eq_poisson_birth}, the predicted multi-trajectory density $\pi_{k:k+1|k}({\bf X}_{k:k+1})$ is a PMB \cite[Lemma 4]{garcia2020trajectory}
\begin{multline}
  \label{eq_trajectory_pmb}
  \pi_{k:k+1|k}({\bf X}_{k:k+1}) =\\ \sum_{\uplus_{j=1}^{n_{k:k+1|k}}{\bf X}^j\uplus {\bf P} = {\bf X}_{k:k+1}}\pi^p_{k:k+1|k}({\bf P})\prod_{i=1}^{n_{k:k+1|k}}\left[\pi^i_{k:k+1|k}\left({\bf X}^i\right)\right]
\end{multline}
with $n_{k:k+1|k} = n_{k|k}$, where $\pi^p_{k:k+1|k}(\cdot)$ is of the form \eqref{eq_trajectory_ppp} with intensity
\begin{multline}
  \label{eq_trajectory_ppp_proof}
  \lambda^u_{k:k+1|k}\left(t,x^{1:\nu}\right) = \delta_{k+1}[t]\delta_{1}[\nu]\lambda^B_{k+1}\left(x^1\right) \\+ \left\langle \lambda^u_{k|k}\left(y^1\right)\delta_{k}[t^\prime]\delta_{1}[\nu^\prime],g^{k+1}\left(t,x^{1:\nu}|t^\prime,y^{1:\nu^\prime}\right) \right\rangle,
\end{multline}
and $\pi^i_{k:k+1|k}({\bf X}^i)$ is of the form \eqref{eq_Bernoulli}, parameterised by
\begin{align}
  r^i_{k:k+1|k} &= r^i_{k|k},\label{eq_proof_e_ber1}\\
  p^i_{k:k+1|k}(X) &= \left\langle p^i_{k|k}\left(y^1\right)\delta_{k}[t^\prime]\delta_{1}[\nu^\prime],g^{k+1}\left(X|t^\prime,y^{1:\nu^\prime}\right) \right\rangle.\label{eq_proof_e_ber2}
\end{align}
where the single trajectory transition density $g^{k+1}(\cdot|\cdot)$ is given by \eqref{eq_single_trajectory_transition}.

Next, we elaborate on why the multi-trajectory Dirac delta $\delta_{{\bf Y}}({\bf X}_{k+1:K})$ can be seen as a standard multi-object measurement model \cite{mahler2007statistical} with characteristics specified in Section \ref{sec_pmbm_recursion}. We note that the multi-trajectory Dirac delta $\delta_{{\bf Y}}({\bf X}_{k+1:K})$ can be written as a trajectory MB where each trajectory Bernoulli component is parameterised by probability of existence one and a Dirac delta single-trajectory density, i.e.,
\begin{equation}
  \label{eq_tra_dirac_delta}
  \delta_{{\bf Y}}({\bf X}_{k+1:K}) = \sum_{\uplus_{j=1}^{n_{k+1:K}}{\bf X}^j = {\bf X}_{k+1:K}} \prod_{i=1}^{n_{k+1:K}}\delta_{\{Y^i\}}\left({\bf X}^i\right)
\end{equation}
where for trajectories that did not exist in the time interval $k+1:K$, they are implicitly represented by trajectory Bernoulli components with zero probability of existence. Therefore, the multi-trajectory Dirac delta $\delta_{{\bf Y}}({\bf X}_{k+1:K})$ can be understood as a standard measurement model for sets of trajectories \cite{garcia2020trajectory}: 
\begin{itemize}
  \item Each trajectory $X = (t,x^{1:\nu}) \in {\bf X}_{k:K}$ is detected with probability 
  \begin{equation}
    p^D(X) = \begin{cases}
      0, & t = k~\text{and}~\nu = 1\\
      1, & \text{otherwise}
    \end{cases}
  \end{equation}
  and if detected, it generates a measurement $Y$ with density $\delta_{Y}(X_{k+1:K})$.
  \item Poisson clutter intensity $\lambda^C(\cdot) = 0$.
\end{itemize}
Note that if a trajectory $X$ in the time interval $k:K$ did not exist in the time interval $k+1:K$, then it must have start time $t=k$ and length $\nu=1$.

Having established the analogy between the backward kernel \eqref{eq_lemma_bs} for PMB filtering densities and the trajectory PMB update, the explicit expression of the trajectory PMBM backward kernel \eqref{eq_pmbm} can be straightforward obtained by plugging the trajectory PMB prior \eqref{eq_trajectory_pmb} and measurement model \eqref{eq_tra_dirac_delta} into the trajectory PMB update equations \cite[Lemma 5]{garcia2020trajectory}. One thing to be noted is that the set of trajectories appeared after time step $k+1$ remains unaltered. This will be further elaborated.

We write ${\bf X} = {\bf X}^\prime \uplus {\bf D}^\prime$ as the disjoint union of the set ${\bf X}^\prime$ of trajectories present at time step $k$ or $k+1$ and the set ${\bf D}^\prime$ of trajectories of objects that appeared after time step $k+1$. We also write ${\bf Y} = {\bf Y}^\prime \uplus {\bf D}$ as the disjoint union of the set ${\bf Y}^\prime$ of trajectories of objects that existed at time step $k+1$ and the set ${\bf D}$ of trajectories of objects that appeared after time step $k+1$. 
It holds that ${\bf X}_{k+1:K}= {\bf X}_{k+1:K}^\prime\uplus {\bf D}^\prime$ by construction, and that the density of ${\bf X}_{k+1:K}$ conditioned on ${\bf Y}$ is zero unless ${\bf D} = {\bf D}^\prime$. Then we have
\begin{equation}
  \label{eq_proof_e1}
  \delta_{{\bf Y}^\prime \uplus {\bf D}}\left({\bf X}^\prime_{k+1:K} \uplus {\bf D}^\prime\right) = \delta_{{\bf Y}^\prime}\left({\bf X}_{k+1:K}^\prime\right)\delta_{{\bf D}}\left({\bf D}^\prime\right),
\end{equation}
and the backward kernel \eqref{eq_lemma_bs} becomes
\begin{multline}
  \label{eq_proof_e2}
  \pi_{k:K|K}\left({\bf X}^\prime \uplus {\bf D}^\prime|{\bf Y}^\prime \uplus {\bf D}\right) \\ \propto \pi_{k:k+1|k}\left({\bf X}^\prime_{k:k+1}\right)\delta_{{\bf Y}^\prime}\left({\bf X}_{k+1:K}^\prime\right)\delta_{{\bf D}}\left({\bf D}^\prime\right)
\end{multline}
where $\delta_{{\bf D}}\left({\bf D}^\prime\right)$ can be written as a trajectory MB similar to \eqref{eq_tra_dirac_delta}. This explains the last part of Theorem \ref{thm_pmb} and finishes the proof of Theorem \ref{thm_pmb}.

\section{}
\label{statistics}
For the particle representation of the multi-trajectory density \eqref{eq_particle}, the cardinality distribution of the set ${\bf X}_{\alpha:\gamma}$ of trajectories is given by
\begin{equation}
  \text{Pr}(|{\bf X}_{\alpha:\gamma}|=n) = \sum_{i=1}^{T}w^{(i)}\delta_n\left[\left|{\bf X}^{(i)}\right|\right],
\end{equation}
the cardinality distribution of the trajectories born at time step $k$ with $\alpha \leq k \leq \gamma$ is given by
\begin{equation}
  \text{Pr}(n~\text{births at time}~k) = \sum_{i=1}^{T}w^{(i)}\delta_{n}\left[\sum_{(t,x^{1:\nu})\in{\bf X}^{(i)}}\delta_k[t]\right],
\end{equation}
and the cardinality distribution of the trajectories that die at time step $k$ with $\alpha \leq k \leq \gamma-1$ is given by  
\begin{align}
  &\text{Pr}(n~\text{deaths at time}~k)\nonumber\\
  &~~~= \sum_{i=1}^{T}w^{(i)}\delta_{n}\left[\sum_{(t,x^{1:\nu})\in{\bf X}^{(i)}}\delta_k[t+\nu-1]\right].
\end{align}

\section{}
\label{pseudo_code}
The pseudocode of linear-Gaussian backward simulation for sets of trajectories with PMB filtering densities along with an efficient estimator is given in Algorithm \ref{alg1}. The proposed estimator reports the particle with the highest likelihood accumulated over time from $T$ particles (sets of trajectories) with equal weight $1/T$.

\begin{algorithm}[!t]
  \footnotesize
  \caption{Backward simulation for sets of trajectories with PMB filtering densities}
    \label{alg1}
    \begin{algorithmic}[1]
      \REQUIRE $T$, $\left\{\lambda^B_k(\cdot), \lambda^u_{k|k}(\cdot)\right\}_{k=1}^{K-1}$, $\left\{\left\{r^j_{k|k},p^j_{k|k}(\cdot)\right\}_{j=1}^{n_{k|k}}\right\}_{k=1}^K$.
      \ENSURE ${\bf X}_{1:K}$
      \FOR{$\iota = 1,\dots,T$}
      \STATE ${\bf y}_{K|K} = \emptyset$; 
      \STATE $c_\iota = 0$;
      \FOR{$i = 1,\dots,n_{K|K}$}
      \STATE $u \sim \text{Uniform}[0,1]$;
      \IF{$u \leq r_{K|K}^i$}
      \STATE ${\bf y}_{K|K} = {\bf y}_{K|K} \cup \left\{x^i_{K|K}\right\}$; 
      % \STATE $c_\iota = c_\iota + \log(r^i_{K|K})$;
      % \ELSE 
      % \STATE $c_\iota = c_\iota + \log(1-r^i_{K|K})$;
      \ENDIF
      \ENDFOR
      \STATE ${\bf Y}_{K:K} = \left\{Y = \left(1,y^1\right): y^1 \in {\bf y}_{K|K}\right\}$
      \FOR{$k = K-1,\dots,1$}
      \STATE Separate ${\bf Y}_{k+1:K}$ into $\left\{Y^j\right\}_{j=1}^m$ and $\left\{Y^j\right\}_{j=m+1}^{n_{k+1:K}}$ as described in Theorem \ref{thm_pmb};
      \FOR{$i = 1,\dots,n_{k|k}$}
      \FOR{$j = 1,\dots,m$}
      \IF{$\text{SMD}(y_j^1;x_{k|k}^i,P_{k|k}^i) < \Gamma_g$}
      \STATE Compute $W_1^{(j,i)}$ using \eqref{eq_weight_matrix_entry};
      \ELSE
      \STATE $W_1^{(j,i)} = 0$;
      \ENDIF
      \ENDFOR
      \ENDFOR
      \STATE Compute $W_2$ \eqref{eq_weight_matrix_entry2};
      \STATE $C = -\log \begin{bmatrix}
        W_1 & W_2
      \end{bmatrix}$;
      \STATE Run Murty's algorithm on $C$ to obtain the $M$ best global hypotheses with highest weight $a^*_{k:K|K} = \{a_1,\dots,a_M\}$;
      \STATE Compute $[\hat{w}_{a_1},\dots,\hat{w}_{a_M}]$ using \eqref{eq_glo_hyo_weight} and normalise them to obtain $[w_{a_1},\dots,w_{a_M}]$;
      \STATE $a \sim \text{Categorical}([w_{a_1},\dots,w_{a_M}])$; 
      \STATE $c_\iota = c_\iota + \log(\hat{w}_a)$;
      \ENDFOR
      \STATE ${\bf Y}_{k:K} = \left\{Y^j\right\}_{j=m+1}^{n_{k+1:K}}$;
      \FOR{$i = 1,\dots,n_{k|k}$}
      \IF{$a^i = 1$}
      \STATE $u \sim \text{Uniform}[0,1]$;
      \IF{$u \leq r^{i,1}_{k:K|K}$ \eqref{eq_mis_r}}
      \STATE Sample $Y \sim p^{i,1}_{k:K|K}(\cdot)$ using \eqref{eq_mis};
      \STATE ${\bf Y}_{k:K} = {\bf Y}_{k:K} \cup \{Y\}$;
      % \STATE $c_\iota = c_\iota + \log(r^{i,1}_{k:K|K})$;
      % \ELSE 
      % \STATE $c_\iota = c_\iota + \log(1-r^{i,1}_{k:K|K})$;
      \ENDIF
      \ELSE
      \STATE Sample $Y \sim p^{i,a^i}_{k:K|K}(\cdot)$ using \eqref{eq_Gaussian_cov2};
      \STATE ${\bf Y}_{k:K} = {\bf Y}_{k:K} \cup \{Y\}$;
      \ENDIF
      \ENDFOR
      \FOR{$i = n_{k|k}+1,\dots,n_{k|k} + m$}
      \IF{$a^i = 2$}
      \STATE Sample $Y \sim p^{i,2}_{k:K|K}(\cdot)$ using \eqref{eq_lemma_ppp};
      \STATE ${\bf Y}_{k:K} = {\bf Y}_{k:K} \cup \{Y\}$;
      % \IF{start time of $Y$ is $k+1$}
      % \STATE $c_\iota = c_\iota + \log(\underline{w}^{i,2}_{k:K|K})$ using \eqref{eq_ww1};
      % \ELSE
      % \STATE $c_\iota = c_\iota + \log(\overline{w}^{i,2}_{k:K|K})$ using \eqref{eq_ww2};
      % \ENDIF
      \ENDIF
      \ENDFOR
      \STATE ${\bf X}^{(\iota)}_{1:K} = {\bf Y}_{1:K}$;
      \ENDFOR
      \STATE $\iota^* = \arg\max_{\iota}c_\iota$; 
      \STATE ${\bf X}_{1:K} = {\bf X}^{(\iota^*)}_{1:K}$;
    \end{algorithmic}
  \end{algorithm}

\section{}
\label{appendix_example_expression}

In this appendix, we derive the explicit expression of multi-trajectory smoothing density $\pi_{K-1:K|K}({\bf X}_{K-1:K})$ for the one-dimensional example in Section \ref{sec_example}. 

We write ${\bf X}_{K-1:K} = {\bf Y} \uplus {\bf V} \uplus {\bf B}$ where the set ${\bf Y}$ of trajectories present at both time step $K-1$ and $K$, the set ${\bf V}$ of trajectories only present at time step $K-1$, and the set ${\bf B}$ of trajectories only present at time step $K$. We also write the multi-object filtering density $f_{k|k}({\bf x})$ at time step $k$ as $\delta_{{\bf y}_k}({\bf x})$ where ${\bf y }_k = \{y_k^1,y_k^2\}$ and $k\in\{1,\dots,K\}$. The predicted multi-trajectory density of $f_{K-1|K-1}(\cdot)$ for each of the events can then be expressed using Theorem \ref{thm_multipredicts}:
\begin{itemize}
  \item The two objects existed at both time step $K-1$ and $K$,
  \begin{align*}
    &\pi_{K-1:K|K-1}({\bf Y}\uplus{\bf B})\\
    &= \delta_{{\bf y}_{K-1}}\left(\tau^{K-1}({\bf Y})\right)e^{-0.1}\prod_{\left(K,x^1\right)\in{\bf B}}\lambda^B\left(x^1\right)\\
    &~~~\times\left(p^S\right)^2\prod_{\left(K-1,x^{1:2}\right)\in{\bf Y}}{\cal N}\left(x^2;Fx^1,Q\right).
  \end{align*}
  \item Object with state $y_{K-1}^1$ existed at time step $K$ and object with state $y_{K-1}^2$ died at time step $K-1$,
  \begin{align*}
    &\pi_{K-1:K|K-1}({\bf Y}\uplus{\bf V}\uplus{\bf B})\\ 
    &= \delta_{{\bf y}_{K-1}}\left(\tau^{K-1}({\bf Y}\uplus{\bf V})\right)e^{-0.1}\prod_{\left(K,x^1\right)\in{\bf B}}\lambda^B\left(x^1\right)\\
    &~~~\times\left(1-p^S\right)p^S\prod_{\left(K-1,x^{1:2}\right)\in{\bf Y}}{\cal N}\left(x^2;Fy_{K-1}^1,Q\right).
  \end{align*}
  \item Object with state $y_{K-1}^2$ existed at time step $K$ and object with state $y_{K-1}^1$ died at time step $K-1$,
  \begin{align*}
    &\pi_{K-1:K|K-1}({\bf Y}\uplus{\bf V}\uplus{\bf B})\\ 
    &= \delta_{{\bf y}_{K-1}}\left(\tau^{K-1}({\bf Y}\uplus{\bf V})\right)e^{-0.1}\prod_{\left(K,x^1\right)\in{\bf B}}\lambda^B\left(x^1\right)\\
    &~~~\times\left(1-p^S\right)p^S\prod_{\left(K-1,x^{1:2}\right)\in{\bf Y}}{\cal N}\left(x^2;Fy_{K-1}^2,Q\right).
  \end{align*}
  \item The two objects died at time step $K-1$,
  \begin{multline*}
    \pi_{K-1:K|K-1}({\bf V}\uplus{\bf B})
    =\\ \delta_{{\bf y}_{K-1}}\left(\tau^{K-1}({\bf V})\right)e^{-0.1}\prod_{\left(K,x^1\right)\in{\bf B}}\lambda^B\left(x^1\right)\left(1-p^S\right)^2.
  \end{multline*}
\end{itemize}
When taking the product of the two multi-trajectory densities $\pi_{K-1:K|K}({\bf X}_{K-1:K})$ and $\pi_{K:K|K}({\bf X}_{K|K})$, a multi-trajectory Dirac delta with two components, each of the first three events has two different cases due to the unknown correspondence between $\tau^{K-1}({\bf Y})$ and ${\bf y}_K$. Therefore, the multi-trajectory density $\pi_{K-1:K|K}({\bf X}_{K-1:K})$ has a mixture representation of $7$ components.

\ifCLASSOPTIONcaptionsoff
  \newpage
\fi

\end{document}